\renewcommand{\epsilon}{\varepsilon}
\renewcommand{\phi}{\varphi}
\newcommand{\np}{{\sc NP}}
\newcommand{\ptime}{{\sc Ptime}}
\newcommand{\exptime}{{\sc Exptime}}
\newcommand{\twoexptime}{{\sc 2Exptime}}
\newcommand{\pspace}{{\sc Pspace}}
\newcommand{\expspace}{{\sc Expspace}}
\newcommand{\sem}[1]{\llbracket #1 \rrbracket}
\newtheorem{example}{Example}
\newtheorem{theorem}{Theorem} 
\newtheorem{proposition}{Proposition}
\newtheorem{definition}{Definition}
\newtheorem{lemma}{Lemma}
\newcommand{\beforeverbatim}{\vspace*{-4pt}}
\newcommand{\afterverbatim}{\vspace*{-7pt}}
\newcommand{\nodetests}{\textit{NodeTests}}
\newcommand{\bx}{\raisebox{0pt}{$\mathlarger{{\Box}}$}}
\newcommand{\dm}{\raisebox{-2pt}{$\mathlarger{\mathlarger{\mathlarger{\Diamond}}}\!$}}
\newcommand{\isobject}{\Obj}
\newcommand{\isarray}{\Arr}
\newcommand{\isstring}{\Str}
\newcommand{\isnumber}{\Int}
\newcommand{\minim}{\textit{Min}}
\newcommand{\maxim}{\textit{Max}}
\newcommand{\minch}{\textit{MinCh}}
\newcommand{\maxch}{\textit{MaxCh}}
\newcommand{\uniq}{\textit{Unique}}
\newcommand{\patternlogic}{\textit{Pattern}}
\newcommand{\mof}{\textit{MultOf}}
\newcommand{\compare}{\sim\!\!}
\newcommand{\type}{\texttt{"type":}}
\newcommand{\obj}{\texttt{"object"}}
\newcommand{\str}{\texttt{"string"}}
\newcommand{\num}{\texttt{"number"}}
\newcommand{\arr}{\texttt{"array"}}
\newcommand{\required}{\texttt{"required":}}
\newcommand{\addProp}{\texttt{"additionalProperties":}}
\newcommand{\patProp}{\texttt{"patternProperties":}}
\newcommand{\prop}{\texttt{"properties":}}
\newcommand{\items}{\texttt{"items":}}
\newcommand{\additionalItems}{\texttt{"additionalItems":}}
\newcommand{\uniqueItems}{\texttt{"uniqueItems":}}
\newcommand{\true}{\text{true}}
\newcommand{\pattern}{\texttt{"pattern":}}
\newcommand{\regexp}{\texttt{"}\textit{regexp}\texttt{"}}
\newcommand{\maxi}{\texttt{"maximum":}}
\newcommand{\mini}{\texttt{"minimum":}}
\newcommand{\multipleOf}{\texttt{"multipleOf":}}
\newcommand{\unfold}{\textit{unfold}}
\title{JSON: data model, query languages and \\schema specification}
\author{\alignauthor
Pierre Bourhis  \\
\affaddr{CNRS CRIStAL UMR9189}\\
\and
\alignauthor
Juan L. Reutter \\
\affaddr{PUC Chile and Center for Semantic Web Research}\\
\and
\alignauthor
Fernando Su\'arez\\
\affaddr{PUC Chile and Center for Semantic Web Research}\\
\and
\alignauthor
Domagoj Vrgo\v{c} \\
\affaddr{PUC Chile and Center for Semantic Web Research}\\
}
\begin{document}
\maketitle

\sloppy

\begin{abstract}
Despite the fact that JSON is currently one of the most popular formats for exchanging data on the Web, there are very few studies on this topic and there are no agreement upon theoretical framework for dealing with JSON. Therefore in this paper we propose a formal data model for JSON documents and, based on the common features present in available systems using JSON, we define a lightweight query language allowing us to navigate through JSON documents. We also introduce a logic capturing the schema proposal for JSON and study the complexity of basic computational tasks associated with these two formalisms.
\end{abstract}

\section{Introduction}\label{sec:intro}

JavaScript Object Notation (JSON) \cite{json-ietf,json-ecma} is a lightweight format based on the data types of the JavaScript programming language. In their essence, JSON documents are dictionaries consisting of key-value pairs, where the value can again be a JSON document, thus allowing an arbitrary level of nesting. An example of a JSON document is given in Figure \ref{fig-json}. As we can see here, apart from simple dictionaries, JSON also supports arrays and atomic types such as integers and strings. Arrays and dictionaries can again contain arbitrary JSON documents, thus making the format fully compositional.

\begin{figure}[h]
\small
\begin{verbatim}
   { 
     "name": {
        "first": "John",
        "last": "Doe" 
        },
     "age": 32,
     "hobbies": ["fishing","yoga"]
   }
\end{verbatim}
\vspace*{-10pt}
\caption{A simple JSON document.}
\label{fig-json}
\end{figure}

Due to its simplicity, and the fact that it is easily readable both by humans and by machines, JSON is quickly becoming one of the most popular formats for exchanging data on the Web. This is particularly evident with Web services communicating with their users through an Application Programming Interface (API), as JSON is currently the predominant format for sending API requests and responses over the HTTP protocol. Additionally, JSON format is much used in database systems built around the NoSQL paradigm (see e.g. \cite{mongoDB,Arango,Orient}), or graph databases (see e.g. \cite{neo4j}).

Despite its popularity, the coverage of the specifics of JSON format in the research literature is very sparse, and to the best of our knowledge, there is still no agreement on the correct data model for JSON, no formalisation of the core query features which JSON systems should support, nor a logical foundation for JSON Schema specification. And while some preliminary studies do exist \cite{nurseitov2009comparison,LiuHM14,BCCRX16,PRSUV16}, as far as we are aware, no attempt to describe a theoretical basis for JSON has been made by the research community. Therefore, the main objective of this paper is to formally define an appropriate data model for JSON, identify the key querying features provided by the existing JSON systems, and to propose a logic allowing us to specify schema constraints for JSON documents.

In order to define the data model, we examine the key characteristics of JSON documents and how are they used in practice. As a result we obtain a tree-shaped structure very similar to the ordered data-tree model of XML \cite{bojanczyk2009two}, but with some key differences. 
The first difference is that JSON trees are deterministic by design, as each key can appear at most once inside a dictionary. This has various implications at the time of querying JSON documents: on one hand we sometimes deal with languages far simpler than XML, but on the other hand this key 
restriction can make static analysis more complicated, even for simpler queries. 
Next, arrays are explicitly present in JSON, which is not the case in XML. Of course, the ordered structure of XML could be used to simulate arrays, but the defining feature of each JSON dictionary is that it is unordered, thus dictating the nodes of our tree to be typed accordingly. 
And finally, JSON values are again JSON objects, thus making equality comparisons much more complex than in case of XML, since we are now comparing subtrees, and not just atomic values.
We cover all of these features of JSON in more detail in the paper, and  we also argue that, while technically possible (albeit, in a very awkward manner), coding JSON documents using XML might not be the best solution in practice. 

Next, we consider the problem of querying JSON. As there is no agreement upon query language in place, we examine an array of practical JSON systems, ranging from programming languages such as Python \cite{python}, to fully operational JSON databases such as MongoDB \cite{mongoDB}, and isolate what we consider to be key concepts for accessing JSON documents. As we will see, the main focus in many systems is on navigating the structure of a JSON tree, therefore we propose a navigational logic for JSON documents based on similar approaches from the realm of XML \cite{Figueira10}, or graph databases \cite{BPR12,LMV16}. 
We then show how our logic captures common use cases for JSON, extend it with additional features, and demonstrate that it respects the ``lightweight nature" of the JSON format, since it can be evaluated very efficiently, and it also has reasonable complexity of main static tasks. Interestingly, 
sometimes we can reuse results devised for other similar languages such as XPath or Propositional Dynamic Logic, but the nature of JSON and the functionalities present in query languages also demand new approaches or a refinement of these techniques. 

Another important aspect of working with any data format is being able to specify the structure of documents. A usual way to do this is through schema specification, and in the case of JSON, there is indeed a draft proposal for a schema language \cite{jsonschema} (called JSON Schema), which has recently been formalised in \cite{PRSUV16}. Based on the formalisation of \cite{PRSUV16} we define a logic capturing the full formal specification of JSON Schema, show that it is essentially equivalent to the navigational language we propose for querying JSON, and study the complexity of its evaluation and static tasks. However, once we take into account the recursive functionalities of JSON Schema, we arrive at a powerful new formalism that is much more difficult to encompass inside well known frameworks. 

Finally, since theoretical study of JSON is still in its early stages, we close with a series of open problems and directions for future research.

\smallskip
\noindent{\bf Organisation.} We formally define JSON and some of its features in Section \ref{sec:prelim}. The appropriate data model for JSON is discussed in Section \ref{sec:model}, and its query language in Section \ref{sec:nav}. In Section \ref{sec:schema} we define a logic capturing a schema specification for JSON. Our conclusions and the directions for future work are discussed in Section \ref{sec:future}. 
Due to the lack of space most proofs are placed in the appendix to this paper.

\section{Preliminaries}\label{sec:prelim}

\noindent{\bf JSON documents.} 
We start by fixing some notation regarding JSON documents. 
The full JSON specification defines seven types of values: objects, arrays, strings, numbers and 
the values true, false and null \cite{json}. However, to abstract from encoding details we assume 
our JSON documents are only formed by objects, arrays, strings and natural numbers. 

Formally, denote by $\Sigma$ the set of all unicode characters. JSON values are defined as follows. 
 First, any natural number $n \geq 0$ is a JSON value, called a \emph{number}. 
Furthermore, if \texttt{s} is a string in $\Sigma^*$, 
then \texttt{"s"} is a JSON value, called a \emph{string}. 
Next, if $v_1,\ldots,v_n$ are JSON values and $s_1,\ldots,s_n$ are pairwise distinct string values, then 
	$\{s_1:v_1,\ldots,s_n:v_n\}$ is a JSON value, called an \emph{object}. 
	In this case, each $s_i:v_i$ is called a key-value pair of this object.
Finally, if $v_1,\ldots,v_n$ are JSON values then $[v_1,\ldots,v_n]$ is a JSON value called an 
{\em array}. In this case $v_1,\ldots,v_n$ are called the \emph{elements} of the array. 
Note that in the case of arrays and objects the values $v_i$ can again be objects or arrays, thus allowing the documents an arbitrary level of nesting.


\smallskip
\noindent{\bf JSON navigation instructions.}
Arguably all systems using JSON base the extraction of information in what we call \emph{JSON navigation instructions}. 
The notation used to specify JSON navigation instructions varies from system to system, but it always follows the same 
two principles: 
\begin{itemize}\itemsep=0pt
\item If ${J}$ is a JSON object, then one should be able to access the JSON value in a specific key-value pair of this object. 
\item If ${J}$ is a JSON array, then one should be able to access the $i$-th element of ${J}$. 
\end{itemize}

In this paper we adopt the python notation for navigation instructions: If ${J}$ is an object, then 
${J[\textit{key}]}$ is the value of ${J}$ whose key is the string value \texttt{"key"}. Likewise, if ${J}$ is an array, then 
${J[n]}$, for a natural number $n$, contains the n-th element of ${J}$\footnote{Some JSON systems 
prefer using a \emph{dot} notation, where ${J.\text{key}}$ and $\text{J.n}$ are the equivalents of ${J[\text{key}]}$ and 
${J[n]}$.}. 

As far as we are aware, all JSON systems use JSON navigation instructions as a primitive for querying JSON documents, and in particular 
it is so for the systems we have reviewed in detail: Python and other programming languages \cite{python}, the mongoDB database \cite{mongoDB}, 
the JSON Path query language \cite{jpath} and the SQL++ project that tries to bridge relational and JSON databases \cite{sqlpp}. 

At this point it is important to note a few important consequences of using JSON navigation instructions, as 
these have an important role at the time of formalising this framework. 

First, note that we do not have a way of obtaining the keys of JSON objects. For example, if $\texttt{J}$ is the object 
\texttt{\{"first":"John", "last":"Doe"\}} we could issue the instructions ${J[\textit{first}]}$ to obtain the value of the pair 
\texttt{"first":"John"}, which is the string value \texttt{"John"}. Or use ${J[\textit{last}]}$ to obtain the value of the pair \texttt{"last":"Doe"}. However, 
 there is no instruction 
that can retrieve the keys inside this document (i.e. \texttt{"first"} and \texttt{"last"} in this case). 

Similarly, for the case of the arrays, the access is essentially a \emph{random access}: we can access the $i$-th element of an array, 
and most of the time there are primitives to access the first or the last element of arrays. However, we can not reason about different elements of the array. For example, for the array $K = \texttt{[12,5,22]}$ we can not retrieve, say, 
an element (or any element) which is greater than the first element of $K$. 
Some systems do feature FLWR expressions that support iterating over all elements. 
But this iteration is itself treated as a series of random accesses, using commands such as \texttt{For i in (0,n) print(J[i])}. 

\section{Data model for JSON}\label{sec:model}
\newcommand{\Achild}{\mathcal A} 
\newcommand{\Ochild}{\mathcal O} 
\newcommand{\values}{\textit{val}} 
\newcommand{\Arr}{\textit{Arr}}
\newcommand{\Obj}{\textit{Obj}}
\newcommand{\Str}{\textit{Str}}
\newcommand{\Int}{\textit{Int}}
\newcommand{\naturals}{\mathbb{N}}
\newcommand{\json}{\textit{json}}
\newcommand{\child}{X}
\newcommand{\EQ}{\textit{EQ}}

In this section we propose a formal data model for JSON documents whose goal is to closely reflect the manner in which JSON is manipulated using JSON navigation instructions, 
and that will be used later on as the basis of our formalisation of JSON query and schema languages.  
We begin by introducing our formal model, called JSON trees. 
 Afterwards we discuss the main differences between JSON trees and other well-studied tree formalisms such as data trees or XML. 



\subsection{JSON trees}\label{ss-json_trees}

JSON objects 
are by definition compositional: each JSON object is a set of key-value pairs, in which values can again be JSON objects. 
This naturally suggests using a tree-shaped structure to model JSON documents. 
However, this structure must preserve the compositional nature of JSON. That is, if each node of the tree structure represents a JSON document, 
then the children of each node must represent the documents nested within it. 
For instance, consider the following JSON document $J$.

\beforeverbatim
{\footnotesize
\begin{verbatim}
   { 
     "name": {
        "first": "John",
        "last": "Doe" 
        },
     "age": 32
   }
\end{verbatim}
}
\afterverbatim

\noindent as explained before, this document is a JSON {\em object} which contains two keys: \texttt{"name"} and \texttt{"age"}. Furthermore, the value of the key \texttt{"name"} is another JSON document 
and the value of the key \texttt{"age"} is the integer 32. There are in total 5 JSON values inside this object: the complete document itself, plus the literals \texttt{32}, \texttt{"John"} and \texttt{"Doe"}, and the object \texttt{"name": \{"first":"John", "last":"Doe"\}}. 
So how should a tree representation of the document $J$ look like? If we are to preserve the compositional structure of JSON, then the most natural 
representation is by using the following edge-labelled tree:

\begin{center}
\begin{tikzpicture}[>=stealth]
\filldraw [black] (0,0) circle (2pt)
(-1,-1.5) circle (2pt)
(1,-1.5) circle (2pt)
(-1.8,-3) circle (2pt)
(-0.2,-3) circle (2pt)
;


\path (-0.1,-0.1) edge[->, thick] node[pos=0.4,left=-2pt] {\small \texttt{"name"}} (-0.9,-1.4);
\path (0.1,-0.1) edge[->, thick] node[pos=0.4,right=-2pt] {\small \texttt{"age"}}(0.9,-1.4);
\path (-1.1,-1.6) edge[->, thick] node[pos=0.4,left=-2pt] {\small \texttt{"first"}} (-1.75,-2.9);
\path (-0.9,-1.6) edge[->, thick] node[pos=0.4,right=-2pt] {\small \texttt{"last"}} (-0.25,-2.9);


\draw (-1.8,-3) node[below=2pt] {\small \texttt{"John"}};
\draw (-0.2,-3) node[below=2pt] {\small \texttt{"Doe"}};
\draw (1,-1.5) node[below=2pt] {\small \texttt{32}};

\end{tikzpicture}
\end{center}

The root of tree represents the entire document. The two edges labelled \texttt{"name"} and \texttt{"age"} represent two keys inside this JSON object, and they lead to 
nodes representing their respective values. In the case of the key \texttt{"age"} this is just an integer, while in the case of \texttt{"name"} we obtain another JSON object that is represented as a subtree of the entire tree. 

Finally, we need to enforce the property of JSON that no object can have two keys with the same name, thus making the model deterministic in some sense, since each node will have only one child reachable by an edge with a specific label. 
Let us briefly summarise the properties of our model so far. 


\smallskip
\noindent
\emph{Labelled edges}. Edges in our model are labelled by the keys forming the key-value pairs of objects. This means that we can directly follow 
the label of edges when issuing JSON navigation instructions, and also means that information of keys is represented in a different 
medium than JSON values (labels for the former, nodes for the latter). 
This is inline with the way JSON navigation instructions work, as one can only retrieve values of key-value pairs, but not the keys themselves. 
%
To comply with the JSON standard, we disallow trees where a same edge label is repeated in two different edges leaving a node. 

\smallskip
\noindent
\emph{Compositional structure.} One of the advantages of our tree representation is that any of its subtrees represent a JSON document themselves. In fact, the five possible subtrees of the tree above correspond to the five JSON values present in the JSON $J$. 

\smallskip
\noindent
\emph{Atomic values.} Finally, some elements of a JSON document are actual values, such as integers or strings. For this reason leaf nodes corresponding to integers and strings will also be assigned a value they carry. Leaf nodes without a value represent empty objects: that is, documents of the form \texttt{\{\}}.

\smallskip

Although this model is simple and conceptually clear, we are missing a way of representing arrays. 
Indeed, consider again the document from Figure \ref{fig-json} (call this document $J_2$).
%
%
%
In $J_2$ the value of the key \texttt{"hobbies"} is an array: another feature explicitly present in JSON that thus needs to be reflected in our model. 

As arrays are ordered, this might suggest that we can have some nodes whose children form an ordered list of siblings, much like in the case of XML. 
But this would not be conceptually correct, for the following two reasons. 
First, as we have explained, JSON navigation instructions use random access to access elements in arrays. 
For example, the navigation instruction used to retrieve an element of an array is of the form ${J_2[\textit{hobbies}][i]}$, aimed at obtaining the $i$-th element of the array 
under the key \texttt{"hobbies"}. 
But more importantly, we do not want to treat arrays as a list because lists naturally suggest navigating through different elements of the list. On the contrary, 
none of the systems we reviewed feature a way of navigating form one element of the array to another element. That is,  
once we retrieve the first element of the array under the key \texttt{"hobbies"}, we have no way of linking it to its siblings.

We choose to model JSON arrays as nodes whose children are accessed by axes labelled with natural numbers reflecting their position in the array. 
Namely, in the case of JSON document $J_2$ above we obtain the following representation:

\begin{center}
\begin{tikzpicture}[>=stealth]
\filldraw [black] (0,0) circle (2pt)
(-2,-1.5) circle (2pt)
(0,-1.5) circle (2pt)
(-2.8,-3) circle (2pt)
(-1.2,-3) circle (2pt)

(2,-1.5) circle (2pt)
(1.2,-3) circle (2pt)
(2.8,-3) circle (2pt)
;


\path (-0.1,-0.1) edge[->, thick] node[pos=0.4,left=-2pt] {\small \texttt{"name"}} (-1.9,-1.4);
\path (0,-0.1) edge[->, thick] node[pos=0.6,right=-2pt] {\small \texttt{"age"}}(0,-1.4);
\path (-2.1,-1.6) edge[->, thick] node[pos=0.4,left=-2pt] {\small \texttt{"first"}} (-2.75,-2.9);
\path (-1.9,-1.6) edge[->, thick] node[pos=0.4,right=-2pt] {\small \texttt{"last"}} (-1.25,-2.9);

\path (0.1,-0.1) edge[->, thick] node[pos=0.4,right=-2pt] {\small \texttt{"hobbies"}}(1.9,-1.4);
\path (1.9,-1.6) edge[->, thick] node[pos=0.4,left=-2pt] {\small \texttt{1}}(1.25,-2.9);
\path (2.1,-1.6) edge[->, thick] node[pos=0.4,right=-2pt] {\small \texttt{2}}(2.75,-2.9);


\draw (-2.8,-3) node[below=2pt] {\small \texttt{"John"}};
\draw (-0.8,-3) node[below=2pt] {\small \texttt{"Doe"}};
\draw (0,-1.5) node[below=2pt] {\small \texttt{32}};

\draw (1.2,-3) node[below=2pt] {\small \texttt{"fishing"}};
\draw (2.8,-3) node[below=2pt] {\small \texttt{"yoga"}};

\end{tikzpicture}
\end{center}

Having arrays defined in this way allows us still to treat the child edges of our tree as navigational axes: before we used a key such as \texttt{"age"} to traverse an edge, and now we use the number labelling the edge to traverse it and arrive at the child. 


\smallskip
\noindent
\textbf{Formal definition}. 
As our model is a tree, we will use tree domains as its base. A {\em tree domain} is a prefix-closed subset of $\naturals^*$. Without loss of generality we assume that for 
all tree domains $D$, if $D$ contains a node $n \cdot i$, for $n \in \naturals^*$ then $D$ contains all $n \cdot j$ with $0 \leq j < i$.

Let $\Sigma$ be an alphabet. A {\bf JSON tree} over $\Sigma$ is a structure 
$J = (D,\Obj,\Arr,\Str,\Int,\Achild,\Ochild,\values)$, where 
D is a tree domain that is partitioned by $\Obj$, $\Arr$, $\Str$ and $\Int$, 
$\Ochild\subseteq \Obj \times \Sigma^* \times D$ is the object-child relation, 
$\Achild\subseteq \Arr \times \naturals \times D$ is the array-child relation, 
$\values: \Str \cup \Int \rightarrow \Sigma^* \cup \naturals$ is the string and number \emph{value} function, and 
where the following holds: 

\begin{itemize}\itemsep=0pt
\item[1] For each node $n \in \Obj$ and child $n \cdot i$ of $n$, $\Ochild$ contains one triple $(n,w,n\cdot i)$, for a word $w \in \Sigma^*$.
\item[2] The first two components of $\Ochild$ form a \emph{key}: if $(n,w,n\cdot i)$ and $(n,w,n\cdot j)$ 
are in $\Ochild$, then $i = j$. 
\item[3] For each node $n \in \Arr$ and child $n \cdot i$ of $n$, $\Achild$ contains the triple $(n,i,n\cdot i)$. 
\item[4] If $n$ is in $\Str$ or $\Int$ then $D$ cannot contain nodes of form $n \cdot u$. 
\item[5] The value function assigns to each string node in $\Str$ a value in $\Sigma^*$ and to each number node in $\Int$ a natural number,
\end{itemize}

The usage of a tree domain is standard, and we have elected to explicitly partition the domain into four types of nodes: 
$\Obj$ for objects, $\Arr$ for arrays, $\Str$ for strings and $\Int$ for integers. 
The first and second conditions specify that edges between objects and their children are labelled with words, but 
we can only use each label one time per each node. The third condition specifies that the edges between arrays and their children are labelled with the number representing the order of children. 
The fourth condition simply states that strings and numbers must be leaves in our trees, and the fifth condition describes 
the value function $\values$. 
Note that we have explicitly distinguished the four type of JSON documents (objects, arrays, strings and integers). 
This is important when modelling schema definitions for JSON, as we shall show later. 


Throughout this paper we will use the term JSON tree and JSON interchangeably. 
As already mentioned above, one important feature of our model is that when looking at any node of the tree, a subtree rooted at this node is again a valid JSON. We can therefore define, for a JSON tree $J$ and a node $n$ in $J$, a function $\json(n)$ which returns the subtree of $J$ rooted at $n$. Since this subtree is again a JSON tree, the value of  $\json(n)$ is always a valid JSON. 

\subsection{JSON and XML}

Before continuing we give a few remarks about differences and similarities between JSON and XML, and how are these reflected in their underlying data models. We start by summarising the differences between the two formats.
%
\begin{enumerate}\itemsep=0pt
\item {\em JSON mixes ordered and unordered data}. JSON objects are completely without order, but for arrays we can do 
random access depending on their position.
On the other hand, 
XML enforces a strict order between the children of each node. Coding JSON as XML would imply permitting sibling traversal 
for some nodes, but disallowing it for others. We can do that with XML with some ad-hoc rules, but this is precisely what we do in our model in a much cleaner way.  
\item {\em JSON Arrays are neither lists nor sets}. As we have explained, we have random access, but we do not have the possibility 
of sibling traversal. Enforcing this in languages such as XPath is a very cumbersome task. 
\item {\em JSON trees are deterministic.} The property of JSON tree which imposes that all  keys of each object have to  be distinct makes JSON trees deterministic in the sense that if we have the key name, there can be at most one node reachable through an edge labelled with this key. On the other hand, XML trees are nondeterministic since there are no labels on the edges, and a node can have multiple children. As we will see, the deterministic nature of JSON can make some problems more difficult than in the XML setting.
\item {\em Value is not just in the node, but is the entire subtree rooted at that node.} Another fundamental difference is that in XML when we talk about values we normally refer to the value of an attribute in a node. On the contrary, it is common for systems using JSON to allow comparisons of the full subtree of a node with a nested JSON document, or even comparing two nodes themselves in terms of their subtrees. To be fair, in XML one could also argue this to be true, but unlike in the case of XML, these ``structural" comparisons are intrinsic in most JSON languages, as we discuss in the following sections.
\end{enumerate}

On the other hand, it is certainly possible to code JSON documents using the XML data format. 
In fact, the model of ordered unranked trees with labels and attributes, which serves as the base of XML, 
was shown to be powerful enough to code some very expressive database formats, such as relational and even graph data. 
However, both models have enough differences to justify a study of JSON on its own. This is particularly evident when considering navigation through JSON documents, where keys in each object have to be unique, thus allowing us to obtain values very efficiently. On the other hand, coding JSON as XML would require us to have keys as node labels, thus forcing a scan of all of the node's children in order to retrieve the value.

\section{Navigational queries over JSON}\label{sec:nav}

As JSON navigation instructions are too basic to serve as a complete query language,  
most systems have developed different ways of querying JSON documents. 
Unfortunately, there is 
no standard, nor general guidelines, about how documents are accessed. As a result the syntax and operations 
between systems vary so much that it would be almost impossible to compare them. 
Hence, it would be desirable to identify a common core of functionalities shared between these systems, or at least a general picture of how such query languages look like. Therefore we begin this section by reviewing the most common operations available in current JSON systems.

Here we mainly focus on the subdocument selecting functionalities of JSON query languages. By subdocument selecting we mean 
functionalities that are capable of finding or highlighting specific parts within JSON documents, either to be returned immediately or to 
be combined as new JSON documents. 
As our work is not intended to be a survey, we have not reviewed all possible systems available to date. 
However, we take inspiration from MongoDB's query language (which arguably has served as a basis for 
many other systems as well, see e.g. \cite{Arango,Orient,CouchDB}), and JSONPath \cite{jpath}and SQL++ \cite{sqlpp}, two other query languages that have been proposed by the community.

Based on this, we propose a navigational logic that can serve as a common core to define a standard way of 
querying JSON. We then define several extensions of this logic, such as allowing nondeterminism or recursion, and study how these affect basic reasoning task such as evaluation and containment.
\subsection{Accessing documents in JSON databases}

Here we briefly describe how JSON systems query documents. 

\smallskip
\noindent
\textbf{Query languages inspired by FLWR or relational expressions}. 
There are several proposals to construct query languages that can merge, join and even produce new JSON documents. 
Most of them are inspired either by XQuery (such as JSONiq \cite{jsoniq}) or SQL (such as SQL++ \cite{sqlpp}). These languages have of course 
a lot of intricate features, and to our best extent have not been formally studied. However, in terms of JSON navigation thy all seem to support basic JSON navigation instructions and not much more. 

\smallskip
\noindent
\textbf{MongoDB's find function}.  
The basic querying mechanism of MongoDB is given by the \emph{find} function \cite{mongoDB}, therefore we focus on this aspect of the system\footnote{For a detailed study of other functionalities MongoDB offers see e.g. \cite{BCCRX16}. Note that this work does not consider the find function though.}. 
The find function receives two parameters, which are both JSON documents. Given a collection of JSON document and these parameters, the find function then
produces an array of JSON documents.

The first parameter of the find function serves as a \emph{filter}, and its goal is to select some of the JSON documents from the input. The second parameter is the \emph{projection}, and as its
name suggests, is used to specify which parts of the filtered documents are to be returned. Since our goal is specifying a navigational logic, we will only focus 
on the filter parameter, and on find queries that only specify the filter. We return to the projection in Section \ref{sec:future}. For more details we refer the reader to the current version of the documentation \cite{mongoDB}.
 

The basic building block of filters are what we call \emph{navigation condition}, which can be visualised as expressions of the form $P \sim J$, where 
$P$ is a JSON navigation instruction, $\sim$ is a comparison operator (MongoDB allows all the usual $<$, $\leq$, $=$ , $\geq$, $>$, and several others operators) and 
$J$ is a JSON document. 

\begin{example} Assume that we are dealing with a collection of JSON files containing information about people and that we want to obtain the one describing a person named Sue. In MongoDB this can be achieved using the following query \verb+db.collection.find({name: {$eq: "Sue"}},{})+. The initial part \verb+db.collection+ is a system path to find the collection of JSON documents we want to query. Next, \verb+"name"+ is a simple navigation instruction 
used to retrieve the value under the key \verb+"name"+. Last, the expression \verb+{$eq: "Sue"}+ is used to state that the JSON document retrieved 
by the navigation instruction is equal to the JSON \verb+"Sue"+. Since we are not dealing with projection, the second parameter is simply the empty document \verb+{}+. Using the notation above we could also write this navigation condition as $J[\textit{name}] = $\verb+"Sue"+.
\end{example}


%

Finally, navigation conditions can be combined using boolean operations with the standard meaning. Also note that filters always return entire documents. If we want a part of a JSON file we need to use filters.
  
\smallskip
\noindent
\textbf{Query languages inspired by XPath or relational expressions}. 
The languages we analysed thus far offer very simple navigational features. However, people also recognized the need to allow more complex properties such as nondetermnistic navigation, expression filters and allowing arbitrary depth nesting through recursion. As a result, an adaptation of the XML query language XPath to the context of JSON, called JSONPath \cite{jpath} was introduced and implemented (see e.g. \url{https://github.com/jayway/JsonPath}).

\smallskip

Based on these features, we first introduce a logic capturing basic queries provided by navigation instructions and conditions, and then extend it with non-determinism and recursion resulting in a logic  resembling similar approaches over XML.
%
%

\subsection{Deterministic JSON logic}

The first logic we introduce is meant to capture JSON navigation instructions and other deterministic forms of querying such as 
MongoDB's find function. We call this logic \emph{JSON navigation logic}, or JNL for short. 
We believe that this logic, although not very powerful, is interesting in its own right, as it leads to very lightweight 
algorithms and implementations, which is one of the aims of the JSON data format. 

As often done in XML\cite{Figueira10} and graph data\cite{LMV16}, we define ours in terms of unary and binary formulas. 
\begin{definition}[JSON navigational logic]
Unary formulas $\varphi, \psi$  
and binary formulas $\alpha, \beta$ of the \emph{JSON navigational logic} are expressions satisfying the grammar
\begin{equation*}
\def\arraystretch{1.2}
\begin{array}{lll}
\alpha,\beta  & :=  & \langle \phi \rangle \ |\ \child_{w}  \ |\ \child_{i}  \ |\ \alpha \circ \beta \ |\ \epsilon \\
\varphi,\psi & :=  & \top  \ |\ \lnot \varphi\ 
|\ \varphi \land \psi\ |\ \phi\vee \psi\ |\ [\alpha]\ |\\
 & & \hspace*{70pt} \EQ(\alpha,A)\ |\ \EQ(\alpha,\beta)
 \end{array}
\label{detjl-syntax}
\end{equation*}
where $w$ is a word in $\Sigma^*$, $i$ is a natural number and $A$ is an arbitrary JSON document. 
\end{definition}

Intuitively, binary operators allow us to move through the document (they connect two nodes of a JSON tree), and unary formulas check whether a property is true at some point of our tree. For instance, $\child_{w}$ and $\child_{i}$ allow basic navigation by accessing the the value of the key named $w$, or the $i$th element of an array respectively. They can subsequently be combined using composition or boolean operations to form more complex navigation expressions. Unary formulas serve as tests if some property holds at the part of the document we are currently reading. These also include the operator $[\alpha]$ allowing us to test if some binary condition is true starting at a current node (similarly, $\langle \varphi \rangle$ allows us to combine node tests with navigation). Finally, the comparison operators  $\EQ(\alpha,A)$ and $\EQ(\alpha,\beta)$ simulate XPath style tests which check whether a current node can reach a node whose value is $A$, or if two paths can reach nodes with the same value. The difference from XML though, is that this value is again a JSON document and thus a subtree of the original tree.

The semantics of binary formulas is given by the relation 
$\sem{\alpha}_{J}$, for a binary formula $\alpha$ and a JSON $J$, and it selects pairs of nodes of $J$:  
\begin{itemize}
\item $\sem{\langle\phi\rangle}_{J} = \{(n,n) \mid n \in \sem{\phi}_J\}$.
\item $\sem{\child_w}_{J} = \{(n,n') \mid (n,w,n')\in \Ochild\}$.
\item $\sem{\child_{i}}_{J} = \{(n,n') \mid (n,i,n') \in \Achild \}$, for $i \in \naturals$.
\item $\sem{\alpha \circ \beta}_{J} = \sem{\alpha}_{J} \circ \sem{\beta}_{J}$.
\item $\sem{\epsilon}_{J} = \{(n,n) \mid n$ is a node in $J\}$.
\end{itemize}
For the semantic of the unary operators, let us assume that $D$ is the domain of $J$. 
\begin{itemize}
\item $\sem{\top}_J = D$. 
\item $\sem{\neg \varphi}_J = D - \sem{\phi}_J$. 
\item $\sem{\varphi \wedge \psi}_J = \sem{\phi}_J \cap \sem{\psi}_J$. 
\item $\sem{\varphi \vee \psi}_J = \sem{\phi}_J \cup \sem{\psi}_J$. 
\item $\sem{[\alpha]}_J = \{n \mid n \in D$ and there is a node $n'$ in $D$ such that $(n,n') \in \sem{\alpha}_{J}\}$
\item $\sem{\EQ(\alpha,A)}_J = \{n \mid n \in D$ and there is a node $n_1$ in $T$ such that $(n,n_1) \in \sem{\alpha}_{J}$ and $\json(n_1) = A\}$ 
\item $\sem{\EQ(\alpha,\beta)}_J = \{n \mid n \in D$ and there are nodes $n_1,n_2$ in $T$ such that $(n,n_1) \in \sem{\alpha}_{J}$, 
$(n,n_2) \in \sem{\beta}_{J}$, and $\json(n_1) = \json(n_2)\}$. 
\end{itemize}

Typically, most systems allow jumping to the last element of an array, or the the $j$-th element counting from the last to the first. 
To simulate this we can allow binary expressions of the form $\child_{i}$, for an integer $i <0$, where $-1$ states the last position of the array, and 
$-j$ states the $j$-th position starting from the last to the first. Having this dual operator would not change any of our results, but 
we prefer to leave it out for the sake of readability. 

\smallskip
\noindent
\textbf{Algorithmic properties of JNL}. As promised, here we show that
JNL is a logic  particularly well behaved for database applications. For this we study the evaluation problem and satisfiability problem associated with JNL. 
The \textsc{Evaluation} problem asks, on input a JSON $J$, a JNL unary expression $\phi$ and a node 
$n$ of $J$, whether $n$ is in $\sem{\phi}_J$. The \textsc{Satisfiability} problem asks, on input a JNL expression 
$\phi$ , whether there exists a JSON $J$ such that $\sem{\phi}_J$ is nonempty. 
We start with evaluation, showing that JNL indeed matches the ``lightweight" spirit of the JSON format and can be evaluated very efficiently:

\begin{proposition}
\label{prop-JNL-eval}
The \textsc{Evaluation} problem for JNL 
can be solved in time $O(|J|\cdot|\phi|)$.
\end{proposition}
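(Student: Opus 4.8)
The plan is a single bottom-up pass over the parse tree of $\phi$ that, for every subformula, tabulates its denotation in $J$ in $O(|J|)$ space; since $\phi$ has $O(|\phi|)$ subformulas it then suffices to show that each table is computable in time $O(|J|)$ from the tables of its children, after an $O(|J|)$ preprocessing of $J$. For a unary subformula $\psi$ the table is just the characteristic vector of $\sem{\psi}_J\subseteq D$. For the binary subformulas the key observation — the one that makes the whole procedure linear rather than quadratic in $|J|$ — is that the grammar for $\alpha,\beta$ has neither union nor transitive closure, so a straightforward induction shows that $\sem{\alpha}_J$ is always a \emph{partial function} on $D$: $\child_w$ and $\child_i$ are partial functions by the determinism conditions~2 and~3 in the definition of JSON trees, $\langle\phi\rangle$ and $\epsilon$ are restrictions of the identity, and partial functions are closed under composition. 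Hence the table for a binary $\alpha$ is simply an array $f_\alpha\colon D\to D\cup\{\bot\}$, of size $O(|J|)$.

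With this representation every operator except the equality tests is immediate in time $O(|J|)$: the Boolean connectives are pointwise; $\epsilon$ is the identity array and $\langle\phi\rangle$ its restriction to the already-computed set $\sem{\phi}_J$; $[\alpha]=\{n: f_\alpha(n)\ne\bot\}$; composition is $f_{\alpha\circ\beta}(n)=f_\beta(f_\alpha(n))$; and $\child_w$ (resp.\ $\child_i$) is obtained by one scan of $\Ochild$ (resp.\ $\Achild$) comparing each edge label with $w$ (resp.\ with $i$) — since the labels being compared are part of the encoding of $J$, the total cost of all comparisons in one such scan is $O(|J|)$. Summed over all $O(|\phi|)$ subformulas this already accounts for $O(|J|\cdot|\phi|)$ time.

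The only delicate point — and the step I expect to be the real obstacle — is the two comparison operators, which turn on deciding equalities of whole subtrees, $\json(n_1)=\json(n_2)$ and $\json(n_1)=A$; done naively this costs up to $\Theta(|J|)$ per node, hence $\Theta(|J|^2)$ per subformula. To kill this I would precompute once, in time $O(|J|)$, a canonical identifier $\mathrm{id}\colon D\to\naturals$ with $\mathrm{id}(n)=\mathrm{id}(n')$ iff $\json(n)=\json(n')$, by the standard linear-time subtree collapsing (DAG-compression): traverse $J$ bottom-up and give each node a signature — its value for string/number nodes, the key-sorted list of pairs $(w,\mathrm{id}(n\cdot i))$ for object nodes, the list $(\mathrm{id}(n\cdot 0),\mathrm{id}(n\cdot 1),\dots)$ for array nodes — then bucket-sort all signatures so that equal ones receive a common integer. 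Then $\sem{\EQ(\alpha,\beta)}_J=\{n: f_\alpha(n)\ne\bot\ne f_\beta(n),\ \mathrm{id}(f_\alpha(n))=\mathrm{id}(f_\beta(n))\}$ is read off in $O(|J|)$. For $\EQ(\alpha,A)$ I would run the same signature computation on $A$, looking up sub-signatures in the table built from $J$ and minting a fresh unused identifier as soon as a sub-document of $A$ fails to occur in $J$; in $O(|A|)$ time this yields the unique identifier $k_A$ a node must carry for its subtree to equal $A$, after which $\sem{\EQ(\alpha,A)}_J=\{n: f_\alpha(n)\ne\bot,\ \mathrm{id}(f_\alpha(n))=k_A\}$ is again computed in $O(|J|)$. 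Since the documents $A$ occurring in distinct subformulas are disjoint pieces of $\phi$, their sizes sum to at most $|\phi|$, so all $\EQ$ steps together still cost $O(|J|\cdot|\phi|)$. Altogether: $O(|J|)$ preprocessing, then $O(|J|)$ per subformula over $O(|\phi|)$ subformulas, then an $O(1)$ lookup of whether $n\in\sem{\phi}_J$ — total $O(|J|\cdot|\phi|)$.
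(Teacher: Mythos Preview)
Your proof is correct and, in my view, cleaner than the paper's, but it follows a genuinely different route. The paper translates $\phi$ into a non-recursive monadic datalog program with stratified negation whose rule bodies are ``tree queries'' (connected, tree-shaped conjunctive queries over the navigational relations), and its central lemma is that every tree query has a \emph{unique} matching valuation from any given root node, computable in time $O(|\phi|\cdot|J|)$. That uniqueness is morally the same observation you make---determinism of JSON trees forces every binary $\alpha$ to denote a partial function---but the paper phrases it in database terms and then grounds the datalog program rule by rule. Where the two arguments really diverge is the treatment of $\EQ$: the paper explicitly refuses to precompute subtree equality (it dismisses preprocessing as ``only giv[ing] us a quadratic algorithm,'' apparently having in mind the naive all-pairs check) and instead performs each subtree comparison \emph{online} as the unique valuation pins down exactly which two nodes must be compared. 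You instead do a single $O(|J|)$ DAG-compression pass to assign canonical subtree identifiers, after which every $\EQ$ test becomes an integer comparison. Your route is more elementary and makes the linear bound transparent; the paper's datalog detour is heavier machinery but has the advantage of plugging into a general framework that also covers the extensions in later sections. One small caveat worth making explicit in your write-up: the linear-time DAG compression for JSON trees needs a word about unordered object nodes with string keys---you should first assign integer identifiers to all keys (via a trie or radix sort over the total key length, which is $O(|J|)$), so that each object's signature becomes a sortable tuple of small integers and the Aho--Hopcroft--Ullman bucket-sort argument goes through.
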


For this result, we can reuse techniques for XPath evaluation (see e.g. \cite{Parys09,GottlobKP05}). However, the 
presence of the $\EQ(\alpha,\beta)$ operator forces us to refine these techniques in a non-trivial way.
A straightforward way of incorporating this predicate into XPath algorithms is 
to pre-process all pairs of nodes to see which pairs have equal subtrees, but this only gives us a quadratic algorithm. 
Instead, we transform our JNL formula into an equivalent non recursive monadic datalog program with stratified negation \cite{Gottlob2006},
and show how to evaluate the latter by doing equality comparisons ``online" as they appear.

Next, we move to satisfiability, showing that the complexity of the problem is best possible, considering that JNL can emulate propositional formulas.

\begin{proposition}
\label{prop-JNL-sat}
The \textsc{Satisfiability} problem for JNL is \np-complete. It is \np-hard even for formulas not using negation nor the equality operator.  
\end{proposition}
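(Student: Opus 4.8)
The plan is to show membership in \np{} through a polynomial small‑model property (after which one simply guesses a model and a node and verifies with Proposition~\ref{prop-JNL-eval}), and \np‑hardness by a reduction from 3‑SAT that uses neither negation nor the equality operator. For the upper bound I would first argue (assuming w.l.o.g.\ that $\phi$ is a unary formula, as $\sem{\phi}_J\neq\emptyset$ iff $\sem{[\phi]}_J\neq\emptyset$ when $\phi$ is binary) that every satisfiable $\phi$ has a model $J$ with $|J|$ polynomial in $|\phi|$ and all string/number values of size polynomial in $|\phi|$; membership in \np{} then follows by guessing $J$ together with a node $n$ and checking $n\in\sem{\phi}_J$ in time $O(|J|\cdot|\phi|)$. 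The small‑model property is obtained by \emph{pruning} an arbitrary model $J_0$ with $n_0\in\sem{\phi}_{J_0}$. The crucial structural fact is that binary formulas denote \emph{partial functions}: the grammar for $\alpha,\beta$ has neither union nor transitive closure, so for each node $m$ there is at most one $m'$ with $(m,m')\in\sem{\alpha}_{J_0}$, and computing it amounts to following a single path of length at most $|\alpha|$, interleaved with evaluating the guards $\langle\psi\rangle$ of $\alpha$ at fixed nodes of that path. One then processes ``obligations'' of the form ``$\psi$ holds (resp.\ fails) at node $m$'', starting from ``$\phi$ holds at $n_0$'': for a boolean connective recurse on the one or two immediate subformulas at $m$; for an $[\alpha]$ or $\EQ(\alpha,\cdot)$ that must hold, keep the unique length‑$\le|\alpha|$ path witnessing it and recurse on each guard subformula at its node; for $\EQ(\alpha,A)$ that must hold also graft a copy of $A$ (of size $\le|\phi|$) at the endpoint; for an $\EQ$‑atom or $[\alpha]$ that must fail, keep the path up to the point where it ``dies'' in $J_0$ (a missing child, a failing guard, or a subtree that mismatches $A$) together with a bounded witness of that failure. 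Since the obligations recurse along the syntactic structure of $\phi$, each subformula occurrence generating at most one obligation, only $O(|\phi|)$ of them arise, each keeping $O(|\phi|)$ nodes, so the pruned structure has $O(|\phi|^2)$ nodes; because $J_0$ is consistent, it completes to a genuine JSON tree (filling missing array positions with $\{\}$ to respect contiguity, adding no object keys beyond those required, assigning fresh values of size $O(\log|\phi|)$ to unconstrained leaves) without destroying any kept obligation.

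For the lower bound I would reduce 3‑SAT. Given clauses $C_1,\dots,C_m$ over variables $x_1,\dots,x_n$, introduce for each variable $x_i$ a key $x_i$ and encode the choice of its truth value as the \emph{type} of the node reached by $\child_{x_i}$ from the root: ``object'' for true, ``array'' for false, which are mutually exclusive since every node has exactly one type. Concretely, let $\mathrm{gad}(x_i):=[\child_{x_i}\circ\child_{a}]$ and $\mathrm{gad}(\lnot x_i):=[\child_{x_i}\circ\child_{0}]$, and set
\[
\Phi \;:=\; \bigwedge_{j=1}^{m}\ \bigvee_{\ell\in C_j}\ \mathrm{gad}(\ell).
\]
This formula uses only $\land$, $\lor$, $[\cdot]$, $\circ$ and navigation steps, hence no negation and no equality. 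If a truth assignment satisfies all clauses, build $J$ whose root has, for each literal used to satisfy some clause, a child under $x_i$ that is an object with key $a$ (if $x_i$ is true) or a one‑element array (if $x_i$ is false); this is a well‑formed JSON tree whose root satisfies $\Phi$. Conversely, if $J,n\models\Phi$, declare $x_i$ true iff the $x_i$‑child of $n$ is an object (and false otherwise): in every clause the satisfied disjunct $\mathrm{gad}(\ell)$ forces the $x_i$‑child of $n$ to have the type matching $\ell$, so $\ell$, and hence the clause, is satisfied by this assignment.

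I expect the main obstacle to be the bookkeeping in the small‑model construction: verifying that the pruned fragment is simultaneously consistent (which is inherited from $J_0$), that it can be completed to a JSON tree while respecting array‑contiguity and unique‑key constraints, and that the $\EQ$ operator is treated correctly — keeping an entire subtree of size $\le|\phi|$ when an equality must hold, and a short witness of divergence (a pair of paths down to a node where the compared subtrees first differ) when it must fail, and making sure that ``linked'' subtrees forced equal by positive $\EQ(\alpha,\beta)$ atoms remain equal after completion. Everything else — the determinism of binary formulas, the $O(|\phi|)$ bound on obligations, and the two directions of the 3‑SAT reduction — is routine.
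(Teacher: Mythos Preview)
Your hardness reduction is essentially the paper's: both exploit that a node has a unique type, so the child reached by $\child_{x_i}$ cannot simultaneously admit an object-key step and an array-index step, which is exactly what makes the truth assignment extracted from a model consistent. The paper happens to use $[\child_1]$ versus $[\child_w]$ (array $\Leftrightarrow$ true) and adds a per-variable disjunction $\theta_{p_i}$ forcing a child to exist, whereas you use $[\child_a]$ versus $[\child_0]$ and omit the $\theta$'s; your streamlined version is still correct, since every clause disjunct that holds already forces the needed child and its type.

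For the upper bound your overall plan (polynomial small-model, then guess and verify via Proposition~\ref{prop-JNL-eval}) matches the paper, but there is one concrete gap. Array indices $i$ in $\child_i$ are encoded in binary, so a single occurrence of $\child_i$ with $i$ of order $2^{|\phi|}$ forces, by the contiguity requirement on tree domains, an array node with exponentially many children. Your pruning keeps only $O(|\phi|^2)$ ``useful'' nodes, but then you write ``filling missing array positions with $\{\}$ to respect contiguity''; that filling step is precisely where the blow-up reappears, and your model is no longer polynomial. The paper deals with this not by pruning at all but by a \emph{preprocessing} pass on the formula: at each level it collects the finitely many indices occurring in $\child_i$ operators, sorts them, and replaces them by $1,2,3,\dots$ while preserving their order; the resulting formula is equisatisfiable and now only mentions indices that are polynomial in unary. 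After this renumbering, bounded height (at most the number of $\child$ steps in $\phi$, plus the depth of any $A$ in $\EQ(\alpha,A)$) together with bounded branching (at most one child per distinct key/index mentioned) already yields a polynomial witness, so the detailed obligation-by-obligation pruning you describe is unnecessary. If you want to keep your pruning argument, you must insert the same renumbering trick before the completion step; without it, the argument does not establish membership in \np.
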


It might be somewhat surprising that the positive fragment without data comparisons is not trivially satisfiable. This holds due to the fact that each key in an object is unique, so a formula of the form $X_a[X_1]\wedge X_a[X_b]$ is unsatisfiable because it forces the value of the key $a$ to be both an array and a string at the same time.

\subsection{Extensions}

Although the base proposal for JNL captures the deterministic spirit of JSON, it is somewhat limited in expressive power. Here we propose two natural extensions: the ability to non-deterministically select which child of a node is selected, and the ability to traverse paths of arbitrary length.

\noindent
\textbf{Non-determinism}.
The path operators $ \child_{w}$ and $ \child_{i}$ can be easily extended such that they return more than a single child; namely, we can permit matching of regular expressions and intervals, instead of simple words and array positions.

Formally, non-deterministic JSON logic extends binary formulas of JNL by the following grammar: 
\begin{equation*}
\begin{array}{lll}
\alpha,\beta  & :=  & \langle \phi \rangle \ |\ \child_{e}  \ |\ \child_{i:j} \ |\ \alpha \circ \beta \ |\ \epsilon \\
 \end{array}
\label{nondetjl-syntax}
\end{equation*}
where $e$ is a subset of $\Sigma^*$ (given as a regular expression), and $i \leq j$ are natural numbers, or $j=+\infty$ (signifying that we want any element of the array following $i$).
 The semantics of the new path operators is as follows:
 
\begin{itemize}
\item $\sem{\child_e}_{J} = \{(n,n') \mid $ there is $w \in L(e)$ such that $(n,w,n')\in \Ochild\}$.
\item $\sem{\child_{i:j}}_{J} = \{(n,n') \mid $ there is $i \leq p \leq j$ such that the triple $(n,p,n')$ is in $\Achild \}$. 
\end{itemize}

\noindent
\textbf{Recursion}.
In order to allow exploring paths of arbitrary length we add the Kleene star to our logic. That is, recursive JNL allows $(\alpha)^*$ as a binary formula
(as usual we normally omit the brackets when the precedence of operators is clear). The semantics of 
$(\alpha)^*$ is given by 
$$\sem{(\alpha)^*} = \sem{\epsilon}_J \cup \sem{\alpha}_J \cup \sem{\alpha \circ \alpha} \cup \sem{\alpha \circ \alpha \circ \alpha}_J \cup \dots.$$ 

So what happens to the evaluation and containment when we extend this logic? 
For the case of evaluation, we can easily show that the linear algorithm is retained as long as we do not 
have the binary equality operator $\EQ(\alpha,\beta)$. Indeed, in this case, the evaluation can be done using the classical PDL model checking algorithm \cite{AI00,CleavelandS93} with small extensions which account for the specifics of the JSON format.
However, we are not able to extend the linear algorithm for the full case, because an expression of the form $\EQ(\alpha,\beta)$ might require checking all pairs of nodes in our tree for equality, resulting in a jump in complexity. 

\begin{proposition}
\label{prop-JNL-rec-eval}
The evaluation problem for JNL with non-determinism and recursion can be solved in time $O(|J|^3\cdot|\phi|)$, and in time $O(|J|\cdot|\phi|)$ if the formula does not use the predicate $\EQ(\alpha,\beta)$.
\end{proposition}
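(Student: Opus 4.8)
The plan is to evaluate $\phi$ by a bottom-up induction over its structure, computing $\sem{\psi}_J\subseteq D$ for every unary subformula $\psi$ and, in the cubic case, the full relation $\sem{\alpha}_J\subseteq D\times D$ for every binary subformula $\alpha$. A preprocessing step shared by both cases handles the fact that ``values'' in JNL are whole subtrees: in (near-)linear time one computes for each node $n$ of $J$ --- and for each document $A$ occurring in $\phi$ --- a \emph{canonical identifier} $\textit{cid}(n)$ such that $\textit{cid}(n)=\textit{cid}(n')$ iff $\json(n)=\json(n')$. This is the standard subtree-equality computation: process nodes bottom-up, representing an object node by the lexicographically sorted sequence of its $(\text{key},\textit{cid of child})$ pairs and an array node by the sequence of its children's identifiers, and bucket-sort these sequences to assign new identifiers. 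From the $\textit{cid}$'s the sets $S_A=\{n\mid\json(n)=A\}$ are read off directly.

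For the case without $\EQ(\alpha,\beta)$, observe that JNL with non-determinism and recursion is essentially PDL: the atomic programs are $\child_w$, $\child_e$ and $\child_{i:j}$; composition and $*$ are the PDL program constructors; $[\alpha]$ is $\langle\alpha\rangle\top$; $\langle\psi\rangle$ is a test; and $\EQ(\alpha,A)$ is $\langle\alpha\rangle p_A$ where $p_A$ is a fresh proposition whose extension is $S_A$. One then runs the classical model-checking algorithm for PDL, which is linear in the product of the sizes of the formula and the structure (see \cite{AI00,CleavelandS93}): processing subformulas bottom-up, for a binary $\alpha$ build the Thompson NFA $\mathcal A_\alpha$ over the atomic programs in $\alpha$ (test transitions guarded by the already-computed extensions of the relevant unary subformulas), form the product of $J$ with $\mathcal A_\alpha$, and compute backward reachability in this product from the final-state vertices $(n_1,q_f)$ --- with $n_1$ ranging over all of $D$ for $[\alpha]$ and over $S_A$ for $\EQ(\alpha,A)$ --- to obtain $\sem{[\alpha]}_J$, respectively $\sem{\EQ(\alpha,A)}_J$. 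The only JSON-specific additions are: distinguishing $\Obj$- and $\Arr$-nodes when following an atomic step; evaluating a $\child_e$ step, i.e. testing whether an outgoing edge label $w$ lies in $L(e)$, done while scanning the edges at total cost $O(|J|\cdot|e|)$ since edge labels have total length at most $|J|$; and the $S_A$ preprocessing above. Each product graph has size $O(|J|\cdot|\alpha|)$ with linear-time reachability, so over the $O(|\phi|)$ subformulas the total is $O(|J|\cdot|\phi|)$.

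For the general case the predicate $\EQ(\alpha,\beta)$ cannot be pushed through pre-images, so we materialise $\sem{\alpha}_J$ for each binary subformula $\alpha$ as a Boolean $|J|\times|J|$ matrix: $\sem{\child_w}_J$, $\sem{\child_{i:j}}_J$, $\sem{\langle\psi\rangle}_J$ and $\sem{\epsilon}_J$ are read off in $O(|J|)$ (with the $O(|J|\cdot|e|)$ label test for $\child_e$), $\sem{\alpha\circ\beta}_J$ is a Boolean matrix product and $\sem{(\alpha)^*}_J$ a reflexive-transitive closure, each in $O(|J|^3)$. For unary subformulas, $\top$, $\neg$, $\wedge$, $\vee$ and $[\alpha]$ cost $O(|J|^2)$; $\EQ(\alpha,A)$ returns $\{n\mid\exists n_1\in S_A,\ (n,n_1)\in\sem{\alpha}_J\}$ in $O(|J|^2)$; and for $\EQ(\alpha,\beta)$ we use that $n\in\sem{\EQ(\alpha,\beta)}_J$ iff the sets $\{\textit{cid}(n_1)\mid(n,n_1)\in\sem{\alpha}_J\}$ and $\{\textit{cid}(n_2)\mid(n,n_2)\in\sem{\beta}_J\}$ intersect, which is checked for each $n$ in $O(|J|)$ via a bitmap indexed by canonical identifiers, i.e. $O(|J|^2)$ overall. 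Summing over the $O(|\phi|)$ subformulas yields $O(|J|^3\cdot|\phi|)$.

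The delicate point is $\EQ(\alpha,\beta)$. The naive idea of intersecting two backward-reachability sets is wrong, because the two witnesses must have equal subtrees \emph{and} both be reached from the same source $n$, so the matching cannot be propagated through the pre-image operators; grouping witnesses by canonical identifier (equivalently, intersecting the $\textit{cid}$-images per source node) is what restores correctness. It is exactly this predicate, together with the $*$ inside $\alpha$ that forces a transitive closure, that causes the jump from $O(|J|\cdot|\phi|)$ to $O(|J|^3\cdot|\phi|)$; the subtree-equality preprocessing, the regular-expression edge labels, and the array ranges all contribute only lower-order factors.
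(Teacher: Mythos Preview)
Your proposal is correct and, for the $O(|J|\cdot|\phi|)$ bound, follows the same route as the paper: both reduce the $\EQ(\alpha,\beta)$-free fragment to PDL model checking (same references \cite{AI00,CleavelandS93}), with the same two JSON-specific adaptations --- treating array edges as ordinary labelled edges and pre-annotating each edge with the set of regular expressions $e$ in $\phi$ whose language contains the edge label, at total cost $O(|J|\cdot|e|)$ per expression.

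For the $O(|J|^3\cdot|\phi|)$ bound the paper takes a shorter path: it simply invokes the cubic PDL-with-data-equality algorithm of \cite{LMV16}. Your argument is instead fully self-contained: you precompute canonical subtree identifiers in near-linear time, materialise each $\sem{\alpha}_J$ as a Boolean $|J|\times|J|$ matrix (composition by matrix product, star by transitive closure, each $O(|J|^3)$), and evaluate $\EQ(\alpha,\beta)$ per source node by intersecting the $\textit{cid}$-images. This buys independence from an external reference and makes explicit where the cubic factor arises (the closure/product on the binary side, not the equality test itself), at the cost of a longer write-up; the paper's version buys brevity but relies on the reader knowing \cite{LMV16}. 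Both are sound and yield the same bound.
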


For satisfiability the situation is radically different, as the combination of recursion, non-determinism and 
the binary equalities ends up being too difficult to handle. 

\begin{proposition}
\label{prop-JNL-rec-sat-und}
The \textsc{Satisfiability} problem is undecidable for non-deterministic recursive JNL formulas, even if they do not use negation.
\end{proposition}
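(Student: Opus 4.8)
The plan is to encode an undecidable problem — I would use the **Post Correspondence Problem (PCP)** or, more conveniently, the halting/non-halting problem for two-counter (Minsky) machines — into the satisfiability of a non-deterministic recursive JNL formula. The key tool available to us is the binary equality predicate $\EQ(\alpha,\beta)$ combined with Kleene star: recursion lets us force the existence of an arbitrarily long "computation path" in the JSON tree, non-determinism (the regular-expression child axis $\child_e$ and the interval axis $\child_{i:j}$) lets us navigate along it without knowing its length in advance, and $\EQ(\alpha,\beta)$ lets us compare the subtree hanging off one position of the path with the subtree hanging off a related position — this is exactly the mechanism one needs to propagate consistency constraints (e.g. "the tape content / counter value at step $t{+}1$ agrees with that at step $t$ except for the local change dictated by the transition"). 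Since the statement allows negation-free formulas, I must be careful to express all constraints positively; the standard trick is that $\EQ$ itself is a positive "there exist two reachable equal subtrees" assertion, and reachability conditions $[\alpha]$ are positive, so forcing a valid accepting/halting run is a conjunction of positive requirements, while the negation of the halting problem is not needed because we can phrase it as: *there exists a JSON tree encoding a halting run* — which is satisfiable iff the machine halts, giving undecidability of satisfiability directly (no complementation needed).

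Concretely I would proceed as follows. First, fix an encoding of a configuration of the chosen machine (a two-counter machine is cleanest: a configuration is a control state plus two natural numbers) as a JSON subtree — e.g. an object with keys for the state and two array-valued keys whose lengths represent the counters, or better, unary-encoded via nested objects so that equality of subtrees coincides with equality of counter values. Second, build a JNL formula $\phi_{\M}$, for a given machine $\M$, that asserts: the root has a key, say $\mathtt{run}$, leading to a list (array, or a $\child_{\mathtt{next}}$-chain) of configuration-subtrees; the first configuration is the initial one (a fixed JSON constant, enforceable with $\EQ(\alpha, A)$); the last is an accepting/halting one; and — the crucial part — for every pair of consecutive configurations $c_t, c_{t+1}$ along the chain, $c_{t+1}$ is a valid successor of $c_t$ under $\M$'s transition relation. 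This last "for every consecutive pair" is where recursion enters: I would use a star-expression $(\child_{\mathtt{next}})^*$ to reach an arbitrary position and then a local test; since JNL has no universal path quantifier, I must instead *forbid* a bad consecutive pair — but negation is disallowed, so the subtler route is to have the encoding itself be self-certifying, i.e. each configuration node carries, as an extra subtree, a "witness of legality of the step into it", and legality of a step is a finite disjunction of positive patterns (one per transition rule) each of which is a conjunction of $\EQ$-comparisons between designated sub-addresses of $c_t$ and $c_{t+1}$. Third, verify both directions: if $\M$ halts, the genuine halting run yields a satisfying JSON tree; conversely any satisfying tree contains, by construction, a chain of configurations that is initial, finally halting, and locally consistent at every step, hence a real halting computation, so $\M$ halts.

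The main obstacle I expect is the positivity requirement, i.e. enforcing a *global* step-by-step consistency condition ("*every* adjacent pair is a legal transition") using only positive, recursion-plus-equality machinery, without a $\Box$-style universal modality over the $\child_{\mathtt{next}}$-chain. The resolution sketched above — make legality locally witnessed and have each configuration node "prove" its own legality via a bounded disjunction of $\EQ$-patterns, so that the conjunction of a positive "is-initial", a positive "reaches-a-halting-node", and the positive "every node carries a valid local witness" (the latter itself expressed as: there is *no* node failing the witness — which would need negation — replaced instead by building the witness into the mandatory shape of the tree via $[\cdot]$ tests that every configuration node must satisfy, which *can* be stated positively because "$[\alpha]$ holds at every node reached by $(\child_{\mathtt{next}})^*$" is itself only existential...) — is the delicate engineering point, and I would need to double-check that the adjacency constraint can be bolted onto the existence of the path rather than quantified separately; the standard fix is to thread the constraint through the recursion, defining the chain by a star of a *single* binary step $\gamma = \child_{\mathtt{next}} \circ \langle \chi\rangle$ where $\chi$ is the positive "I am a legal successor of my $\child_{\mathtt{next}}^{-1}$-predecessor" test — but since JNL has no inverse axis, one instead attaches to each node a copy of its predecessor's relevant data and checks equality with $\EQ$ against the actual predecessor reachable by re-navigating from the root, which is the detail most likely to require care. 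I would also need a secondary check that the non-deterministic array/interval axes plus unique-key constraints (recall Proposition~\ref{prop-JNL-sat}'s subtlety) do not accidentally make the formula unsatisfiable for trivial typing reasons.
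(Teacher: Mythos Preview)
Your plan is essentially the same reduction the paper carries out: encode an accepting run of a two-counter (Minsky) machine as a $\child_{\texttt{next}}$-chain of configuration objects, represent each counter by a unary chain (the paper uses the height of a subtree reached through a fixed key $a$), enforce the initial and final configurations with $\EQ(\alpha,A)$, and enforce each transition with $\EQ(\alpha,\beta)$ comparisons between the appropriate counter-subtrees of the current and the next configuration, with the whole transition test threaded inside a Kleene plus.

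The one place where you make your life unnecessarily hard is the placement of the local test in the step expression. You write the step as $\child_{\texttt{next}} \circ \langle \chi \rangle$ and then worry that $\chi$ must look \emph{backwards} to the predecessor, which JNL cannot do; your proposed workaround (store a copy of the predecessor and re-navigate from the root) is fragile precisely because $(\child_{\texttt{next}})^*$ from the root does not pin down a single position. The fix is trivial and is exactly what the paper does: put the test \emph{before} the move, i.e.\ take the step to be $\langle \chi \rangle \circ \child_{\texttt{next}}$ and let $\chi$ look \emph{forward}. Concretely, the paper's transition block is
\[
\big(\epsilon\,\langle \textstyle\bigvee_q \phi_q \rangle \circ \child_{\texttt{next}}\big)^{+},
\]
where each $\phi_q$ is a purely positive conjunction of $\EQ$-atoms comparing paths of the form $\child_{c_i}$ and $\child_{c_i}\circ\child_a$ at the current node against paths of the form $\child_{\texttt{next}}\circ\child_{c_i}$ and $\child_{\texttt{next}}\circ\child_{c_i}\circ\child_a$ (plus $\EQ(\child_{\texttt{state}},\texttt{"}q\texttt{"})$ and $\EQ(\child_{\texttt{next}}\circ\child_{\texttt{state}},\texttt{"}q'\texttt{"})$). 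Increment of $C_1$ is $\EQ(\child_{c_1},\,\child_{\texttt{next}}\circ\child_{c_1}\circ\child_a)$, decrement is $\EQ(\child_{c_1}\circ\child_a,\,\child_{\texttt{next}}\circ\child_{c_1})$, zero-test is $\EQ(\child_{c_1},\texttt{"0"})$, and non-zero is $[\child_{c_1}\circ\child_a]$. No inverse axis, no root re-navigation, and no negation are needed. With this one adjustment your plan matches the paper's proof.
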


However, if we rule out the equality operator we can show much better bounds. For the full 
non-deterministic, recursive JNL (without equalities) the satisfiability problem is the same as 
other similar fragments such as PDL. For (non-recursive) non-deterministic JNL the problem is slightly easier. 

\begin{proposition}
\label{prop-JNL-rec-sat-dec}
The \textsc{Satisfiability} problem is: 
\begin{itemize}
\item \pspace-complete for non-deterministic, non-recursive JNL without the $\EQ(\alpha,\beta)$ operator. 
\item \exptime-complete for non-deterministic, recursive JNL without the $\EQ(\alpha,\beta)$ operator. 
\end{itemize}
\end{proposition}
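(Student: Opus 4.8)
The plan is to treat non-deterministic recursive JNL without the $\EQ(\alpha,\beta)$ operator as a close relative of propositional dynamic logic (PDL), deriving the \exptime\ bounds from results about PDL (resp.\ deterministic PDL) and the \pspace\ bounds from results about modal logic, with the JSON-specific features (unique keys, array positions, node typing, the fixed-document predicate $\EQ(\alpha,A)$) handled by dedicated encodings. For the \exptime\ \emph{upper bound} I would construct from a formula $\varphi$ a polynomial-size alternating tree automaton (equivalently, reduce to satisfiability in a deterministic variant of PDL with extra global axioms) recognising exactly the JSON trees in $\sem{\varphi}$, so that its \exptime\ emptiness test gives the bound. The automaton's states combine the Fischer--Ladner closure of $\varphi$ with the states of NFAs for the regular expressions appearing in the $\child_e$ operators, and alternation handles the Boolean connectives and the branching of $\circ$ and $(\cdot)^*$. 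The predicate $\EQ(\alpha,A)$ is dealt with by noting that $A$ is a fixed finite document, so ``the subtree here equals $A'$'' is a regular property of JSON trees for each of the at most $|A|$ subdocuments $A'$ of $A$: one adds a proposition $p_{A'}$ axiomatised bottom-up and rewrites $\EQ(\alpha,A)$ as ``$\alpha$ reaches a $p_A$-node'' (its negation dually). The partition into $\Obj,\Arr,\Str,\Int$, the fact that only $\Arr$-nodes have numeric edges and that these form an initial segment, and that $\Str$- and $\Int$-nodes are leaves, become local automaton constraints.

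For the matching \exptime\ \emph{lower bound} I would either redo the standard reduction from acceptance of a polynomial-space alternating Turing machine directly inside JNL---using $(\child_e)^*$ to traverse an encoding of the computation tree---or, equivalently, reduce from PDL satisfiability: each world becomes an object node, each atomic program $a$ is simulated by a move $\child_{e_a}$ along a dedicated infinite regular language $e_a$ (the $e_a$ pairwise disjoint, which recreates the non-determinism of $a$-successors), and each atomic proposition by the presence of a dedicated key. Diamonds, boxes, tests, composition and Kleene star translate literally; the one operator JNL lacks, program union, is eliminated without leaving PDL because union at the top of a modality distributes over it and union directly under a star satisfies $(\pi_1\cup\pi_2)^*\equiv(\pi_1^*\circ\pi_2^*)^*$, and the usual hardness construction only needs union of bounded nesting depth, so the reduction stays polynomial.

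For the \pspace\ \emph{upper bound} I would exploit that, without the Kleene star, every binary formula describes paths of length at most $|\varphi|$, so an alternating procedure can build a model top-down: at the node under construction it keeps a polynomial set of outstanding subformulas, verifies the Boolean and local structural conditions, branches existentially over which typed children to create and which ``key-classes'' (membership pattern in the finitely many mentioned regular languages, and equalities with the mentioned constants) they use, and branches universally over the obligations that must hold at all children; the recursion depth is bounded by the navigation nesting of $\varphi$, so the whole procedure runs in alternating polynomial time, i.e.\ in \pspace\ (checking that a guessed key-class is non-empty is an NFA intersection-emptiness test, which stays within budget). The matching \emph{lower bound} reduces from satisfiability of multimodal $K$, which is \pspace-complete: by the bounded-branching finite-model property of $K$ one devotes one key per diamond-subformula to its witnesses, translates a diamond by $[\child_{\mathit{key}}\circ\langle\cdot\rangle]$ and a box by the conjunction of the dual conditions together with a $\child_e$-test forbidding further children; this uses neither recursion nor $\EQ(\alpha,\beta)$.

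The step I expect to be the main obstacle is faithfully simulating JSON's deterministic but infinitely-labelled key structure inside a PDL-like formalism while keeping the constructed object polynomial and keeping the reduction sound in both directions. Because a $\child_e$ edge is really a union of the deterministic $\child_w$ edges for $w\in L(e)$, one cannot afford to name individual keys; the key technical idea is to keep the regular languages symbolic, observing that a key is distinguishable by $\varphi$ only through its membership in the finitely many mentioned languages and its coincidence with the finitely many mentioned constants, that two out-edges of a node may always be given distinct keys unless a singleton language forces them to coincide, and that functionality constraints (resp.\ deterministic-PDL axioms) therefore only have to be imposed for singleton languages. Expanding the languages into their exponentially many Boolean atoms would blow the automaton up and yield only a \twoexptime\ bound, so avoiding that expansion is what makes the \exptime\ bound go through; and when reconstructing a JSON tree from an accepting run one must correspondingly be careful to merge edges produced by different $\child_{e_i}$ moves that are compelled to share a key. \Boxtheorem
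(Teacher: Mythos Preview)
Your proposal is correct, but the route differs from the paper's in a notable way.

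For the \emph{upper bounds}, the paper does not build PDL-style automata directly for JNL. Instead it goes through JSL: it invokes the equivalence of Theorem~\ref{thm-jsn_jnl} together with the $J$-automata constructions of Propositions~\ref{prop-jsl-sat} and~\ref{prop-sat-recursive-jsl}. The naive JNL$\to$JSL translation is exponential (because composition unfolds into nested modalities), so the paper's key trick is to introduce a fresh \emph{definition} $\gamma = \phi$ in recursive JSL for every concatenation in the JNL formula, keeping the translated object polynomial. For non-recursive JNL the resulting definitions are acyclic, so the tree-height bound from Proposition~\ref{prop-jsl-sat} still applies and yields \pspace; for recursive JNL one lands in well-formed recursive JSL and Proposition~\ref{prop-sat-recursive-jsl} gives \exptime. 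Your approach---building an alternating automaton over the Fischer--Ladner closure directly, and for the star-free case running a bounded-depth alternating tableau---is the more ``textbook'' PDL/modal-logic route and is self-contained; the paper's approach has the advantage of reusing the $J$-automata machinery already developed for JSL and thereby handling the JSON-specific quirks (determinism, typed nodes, $\EQ(\alpha,A)$ via the $\compare\!(A)$ node test) uniformly, without the symbolic regular-language bookkeeping you identify as the main obstacle.

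For the \emph{lower bounds} the two agree in spirit. The \exptime-hardness is in both cases the standard alternating-Turing-machine/PDL reduction. For \pspace-hardness the paper takes the shortcut of importing the QBF reduction proved for JSL (Proposition~\ref{prop-jsl-sat}) via the polynomial JSL$\to$JNL translation of Theorem~\ref{thm-jsn_jnl}; your reduction from (multi)modal $K$ is an equally valid alternative, and the paper's body text even mentions a third, still simpler reduction from regular-expression universality using $[\child_{\Sigma^*}]\wedge\neg[\child_e]$.
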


Note that \pspace-hardness for satisfiability follows easily from the fact that we are now allow regular expressions in our 
edges: Given a regular expression $e$, we have that the $e$ is universal 
if and only if the query $[\child_{\Sigma^*}] \wedge \neg[\child_e]$ is not satisfiable. 
However, in the proof of this proposition we in fact show that the problem remains \pspace-hard even when 
the only regular expression which is not a word in a $\child_e$ axis is $\Sigma^*$. One can also show that \pspace-hardness remains 
when one only considers JSON documents without object values. 

\section{Schema definitions for JSON}
\label{sec:schema}

Having dealt with navigational primitives for querying JSON, our next task is to analyse JSON
Schema definitions. We focus solely on the JSON Schema specification \cite{json-ietf}, 
which is, up to our best knowledge, the only attempt to define a general schema 
language for JSON documents. The JSON Schema specification is currently in its fourth draft, and on its way of becoming 
an IETF standard. 

\subsection{JSON Schema}

As before, we first briefly present how JSON Schema works. We remark again that our intention is not to provide a full analysis 
for the specification, but rather show how the navigation works, with the aim of obtaining a logic that can capture JSON Schema. 
We thus concentrate on a core fragment that is equivalent to the full specification; we refer to \cite{PRSUV16} for more details and a full 
formalisation of this core. 

Every JSON schema is JSON document itself. JSON Schema can specify that a document must be any of the different types of 
values (objects, arrays, strings or numbers); and for each of these types there are several keywords that help shaping 
and restricting the set of documents that a schema specifies. The most important 
keyword is the "type" keyword, as it determines the type of value that has to be validated against the 
schema: a document of the form  $\{\type\str,\ \dots\}$ specifies string values, 
$\{\type\num,\ \dots\}$ specifies number values, $\{\type\obj,\ \dots\}$ specifies objects 
and $\{\type\arr,\ \dots\}$ specifies arrays. In addition to the type keyword, each schema includes a number of other 
pairs that shape the documents they describe. 

  \begin{table*}
 \begin{tabular}{c c }
\raisebox{0cm}{
\parbox{0.85\columnwidth}{
\fbox{\parbox{0.8\columnwidth}{

Keywords for string schemas:

\smallskip
- $\type\str$ \ \ \ \ - $\texttt{"pattern":}\ \exp $

}} \\ 

\smallskip

\fbox{\parbox{0.8\columnwidth}{
Keywords for number schemas:

\smallskip
- $\type\num$  \ \ \ \ -$ \texttt{"multipleOf":}\ {i} $ \\
- $ \texttt{"minimum":}\ {i} $ \ \ \ \ \ \ \ \ \ \ - $ \texttt{"maximum":}\ {i} $

}}}}
\smallskip
&
\fbox{\parbox{1.1\columnwidth}{
Keywords for object schemas:

\smallskip
- $\type\obj$ \ \ \ \ \ \ \ \ \ \ \ \ \ - $\texttt{"required":}\ [\ k_1,\dots,k_n ]$ \\ 
- $\texttt{"minProperties": {i}}$ \ \ \ \ \ \ \ - $\texttt{"maxProperties": {i}}$ \\
- $\prop\{k_1:J_1,\dots,k_m:J_m\}$ \\ 
- $\patProp\{\texttt{"}e_1\texttt{":}J_1,\dots,\texttt{"}e_\ell\texttt{":}J_\ell\}$ \\
- $\texttt{"additionalProperties":}\  J  $
}} \\
\parbox{0.85\columnwidth}{
\fbox{\parbox{0.8\columnwidth}{
Keywords for array schemas:

\smallskip

- $\items[J_1,\dots,J_n]$ \\
- $\uniqueItems\true$ \\
- $\additionalItems J$ 
}}} & 
\fbox{\parbox{1.1\columnwidth}{
Boolean combination and comparisons:

\smallskip

- $\texttt{"anyOf":}\ [\ J_1,\dots,J_n ]$ \ \ \ \ \ \ \ \ - $\texttt{"allOf":}\ [\ J_1,\dots,J_m]$ \\ 
- $\texttt{"not":}\ J$ \ \ \ \ \ \ \ \ \ \ \ \ \ \ \ \ \ \ \ \ \ \ \ \ \ - $\texttt{"enum":}\ [\ A_1,\dots,A_n ]$
}}
\end{tabular}
\caption{\textnormal{The form for all keyords in JSON schema. Here $i$ is always a natural number, $J$ and each $J_i$ are JSON schemas, $A_1,\dots,A_n$ are JSON documents, each $k_i$ is a string value ($k$ stand for key); and $\exp$ and each $\exp_i$ are regular expressions over the alphabet $\Sigma$ of strings. }}
\label{jsch-grammar}
\end{table*}

We now describe each of the four types of basic schemas. Table \ref{jsch-grammar} contains a list of all keywords available for each of these schemas. 

\medskip
\noindent 
\textbf{String schemas}. String schemas are those featuring the $\type\str$ pair. 
Additionally, they may include the pair $\pattern\regexp$, for $\textit{regexp}$ a regular expression over $\Sigma$, which 
validates only against those strings that belong to the language of this expression. 
For example, 
$\{\type\str\}$ and $\{\type\str,\pattern\texttt{"$(01)^+$"}\}$ are string schemas. The first schema 
validates against any string, and the second only against strings built from $0$ or $1$.

\medskip
\noindent 
\textbf{Number schemas}. 
For numbers, we can use the pair $\mini i$ to specify that the number is at least $i$, 
$\maxi i$ to specify that the number is at most $i$, and $\multipleOf i$ to specify 
that a number must be a multiple of $i$.  
Thus for example 
$\{\type\num,\maxi 12,\multipleOf 4\}$ describes numbers 0, 4, 8 and 12.

\medskip
\noindent 
\textbf{Object schemas}. Besides the $\type\obj$ pair, object schemas may additionally have the following: 

\smallskip
\noindent
- Pairs $\texttt{"minProperties":{i}}$ and $\texttt{"maxProperties":{j}}$, to specify that an object 
has to have at least $i$ and/or at most $j$ key-value pairs. 

\smallskip
\noindent
- a pair $\required[k_1,\dots,k_n]$, where each $k_i$ is a string value. This keyword 
mandates that the specified object values must contain pairs with keys $k_1,\dots,k_n$. 

\smallskip
\noindent
- a pair $\prop\{k_1:J_1,\dots,k_m:J_m\}$, where each $k_i$ a string value and 
each $J_i$ is itself a JSON Schema. This keyword states that the value of each pair with key $k_i$ must 
validate against schema $J_i$. 
%

\noindent
- A  pair $\patProp\{\texttt{"}e_1\texttt{":}J_1,\dots,\texttt{"}e_\ell\texttt{":}J_\ell\}$, where each $e_i$ is 
a regular expression over $\Sigma$ and each $J_i$ is a JSON schema. This keyword works just like \texttt{properties}, but 
now any value under any key that conforms to the expression $\exp_i$ must satisfy $J_i$. 

\smallskip
\noindent
- finally, the pair $\addProp J$, where $J$ is a JSON schema. This keyword 
presents a schema that must be satisfied by all values whose 
keys do not appear neither in \texttt{properties} nor conform to the language of an expression in \texttt{patternProperties}. 
For example, the following schema specifies objects 
where the value under "name" must be a string, the value under any key of the form \verb+a(b|c)a+ must be an even number, 
and the value under any key which is neither "name" nor conforms to the expression above must always be the number $1$. 

\beforeverbatim
{\footnotesize
\begin{verbatim}
{
   "type": "object", 
   "properties": {
      "name": {"type":"string"}, 
      }, 
   "patternProperties: {
      "a(b|c)a": {"type":"number", "multipleOf": 2}
      },     
   "additionalProperties: {
      "type": "number", 
      "minimum": 1
      "maximum": 1
   }
}
\end{verbatim}
}
\afterverbatim

\noindent 
\textbf{Array schemas}. Array schemas are specified with the \texttt{"type": "array"} keyword. For arrays there are two ways of specifying what 
kind of documents we find in arrays. We can use a pair $\items[J_1,\dots,J_n]$ to specify 
a document with an array of $n$ elements, where each $i$-th element must satisfy schema $J_i$. 
We can also use 
$\additionalItems J$ to specify that all elements in the array must satisfy schema $J$. If both keywords are used 
together, then we allow the array to have more values than those specified with items, 
as long as they agree with the schema specified in \texttt{additionalItems}. Finally, one can 
include the pair $\uniqueItems\true$ to force arrays whose elements are all distinct form each other.  
For example, the following schema validates against arrays of at least $2$ elements, where the first two 
are strings and the remaining ones, if they exists, are numbers. 

{\footnotesize
\begin{verbatim}
{
   "type": "array", 
   "items": [{"type":"string"}, {"type":"string"}], 
   "additionalItems": {"type":"number"},
   "uniqueItems":true
}
\end{verbatim}
}

\medskip
\noindent 
\textbf{Boolean combinations}. The last feature in JSON Schema are boolean combinations. These allow us
to specify that a document must validate against two schemas, against at least one schema, or 
that it must not validate against a schema. For example, the schema  
\verb+"not":{"type":"number","multipleOf":2}+ validates against any odd number, or any document which is not 
a number.

\subsection{JSON Schema Logic}

In order to capture the JSON Schema specification with a logical formalism, we isolate navigation and atomic tests into two different sets of operators. 
Let us start with atomic operations, which are basically a rewriting of most JSON Schema keywords 
into our framework. We allocate them in the set $\nodetests$.

Formally, $\nodetests$ contains the predicates $\isarray$, $\isobject$, $\isstring$, $\isnumber$ and $\uniq$, plus a predicate 
$\patternlogic(e)$ for each regular expression $e$ built from $\Sigma$, predicates $\minim(i)$ and $\maxim(i)$ for each integer $i$, 
a predicate $\mof(i)$ for each $i \geq 0$, predicates $\minch(k)$ and $\maxch(k)$ for each $k \geq 0$ and a predicate $\compare(A)$ for each JSON document $A$. 
The semantics of these predicates is given by the relation $\models$, that states whether an atomic predicate 
holds for a given node $n$ of a JSON $J$. 

\smallskip
\noindent
- $(J,n) \models \isarray $ iff. $n \in \Arr$. \ \ \ - $(J,n) \models \isobject $ iff. $n \in \Obj$. 

\smallskip
\noindent
- $(J,n) \models \isstring $ iff. $n \in \Str$. \ \ \ \ - $(J,n) \models \isnumber $ iff. $n \in \Int$. 

\smallskip
\noindent
- $(J,n) \models \patternlogic(e)$ iff. $\values(n)$ is a string in $L(e)$. 


\smallskip
\noindent
- $(J,n) \models \minim(i)$ iff. $\values(n)$ is a number greater than $i$.

\smallskip
\noindent
- $(J,n) \models \maxim(i)$ iff. $\values(n)$ is a number smaller than $i$. 

\smallskip
\noindent
- $(J,n) \models \mof(i)$ iff. $\values(n)$ is a multiple of $i$. 

\smallskip
\noindent
- $(J,n) \models \minch(i)$ iff. $n$ has at least $i$ children.

\smallskip
\noindent
- $(J,n) \models \maxch(i)$ iff. $n$ has at most $i$ children.


\smallskip
\noindent
- $(J,n) \models \uniq$ iff. $n \in \Arr$ and all of its children are different JSON values: for each $n'\neq n''$ such that $(n, p, n')$ and $(n, q, n'')$ belong to $\Achild$ 
we have that $\json(n') \neq \json(n'')$.

\smallskip
\noindent
- $(J,n) \models\ \compare(A) $ iff. $\json(n) = A$. 

\smallskip

With $\nodetests$ we cover all atomic features of JSON Schema. All that remains is the navigation, which in 
JSON Schema is given by the keywords properties, patternProperties, additionalProperties and required, 
for objects, and items and additionalItems for arrays. 
These forms of navigation are suggest using existential and universal modalities. For instance, 
the keyword \texttt{patternProperties} specifies a schema that must be validated by \emph{all} values whose keys 
satisfy a regular expression, and $\texttt{"required":}\ [w]$ demands that there must \emph{exist} 
a children with key $w$. 
Thus, to finish our logic, we augment our node tests with universal and existential modalities, as 
well as boolean combinations. 

\begin{definition}
Formulas in the \emph{JSON schema logic} (JSL) are expressions satisfying the grammar
\begin{equation*}
 %
 \begin{array}{lll}
\varphi,\psi & :=  &  \top \ |\ \neg \varphi\ \ |\ \varphi \wedge \psi \ |\ \phi\vee \psi \ |\  \psi \in \nodetests \ | \\ 
                  &  &       \bx_e \phi  \ |\  \bx_{i:j} \phi \ |\ \dm_{e} \phi \ |\ \dm_{i:j} \phi \\
                         
 \end{array}
\label{detjl-syntax}
\end{equation*}
where $e$ is a subset of $\Sigma^*$ (given as a regular expression),  $i \leq j$ are natural numbers, 
or $j=+\infty$ (signifying that we want any element of the array following $i$)and $A$ is an arbitrary JSON document. 
\end{definition}

As with JSON navigational logic, we can also obtain a deterministic version of JSL 
by restricting the syntax to use only modal operators  
$\bx_{w}$ and $\bx_{i}$, and $\dm_{w}$ and $\dm_{i}$; for 
a word $w \in \Sigma^*$ and a natural number $i$. 

\smallskip

The semantics is given by extending the relation $\models$.  


\smallskip
\noindent
- $(J,n) \models \top$ for every node $n$ in $J$.

\smallskip
\noindent
- $(J,n) \models \neg \phi$ iff. $(J,n) \nvDash \phi$.

\smallskip
\noindent
- $(J,n) \models \phi \wedge \psi$ iff. $(J,n) \models \phi$ and $(J,n) \models \psi$.

\smallskip
\noindent
- $(J,n) \models \phi \vee \psi$ iff. either $(J,n) \models \phi$ or $(J,n) \models \psi$.

\smallskip
\noindent
- $(J,n) \models \dm_{e}\ \phi$ iff. there is a word $w \in L(e)$ and a node $n'$ in $J$ such that 
$(n,w,n') \in \Ochild$ and $(J,n') \models \phi$

\smallskip
\noindent
- $(J,n) \models \dm_{i:j}\ \phi$ iff. there is  $i \leq p \leq j$ and a node $n'$ in $J$ such that 
$(n,p,n') \in \Achild$ and $(J,n') \models \phi$



\smallskip
\noindent
- $(J,n) \models \bx_{e}\ \phi$ iff. $(J,n') \models \phi$ for all nodes $n'$ such that 
$(n,w,n') \in \Ochild$ for some $w \in L(e)$.

\smallskip
\noindent
- $(J,n) \models \bx_{i:j}\ \phi$ iff. $(J,n') \models \phi$ for all nodes $n'$ such that 
$(n,p,n') \in \Achild$ for some $i \leq p \leq j$.


\smallskip

In order to present our results regarding JSL and JSON Schema, we abuse notation and write  
$J\models\psi$ whenever $(J, r)\models\psi$, where $r$ is the root of $J$. 

\medskip 
\noindent
\textbf{Expressive power}. As promised we show that JSL can capture JSON schema. 
In order to present this result we informally speak of the validation relation of JSON Schema, and 
say that a JSON $S$ validates against $J$ whenever $J$ is in accordance to all keywords present in $S$. 
We refer to \cite{PRSUV16} for more details on the semantics. 
As usual, we say that JSON Schema and JSL are equivalent in expressive power if 
for any JSON Schema $S$ there exists a $JSL$ formula $\psi_S\in \mathcal L$ such that for every JSON document $J$ we have that $J$ validates against $S$ if and only if $J\models\psi_S$; and conversely, for any expression $\phi \in \mathcal L$ there exists a JSON Schema $S_\phi$  such that for every JSON document $J$ we have that $J\models\psi$ if and only if $J$ validates against $S$. 

\begin{theorem}
  \label{schema2logic}
JSL and JSON Schema are equivalent in expressive power.
\end{theorem}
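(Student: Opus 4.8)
\smallskip\noindent\emph{Proof idea.}
The plan is to prove both directions by exhibiting two mutually recursive translations: one sending each JSON Schema $S$ to a JSL formula $\psi_S$, the other sending each JSL formula $\phi$ to a JSON Schema $S_\phi$. In both cases I would prove the equivalence in the strengthened form that it holds at \emph{every} node, not merely the root: for every subschema $S$ of the input and every JSON $J$ with node $n$, the value $\json(n)$ validates against $S$ (in the sense of \cite{PRSUV16}) iff $(J,n)\models\psi_S$, and symmetrically for $S_\phi$. Specializing to the root $r$ and unfolding the abbreviation $J\models\psi$ as $(J,r)\models\psi$ then yields the theorem. This strengthening is what makes the structural induction go through, since the schema keywords recurse into the values of an object's or array's children.

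For the direction from JSON Schema to JSL, I would use that a core schema $S$ is a finite set of keyword/value pairs, and that a document validates against $S$ exactly when it satisfies the constraint of every pair, with the caveat that a pair only constrains documents of the type it governs (so that, e.g., the string \texttt{"ab"} validates vacuously against $\{\prop\{\dots\}\}$). Accordingly I would set $\psi_S=\bigwedge_{p\in S}\tau(p)$, where $\tau$ maps each pair to a JSL formula \emph{designed to be vacuously true at nodes of the wrong type} --- either because the underlying JSL construct is already inert there, or through an explicit guard $\neg\isobject\vee(\cdot)$, $\neg\isarray\vee(\cdot)$, $\neg\isstring\vee(\cdot)$, $\neg\isnumber\vee(\cdot)$. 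Representative clauses: $\type\obj\mapsto\isobject$ (and likewise $\str,\num,\arr$); $\pattern\,e\mapsto\neg\isstring\vee\patternlogic(e)$; $\mini\,i\mapsto\neg\isnumber\vee\minim(i)$, similarly for $\maxi$ and $\multipleOf$; $\required[k_1,\dots,k_n]\mapsto\neg\isobject\vee\bigwedge_t\dm_{k_t}\top$; $\prop\{k_1{:}J_1,\dots,k_m{:}J_m\}\mapsto\bigwedge_t\bx_{k_t}\psi_{J_t}$ (already inert off objects); $\patProp\{e_1{:}J_1,\dots,e_\ell{:}J_\ell\}\mapsto\bigwedge_t\bx_{e_t}\psi_{J_t}$; $\addProp\,J\mapsto\bx_{e'}\psi_J$, where $e'$ is a regular expression for $\Sigma^*\setminus(\{k_1,\dots,k_m\}\cup L(e_1)\cup\cdots\cup L(e_\ell))$, a regular set because the regular languages are closed under finite union and complement and JSL boxes may range over any regular set; $\items[J_0,\dots,J_{n-1}]\mapsto\bigwedge_{0\le t<n}\bx_{t:t}\psi_{J_t}$; $\additionalItems\,J\mapsto\bx_{n:+\infty}\psi_J$ with $n$ the length of the companion $\items$ array (or $0$ if absent); $\uniqueItems\true\mapsto\neg\isarray\vee\uniq$; the cardinality keywords map to guarded $\minch$ and $\maxch$; and the Boolean family maps directly, $\anyOf$ to $\bigvee$, $\allOf$ to $\bigwedge$, $\notj$ to $\neg$, and $\enum[A_1,\dots,A_n]\mapsto\bigvee_t\compare(A_t)$. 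Each individual equivalence would be checked against the semantics of \cite{PRSUV16}.

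For the converse direction, the Boolean cases are immediate ($\top\mapsto\{\}$, $\neg\mapsto\notj$, $\wedge\mapsto\allOf$, $\vee\mapsto\anyOf$), and the node tests translate back through the same correspondences read in reverse ($\isobject\mapsto\{\type\obj\}$, $\patternlogic(e)\mapsto\{\type\str,\pattern\,e\}$, $\minim(i)\mapsto\{\type\num,\mini\,i\}$, $\uniq\mapsto\{\type\arr,\uniqueItems\true\}$, $\compare(A)\mapsto\{\enum[A]\}$, and so on). For the modalities I would first record the De Morgan identities $\dm_e\phi\equiv\neg\bx_e\neg\phi$ and $\dm_{i:j}\phi\equiv\neg\bx_{i:j}\neg\phi$ --- which stay valid \emph{including} at nodes of the wrong type, where both sides are false --- so that it suffices to encode boxes and close under $\notj$. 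A box $\bx_e\phi$ becomes $\{\patProp\{e{:}S_\phi\}\}$ (inert off objects, exactly as its semantics requires); $\bx_{i:j}\phi$ with $j$ finite becomes $\{\items[\underbrace{\{\},\dots,\{\}}_{i},\underbrace{S_\phi,\dots,S_\phi}_{j-i+1}]\}$; and $\bx_{i:+\infty}\phi$ becomes $\{\items[\underbrace{\{\},\dots,\{\}}_{i}],\ \additionalItems\,S_\phi\}$ (using the single-schema form $\{\items\,S_\phi\}$ when $i=0$). A parallel structural induction then shows $\json(n)$ validates against $S_\phi$ iff $(J,n)\models\phi$.

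The hard part, I expect, is the schema-to-logic direction, in two respects. First, $\addProp$: one must see that the keys it governs form exactly the regular language obtained by deleting the declared property names and all the pattern languages from $\Sigma^*$, and one must honour JSON Schema's convention that a key matched by \texttt{properties} and/or by \texttt{patternProperties} is controlled by those keywords and never by \texttt{additionalProperties}, even when it also matches a pattern. Second, and more pervasively, the per-keyword conjunction $\bigwedge_{p\in S}\tau(p)$ faithfully models $S$ only if each $\tau(p)$ is correctly relativized to the type of $p$; getting this right is precisely what makes schemas that omit the \texttt{"type"} keyword come out correctly, and it is the step to verify with care before the ``equivalence at every node'' invariant can be pushed through the induction. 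By comparison, the logic-to-schema direction should be largely bookkeeping once the De Morgan encoding of the existential modalities is in place, the only point needing a check being that those encodings behave correctly at nodes whose type does not match the modality, which they do because both sides are then false.
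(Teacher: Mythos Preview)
Your proposal is correct and follows essentially the same route as the paper: a structural induction in each direction, with the complement-regex construction for \texttt{additionalProperties} as the key technical point in the Schema-to-JSL direction, and the De~Morgan reduction of existential to universal modalities (plus the $\items[\{\},\dots,\{\},S_\phi,\dots,S_\phi]$ / $\additionalItems$ encoding of array boxes) in the JSL-to-Schema direction. Your per-keyword decomposition with explicit type guards $\neg\isobject\vee(\cdot)$ etc.\ is in fact a bit more careful than the paper's, which groups keywords by the governing \texttt{"type"} and conjoins once with the corresponding $\isobject/\isarray/\isstring/\isnumber$ atom; the two formulations are interchangeable.
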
 



\smallskip 
\noindent
\textbf{Comparing JSL and JNL}. Next, we consider how the navigation logic of Section \ref{sec:nav} compares to the schema logic JSL.
Even though the starting point of the two logics is different, we next show that
the two logics are essentially the same, their expressivity differing simply because of the different atomic predicates. 
More precisely, we have:

\begin{theorem}\label{thm-jsn_jnl}
Non-deterministic JNL not using the equality $\EQ(\alpha,\beta)$ and non-deterministic JSL using only the node test $\sim(A)$ are equivalent in terms of  expressive power. More precisely:
\begin{itemize}
\item For every formula $\varphi^S$ in JSL there exists a unary formula $\varphi^N$ in JNL such that for every JSON $J$: $$\sem{\varphi^N}_J = \{n\in J \mid (J,n)\models \varphi^S\}.$$
\item For every unary  formula $\varphi^N$ in JNL there exists a $\varphi^S$ in JSL such that for every JSON $J$: $$\sem{\varphi^N}_J = \{n\in J \mid (J,n)\models \varphi^S\}.$$
\end{itemize}
\end{theorem}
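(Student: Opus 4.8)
The plan is to prove the two directions by mutual structural induction, translating formulas on the fly. Since both logics are built on the same navigational skeleton (composition, boolean operations, and the two kinds of child axes, parametrised by a regular expression $e$ or an interval $i{:}j$), the only real content is matching up the atomic/modal primitives. The key observation is that JSL's modalities $\dm_e, \dm_{i:j}$ are exactly the "diamond" forms of JNL's path expressions, while $\bx_e,\bx_{i:j}$ are the universal duals, and the node test $\compare(A)$ is exactly $\EQ(\epsilon,A)$; conversely every JNL binary formula $\alpha$ can be pushed onto a unary JSL formula via a standard $\langle\alpha\rangle\varphi \leadsto$ "apply the box/diamond spine to $\varphi$" rewriting.

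First I would set up the direction from JSL to JNL. Given a JSL formula $\varphi^S$, define $\varphi^N$ by induction: $\top, \neg, \wedge, \vee$ are carried over verbatim; the node test $\compare(A)$ becomes $\EQ(\epsilon, A)$; the diamond $\dm_e\,\phi$ becomes $[\child_e \circ \langle \phi^N\rangle]$ and $\dm_{i:j}\,\phi$ becomes $[\child_{i:j}\circ\langle\phi^N\rangle]$; and the box $\bx_e\,\phi$ becomes $\neg[\child_e\circ\langle\neg\phi^N\rangle]$, with $\bx_{i:j}$ analogous. A routine check against the semantic clauses for $\sem{\cdot}_J$ and $\models$ shows $\sem{\varphi^N}_J = \{n \mid (J,n)\models\varphi^S\}$; the box case uses that "$\phi$ holds at all $e$-children" is the negation of "some $e$-child satisfies $\neg\phi$". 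Note this direction only ever produces JNL formulas whose equality operator is the unary $\EQ(\alpha,A)$ with $\alpha=\epsilon$, well within the fragment allowed.

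For the converse, from JNL to JSL, the subtlety is that JNL has genuine \emph{binary} formulas whereas JSL is purely unary, so I would prove the stronger statement by simultaneous induction on unary formulas $\varphi$ and binary formulas $\alpha$: for each unary $\varphi^N$ a JSL formula $\varphi^S$ with $\sem{\varphi^N}_J=\{n\mid (J,n)\models\varphi^S\}$, and for each binary $\alpha$ and each already-translated JSL formula $\chi$, a JSL formula $T(\alpha,\chi)$ such that $(J,n)\models T(\alpha,\chi)$ iff there is $n'$ with $(n,n')\in\sem{\alpha}_J$ and $(J,n')\models\chi$ — i.e. $T(\alpha,\chi)$ expresses $[\alpha\circ\langle\chi\rangle]$. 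The clauses: $T(\epsilon,\chi)=\chi$; $T(\langle\phi\rangle,\chi)=\phi^S\wedge\chi$; $T(\child_e,\chi)=\dm_e\,\chi$ and $T(\child_{i:j},\chi)=\dm_{i:j}\,\chi$; and $T(\alpha\circ\beta,\chi)=T(\alpha,T(\beta,\chi))$. Then the unary translation is $[\alpha]\mapsto T(\alpha,\top)$, $\EQ(\alpha,A)\mapsto T(\alpha,\compare(A))$, with booleans carried over; since we have excluded $\EQ(\alpha,\beta)$, these are all the cases. Well-foundedness of the mutual recursion is clear because $T$ recurses on strictly smaller binary subformulas and only ever calls the unary translation on strict subformulas.

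The main obstacle — really the only place requiring care — is making the mutual induction airtight: the JNL-to-JSL direction cannot translate a binary $\alpha$ in isolation, so the invariant must be stated relative to a continuation formula $\chi$, and one must check that the size of $T(\alpha,\chi)$ stays finite (it does: $|T(\alpha,\chi)|$ is linear in $|\alpha|+|\chi|$ because composition just nests, it does not duplicate). A secondary point worth a sentence is that the grammar of non-deterministic JNL as displayed drops the $\EQ$ and $[\alpha]$ productions from the unary part, but these are inherited from the base JNL definition; I would state explicitly that "non-deterministic JNL" means base JNL with $\child_w,\child_i$ replaced by $\child_e,\child_{i:j}$, so that the unary operators $[\alpha]$, $\EQ(\alpha,A)$, $\EQ(\alpha,\beta)$ are all present, and we work in the sub-logic without $\EQ(\alpha,\beta)$. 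Everything else is a direct unwinding of the two semantics side by side. \Boxtheorem
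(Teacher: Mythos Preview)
Your proof is correct and follows essentially the same approach as the paper: the JSL-to-JNL direction is identical, and for JNL-to-JSL your continuation-passing translation $T(\alpha,\chi)$ is just a cleaner packaging of the paper's technique, which translates each binary $\alpha$ to a JSL formula carrying a distinguished placeholder $\top_\alpha$ and then composes by substituting into that placeholder. As a side remark, your observation that $|T(\alpha,\chi)|$ stays linear in $|\alpha|+|\chi|$ (the non-deterministic binary grammar has no union, so the continuation is never duplicated) is in fact sharper than the paper, which claims the JNL-to-JSL direction can blow up exponentially.
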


In the proof above, we also show that going from JSL to JNL takes only polynomial time, while the transition in the other direction can be exponential. This implies that the upper bounds for JNL are valid for JSL, while the lower bounds transfer in the opposite direction (without taking into account node tests). 

\medskip 
\noindent
\textbf{Algorithmic Properties}. Since JSL is designed to be a schema logic to validate trees, we 
specify a boolean {\sc Evaluation} problem: the input is a JSON $J$ and a JSL expression $\phi$, and 
we decide whether $J \models \phi$.


\begin{proposition}
\label{prop-jsl-eval}
The \textsc{Evaluation} problem for JSL can be solved in time $O(|J|^2 \cdot |\phi|)$, and in 
$O(|J| \cdot |\phi|)$ when $\phi$ does not use the uniqueItems keyword. 
\end{proposition}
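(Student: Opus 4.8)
The plan is to proceed by a bottom-up dynamic programming over the subformula structure of $\phi$, combined with a single bottom-up pass over the JSON tree $J$ for each subformula. Concretely, I would first fix an enumeration $\phi_1,\dots,\phi_k$ of the subformulas of $\phi$ in order of increasing size, so that each $\phi_t$'s truth value at a node depends only on values of strictly smaller subformulas (for the boolean cases and the node tests) or on values of $\phi_t$'s immediate subformula at the children of the current node (for the modal operators $\bx_e,\bx_{i:j},\dm_e,\dm_{i:j}$). For each subformula I maintain a boolean table indexed by the nodes of $J$. For boolean connectives and for the simple node tests $\isarray,\isobject,\isstring,\isnumber,\minim(i),\maxim(i),\mof(i),\minch(i),\maxch(i)$, each node is processed in constant (or amortised constant) time, for a total of $O(|J|\cdot|\phi|)$; for the pattern test $\patternlogic(e)$ and the modal operators that carry a regular expression $e$, I would first build an NFA for $e$ and then, at each relevant edge, check membership of the edge label by running the automaton — here the cost of a single test is linear in the label length, and summing over all edges gives $O(|J|)$ per occurrence of such an operator, again $O(|J|\cdot|\phi|)$ overall. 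The modal operators are handled by a single sweep over the children of each node: for $\dm_e\psi$ we OR over children reached by an edge whose label is in $L(e)$ and at which $\psi$ holds, and dually for $\bx_e$; the interval variants are analogous but simpler since the array index is read directly off $\Achild$.

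The two genuinely nontrivial ingredients are the predicates $\compare(A)$ and $\uniq$, because both involve comparing whole subtrees rather than labels of bounded size. For $\compare(A)$ we need, for a fixed document $A$ appearing in $\phi$, to decide at every node $n$ whether $\json(n)=A$. I would do this with a standard bottom-up tree-equality computation: hash or canonically number the subtrees of both $J$ and $A$ simultaneously (a Merkle-style signature that incorporates the node type, its value if atomic, and, for objects/arrays, the multiset resp. sequence of (label, child-signature) pairs), so that two subtrees get the same signature iff they are equal as JSON. Sorting the children signatures of each object node to account for the unorderedness of objects costs $O(d\log d)$ per node of out-degree $d$, which sums to $O(|J|\log|J|)$ in the worst case — to stay within the claimed $O(|J|^2\cdot|\phi|)$ this is comfortably absorbed, and in fact one can keep it near-linear with radix sorting; then $\json(n)=A$ reduces to a signature comparison. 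Since $A$ is part of $\phi$, $|A|\le|\phi|$, so this whole step costs $O((|J|+|\phi|)\log(\dots))$, well inside the stated bound.

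The $\uniq$ predicate is where the quadratic factor in the general bound comes from, and I expect this to be the main obstacle to getting a tight analysis. At an array node $n$ with children $n\cdot 0,\dots,n\cdot(m-1)$, we must decide whether all the subtrees $\json(n\cdot 0),\dots,\json(n\cdot(m-1))$ are pairwise distinct. Using the subtree signatures computed above, this is just a test for duplicates in a list of $m$ signatures, which can be done in $O(m\log m)$ (or $O(m)$ with hashing) — so with the signature machinery $\uniq$ is actually cheap, and this is precisely why the proposition promises the better $O(|J|\cdot|\phi|)$ bound when $\uniqueItems$ is absent but only $O(|J|^2\cdot|\phi|)$ in general: without relying on hashing/sorting of full subtree encodings, the naive way to evaluate $\uniq$ is to compare each pair $\json(n\cdot a),\json(n\cdot b)$ directly, each comparison costing up to $O(|J|)$, and there can be $\Omega(|J|)$ such comparisons across the tree, giving $O(|J|^2)$ per occurrence of the keyword and hence $O(|J|^2\cdot|\phi|)$ overall. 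So the write-up would present the simple quadratic algorithm for the general statement, observe that it is dominated by the cost of evaluating $\uniq$ nodes, and then note that dropping $\uniqueItems$ removes the only source of superlinear behaviour, leaving the linear bound; I would also remark that even the general bound can be improved with subtree canonicalisation, but that is not needed for the statement as phrased. The correctness of the whole procedure is then a routine induction on subformula structure, matching each clause of the algorithm against the corresponding clause in the definition of $\models$.
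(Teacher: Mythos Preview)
Your approach is correct, but it differs from the paper's. The paper does not give a direct algorithm: instead it invokes Theorem~\ref{thm-jsn_jnl} to translate the JSL formula into an equivalent non-deterministic JNL formula without the binary equality $\EQ(\alpha,\beta)$, treating each member of $\nodetests$ as an additional atomic unary predicate, and then appeals to the linear-time JNL evaluation result (Proposition~\ref{prop-JNL-rec-eval}). The only extra work is precomputing the truth of each node test at every node; the paper simply asserts this is $O(|J|\cdot|\phi|)$ for all tests except $\uniq$, which needs $O(|J|^2)$.

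Your direct bottom-up dynamic programming is a genuinely different and more elementary route: it avoids the detour through JNL and the (non-trivial) machinery behind Proposition~\ref{prop-JNL-eval}, and it makes the source of the quadratic term completely transparent. The paper's proof is shorter precisely because it outsources the real work to already-established results. One small remark: your treatment of $\compare(A)$ via full subtree canonicalisation is heavier than necessary---since $A$ is fixed and $|A|\le|\phi|$, a bottom-up pass that records, for each node $n$ of $J$, the set of subtrees of $A$ equal to $\json(n)$ gives the $O(|J|\cdot|\phi|)$ bound without any sorting/hashing subtleties, which keeps the linear case clean.
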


From Theorem \ref{schema2logic} we obtain as a corollary that the validation problem for JSON Schema 
has the same bounds. This was  already shown in \cite{PRSUV16}.
Next, we study the \textsc{Satisfiability} problem, which receives a formula $\phi$ as input and 
consists of deciding  whether there is any JSON tree $J$ such that $J \models \phi$. 
Here we need to be careful with the encoding we choose, as the interplay between 
$\uniq$ and $\dm_i \top$ immediately forces a node with exponentially many different children when 
$i$ is given in binary. In terms of results, this means that our algorithms raise by one exponential, 
although we don't know if this increase is actually unavoidable. 


\begin{proposition}
\label{prop-jsl-sat}
The \textsc{Satisfiability} problem for JSL is in \expspace, and \pspace-complete for expressions without $\uniq$. 
\end{proposition}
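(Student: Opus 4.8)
The plan is to run a Ladner--Pratt-style type-elimination procedure on JSON trees. Since a JSON document is a tree and JSL has no recursion operator, a formula $\phi$ of modal nesting depth $d=O(|\phi|)$ is satisfiable iff it has a model which is a tree of depth at most $d$: below any node only subformulas of strictly smaller modal depth are relevant. We therefore try to build such a tree ``type by type''. A \emph{node type} is a Hintikka-style set $\tau$ of subformulas of $\phi$ that is closed under the Boolean conditions and decides every modal subformula and every member of $\nodetests$, together with a declared sort ($\Obj$, $\Arr$, $\Str$ or $\Int$). For string and number nodes $\tau$ must additionally commit to a value consistent with the $\patternlogic(e)$, $\minim(i)$, $\maxim(i)$ and $\mof(i)$ literals it contains and with their negations: this is nonemptiness of an intersection of (complemented) NFA languages for strings, and feasibility of a system of interval/divisibility constraints for numbers, both decidable in \pspace\ even with $i$ given in binary. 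A literal $\sim(A)$ in $\tau$ is special: it forces the subtree rooted here to equal the fixed input document $A$, so we instead check, using the evaluation procedure of Proposition~\ref{prop-jsl-eval}, that $A$ itself satisfies every other formula of $\tau$ (and that no two $\sim(A),\sim(A')$ with $A\neq A'$ occur together), and do not recurse below.

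At a node of type $\tau$ we collect its \emph{child obligations}. For an array node these are the formulas $\dm_{i:j}\psi\in\tau$ (some position in $[i,j]$ carries $\psi$), the $\bx_{i:j}\psi\in\tau$ (all positions in $[i,j]$ carry $\psi$), the counting tests $\minch(k),\maxch(k)$, and $\uniq$; for an object node they are the $\dm_e\psi$, $\bx_e\psi$ and counting tests. We guess a small family of child types $\tau_1,\dots,\tau_m$ with $m\le|\phi|$ and, for each, a multiplicity $N_t$, and then verify \emph{realizability}: for arrays, that a length $\ell=\sum_t N_t$ with positions assigned types among the $\tau_t$ can be chosen so that every $\bx_{i:j}$ is respected at the relevant positions, every $\dm_{i:j}$ is witnessed, and $\ell$ lies in the window imposed by all $\minch/\maxch$; for objects, that each $\tau_t$ can be assigned an edge-word \emph{signature} --- the set of regexes occurring in $\phi$ whose language it lies in --- compatible with every $\dm_e$ and $\bx_e$ literal, each such signature being realizable (NFA intersection nonemptiness) and, when several objects share a signature, its language large enough to supply that many distinct keys (an infinitude/pumping check on the NFA). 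Finally we recurse into each $\tau_t$. When $\uniq$ does not occur this is an alternating polynomial-time procedure: the recursion depth is the modal depth $O(|\phi|)$; each activation record holds a constant number of types and their multiplicities, which are kept in binary since they matter only against the $\minch/\maxch$ thresholds (also binary); and all auxiliary NFA and arithmetic tests lie in \pspace. Hence this case is in \pspace. Matching \pspace-hardness is inherited from non-deterministic non-recursive JNL without $\EQ(\alpha,\beta)$ through the polynomial, $\uniq$-free translation of Theorem~\ref{thm-jsn_jnl} together with Proposition~\ref{prop-JNL-rec-sat-dec}; indeed it already follows, as noted after that proposition, from the \pspace-hardness of regular-expression universality, since a formula such as $\dm_{\Sigma^*}\top\wedge\bx_e(\neg\top)$ is unsatisfiable exactly when $L(e)=\Sigma^*$.

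The extra exponential under $\uniq$ is precisely the phenomenon flagged before the statement: at an array node carrying $\uniq$ the $N_t$ children of a common type must have \emph{pairwise distinct} subtrees, while $\dm_{i:j}\top$ or $\minch(k)$ with the numbers in binary can force $N_t$ to be exponential, so we can no longer replace an exponential block of children by a single ``type with a multiplicity''. The clean way to absorb this is a small-model argument: one shows that a satisfiable JSL formula has a model of at most exponential size --- polynomial depth, but possibly exponential width at $\uniq$-nodes. The point to establish is a ``pinned versus freely realizable'' dichotomy for node types: a type $\tau$ is satisfied by infinitely many trees (so any needed exponential number of distinct children is available, and each can be made distinct by varying one leaf value at negligible cost) unless $\tau$ entails some $\sim(A)$ or all of its value sets, child-count windows and relevant regex languages are finite, in which case the \emph{exact} number of $\tau$-models --- at most doubly exponential, hence comparable to the required count with an exponential-space counter --- can be computed recursively, and a $\uniq$ demand exceeding it is reported unsatisfiable. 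Granting the exponential-size model property, satisfiability is in \nexp\ (guess the model, verify it with the polynomial-time-in-$|J|$ evaluation of Proposition~\ref{prop-jsl-eval}), and $\nexp\subseteq\expspace$. I expect this dichotomy, and the bookkeeping for counting distinct models at pinned types, to be the main technical obstacle; the remainder is a routine adaptation of graded-modal and star-free PDL tableaux.
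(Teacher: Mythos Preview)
Your approach differs substantially from the paper's. For the upper bounds the paper does not run a tableau at all: it invokes the $J$-automata construction built later for recursive JSL (Proposition~\ref{prop-sat-recursive-jsl}) and then exploits the fact that non-recursive JSL forces models of polynomially bounded height, so that bounded-height trees can be encoded as strings and handled by a word automaton in the style of Benedikt--Fan--Geerts; the extra exponential under $\uniq$ appears because that string automaton must examine an exponential number of children per node. Your Hintikka-type elimination is a legitimate and more self-contained alternative for the $\uniq$-free \pspace\ upper bound, but two points need correction.

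First, your \pspace-hardness argument via Proposition~\ref{prop-JNL-rec-sat-dec} and Theorem~\ref{thm-jsn_jnl} runs in the wrong direction. Theorem~\ref{thm-jsn_jnl} gives a \emph{polynomial} translation only from JSL to JNL; the JNL-to-JSL direction can be exponential. Lower bounds therefore transfer from JSL to JNL, not the reverse --- and indeed the paper's own proof of Proposition~\ref{prop-JNL-rec-sat-dec} cites Proposition~\ref{prop-jsl-sat} for its hardness, so your citation is circular. Your fallback via regular-expression universality, using $\dm_{\Sigma^*}\top\wedge\bx_e\neg\top$, is correct and suffices for the stated result. The paper instead gives a direct QBF reduction, which is heavier but yields the sharper statement that hardness already holds when the only non-word regex used is $\Sigma^*$.

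Second, your \expspace\ argument is, by your own admission, a plan rather than a proof: you state the exponential-model property and the pinned/freely-realizable dichotomy as what must be established, not as established. The obstacle you identify is genuine --- under $\uniq$ a child type may have only finitely many models and one must count them and compare against a binary threshold --- and the bookkeeping under nested $\uniq$, interaction with $\sim(A)$, and the polynomial depth bound is not routine. The paper avoids this entirely by going through automata; if you want a tableau-style proof you will have to carry the counting argument through in full.
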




We remark that the \textsc{Satisfiability} problem is important in the context of JSON Schema. For example, 
the community has repeatedly stated the need for algorithms that can learn JSON Schemas from examples. We believe that understanding 
basic tasks such as satisfiability are the first steps to proceed in this direction. 

\subsection{Adding recursion}

The JSON Schema specification also allows defining statements that 
will be reused later on. We have so far ignored this functionality, and 
to capture it we will need to define the same operator in our logic. 
As we will see, this lifts the expressive power of JSL away from even the recursive version 
of our navigational logic; and is very similar to certain forms of tree automata. 

Let us explain first how recursion is added into JSON Schema.  The idea is to allow to an additional 
keyword, of the form \verb+{$ref: <path>}+, where \verb+<path>+ is a navigation instruction. 
This instruction is used within the same document to fetch other schemas that have been predefined 
in a reserved \verb+definitions+ section\footnote{Definitions and references can also be used to fetch schemas in 
different documents or even domains; here we just focus on the recursive functionality.}.
For example, the following schema validates against any JSON which is not a string 
following the specified pattern. 

\beforeverbatim
{\footnotesize
\begin{verbatim}
{
    "definitions": {
        "email": {
            "type": "string", 
            "pattern": "[A-z]*@ciws.cl"
        }
    }, 
    "not": {"$ref": "#/definitions/email"}
}
\end{verbatim}
}
\afterverbatim

As we have mentioned, different schemas are defined under the \verb+definitions+ section of the JSON document, and 
these definitions can be reused using the \verb+$ref+ keyword. In the example above, we use 
\verb+{"$ref": "#/definitions/email"}+ to retrieve the schema 
\verb+{"type": "string",+ \verb+"pattern": "[A-z]*@ciws.cl"}+. These definitions can be nested within 
each other, but semantics is currently defined only for a fragment with limited recursion. We come back to this issue 
after we define a logic capturing these schemas. 

\smallskip
\noindent
\textbf{Recursive JSL}. The idea of this logic is to capture the recursive functionalities present in JSON Schema: 
there is a special section where one can define new formulas, which can be then re-used in other formulas.  

Fix an infinite set $\Gamma = \{ \gamma_1,\gamma_2,\gamma_3,\dots\}$ of symbols. 
A \emph{recursive} JSL formula is an expression of the form 
\begin{eqnarray}
\nonumber
\gamma_1 &= &\phi_1 \\
\nonumber
\gamma_2 &= &\phi_2 \\
\nonumber
&\vdots &\\
\nonumber
\gamma_m &=& \phi_m \\
\psi & &  \label{eqn-rec-jsl}
\end{eqnarray}
where each $\gamma_i$ is a symbol in $\Gamma$ and $\phi_i$, $\psi$ are JSL formulas over the extended syntax that allows 
$\gamma_1,\dots,\gamma_m$ as atomic predicates. Here each equality $\gamma_i = \phi_i$ is called a definition, and $\psi$ is called the base expression. 

The intuition is that each $\gamma_i$ is one of the references of JSON Schema. 
Before moving to the semantics, let us show the way recursive JSL formulas work.

\begin{example}
\label{exa-paths-even-length}
Consider the expression $\Delta$ given by 
$$\begin{array}{l}
\gamma_1 = \bx_{\Sigma^*} \gamma_2\\
\gamma_2 = (\dm_{\Sigma^*} \top) \wedge (\bx_{\Sigma^*} \gamma_1 ) \\
\gamma_1
\end{array}$$
Intuitively, $\gamma_1$ holds in a node $n$ if $n$ either has no children, or if $\gamma_2$ holds all of its its children. 
On the other hand, $\gamma_2$ holds  in a node $n$ if this node has at least one child, and $\gamma_1$ holds in all children of $n$. 
Finally, the  base expression simply states that $\gamma_1$ has to hold in the root of the document (recall that a schema statement is evaluated at the root). The intuition for $\Delta$ is, then, that it should hold on every tree such that each path from the root to the 
leaves is of even length. 
\end{example}

\smallskip
\noindent
\textbf{Well-formed recursive JSL}. As usual in formalisms that mix recursion and negation, giving a formal semantics for 
JSON Schema is not a straightforward task. 
As an example of the problems we face, consider the following  JSL expression. 
$$\begin{array}{l}
\gamma_1 = \neg  \gamma_1\\
\gamma_1
\end{array}$$
Of course, we can always give a logical interpretation to this formula, but 
we argue that this expression does not specify any real restriction on JSON documents, and thus 
any semantics we establish will not be intuitive from the point of view of defining schemas for JSON. 

The most straightforward way to avoid these issues is by imposing a strict acyclicity condition on definitions. 
However, we can in fact work with a much milder restriction that we call \emph{well-formedness}. 
This restriction was introduced in \cite{PRSUV16} for the case of JSON Schema, but  
we can seemingly define well-formed recursive JSL expressions, which can then be shown to capture well-formed recursive JSON Schemas. 

For a recursive JSL expression $\Delta$ of the form (\ref{eqn-rec-jsl}) defined above, we define the precedence graph of $\Delta$ as a graph whose nodes 
are $\gamma_1,\dots,\gamma_m$ and where there is an edge from $\gamma_i$ to $\gamma_j$ if $\gamma_j$ appears in the expression $\phi_i$ of 
the definition  $\gamma_i = \phi_i$, but only if this appearance is {\em not under the scope of a modal operator}. 
We then say that $\Delta$ is \emph{well-formed} if its precedence graph is acyclic. 

\begin{example}
The definition $\gamma_1 = \neg  \gamma_1$ clearly creates a cyclic precedence graph, as the construction mandates a self-loop in the node 
corresponding to $\gamma_1$. On the other hand, the recursive JSL expression in Example \ref{exa-paths-even-length} is indeed 
well-formed. It does introduce cycles in the definitions, but no edges are added into the precedence graph of such expression 
because formula symbols are always under the scope of a modal operator. 
\end{example}

\smallskip
\noindent
\textbf{Semantics}. How do we then define the semantics of well-formed expressions? 
If $\Delta$ is a recursive JSL expression that is completely acyclic, then we can simply replace the symbols in 
$\Gamma$ by their respective definitions. But we cannot do this with every well-formed expression, because 
some of them can have cycles in the definitions (under a scope of a modal operator), and thus we would never stop replacing symbols. This is the case, for instance, with the expression in Example \ref{exa-paths-even-length}. 

However, the key thing to notice is that we only need to do this as many times as the height of the JSON tree 
we are trying to validate. More precisely, let $J$ be a JSON tree 
of height $h$ and $\Delta$ a well-formed recursive JSL expression of the form (\ref{eqn-rec-jsl}). 
Construct a (non-recursive) JSL expression $\unfold_J(\psi)$ by replacing each symbol $\gamma_i$ in $\psi$ by the corresponding definition $\phi_i$, 
but stop once every symbol from $\Gamma$ is under the scope of at least $h+1$ modal operators. Afterwards, replace all the remaining symbols $\gamma_i$ for the symbol $\bot$ (shorthand for $\neg \top$). 
If $\Delta$ is well-formed, then this procedure is guaranteed to stop, because every time we come back to the same symbol $\gamma$ 
when replacing, we know that this symbol has to be under the scope of at least one more modal operator. 
Then, we define the satisfaction relation based on the satisfaction of the constructed formula $\psi$, so that $J \models \Delta$ 
if and only if $J \models \unfold_J(\psi)$. 

\begin{example}
Suppose that we need to evaluate the expression in Example \ref{exa-paths-even-length} over the tree $J$ of height 4. 
Then we would only need to keep rewriting $\psi$ until all of $\gamma_1$ and $\gamma_2$ are under at least 5 modal operators. 
The query obtained corresponds to 
$$\bx_{\Sigma*} \big(\dm_{\Sigma^*} \top \wedge \bx_{\Sigma*}\bx_{\Sigma*} (\dm_{\Sigma^*} \top \wedge \bx_{\Sigma*}\bx_{\Sigma*} \gamma_2) \big)$$
To finalise we create the formula $\unfold_J(\gamma_1)$ by replacing the symbol $\gamma_2$ in the expression above for $\bot$. Now the evaluation of the original expression over $J$ corresponds to the evaluation 
of $\unfold_J(\gamma_1)$ over $J$. 
\end{example}

\smallskip
\noindent
\textbf{Expressive Power}. As promised, one can show that recursive JSL captures the recursive definition of JSON Schema. 
We use the same notation as for Theorem  \ref{schema2logic}.


\begin{theorem}
  \label{schema2logic-recursive}
 Well-formed recursive JSL and well-formed recursive JSON Schema are equivalent in expressive power. 
\end{theorem}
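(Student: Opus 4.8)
The plan is to reduce this to the non-recursive case (Theorem \ref{schema2logic}) by exhibiting a tight correspondence between the recursion mechanisms of the two formalisms: the \texttt{definitions}/\texttt{\$ref} block of JSON Schema on one side, and the equation block $\gamma_1 = \phi_1, \dots, \gamma_m = \phi_m$ of recursive JSL on the other. First I would recall, from the proof of Theorem \ref{schema2logic}, the two polynomial translations: a map $S \mapsto \psi_S$ sending each non-recursive JSON Schema to an equivalent JSL formula, and a map $\phi \mapsto S_\phi$ in the other direction. The key structural observation to establish is that both translations are \emph{compositional with respect to subschemas}: the translation of a schema is built homomorphically from the translations of the schemas nested inside its keywords (\texttt{properties}, \texttt{items}, \texttt{anyOf}, \texttt{not}, etc.), each nested schema sitting under exactly the modal operator that the corresponding keyword induces ($\bx_w$ for \texttt{properties} entry $w$, $\dm_w$ for \texttt{required}, $\bx_{i}$ for the $i$-th \texttt{items} slot, and so on). This is precisely the property that lets the translations respect the precedence-graph / well-formedness condition.

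Next, for the direction from well-formed recursive JSON Schema to well-formed recursive JSL: given a schema with definitions $d_1 = S_1, \dots, d_m = S_m$ and base schema $S$, introduce a recursion symbol $\gamma_i$ for each definition $d_i$, set $\gamma_i = \widehat{S_i}$ and base expression $\widehat{S}$, where $\widehat{\,\cdot\,}$ is the compositional translation of Theorem \ref{schema2logic} extended so that an occurrence of \verb+{"$ref": d_i}+ is translated to the atomic predicate $\gamma_i$. I would then argue that the precedence graph of the resulting recursive JSL expression is (isomorphic to a subgraph of) the precedence graph of the JSON Schema as defined in \cite{PRSUV16}: an edge $\gamma_i \to \gamma_j$ is added only when the \verb+$ref+ to $d_j$ occurs in $S_i$ \emph{outside the scope of any keyword that induces a modal operator}, which is exactly the JSON Schema well-formedness condition. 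Hence well-formedness is preserved. Equivalence of the semantics then follows by matching the two unfolding processes: for a JSON tree $J$ of height $h$, the $\unfold_J$ procedure of recursive JSL rewrites symbols until all are under $h+1$ modal operators and then sets the rest to $\bot$; JSON Schema validation on $J$ (per the semantics of \cite{PRSUV16}) likewise only ``unrolls'' \verb+$ref+s finitely, as deep as the tree goes, and a bounded-depth document cannot witness anything beyond that depth — so I would prove by induction on $h$ (and on the structure of the unfolded formula, using the compositionality above) that $J \models \Delta$ iff $J$ validates against the original recursive schema. The converse direction, from recursive JSL to recursive JSON Schema, is entirely symmetric, using the map $\phi \mapsto S_\phi$, placing each definition $\gamma_i = \phi_i$ into a \texttt{definitions} entry and translating atomic occurrences of $\gamma_i$ back to \verb+{"$ref": d_i}+; the same precedence-graph argument shows well-formedness is preserved in this direction too.

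The main obstacle I anticipate is not the translation itself but the \emph{semantic alignment of the two unfolding/fixpoint definitions}. Recursive JSL is given a concrete, height-bounded unfolding semantics (with leftover symbols replaced by $\bot$), whereas the semantics of recursive JSON Schema in \cite{PRSUV16} is phrased in its own terms; I must verify these coincide rather than merely ``morally agree.'' The delicate point is the $\bot$-substitution at the cutoff depth: I need the fact that once every remaining reference sits under more modal operators than the height of $J$, no node of $J$ can reach a witness for it, so substituting $\bot$ (rather than $\top$, or leaving it) is harmless — and that this is exactly what the JSON Schema semantics does implicitly. I would isolate this as a lemma: for any well-formed recursive expression and any $J$ of height $h$, $\unfold_J$ is independent (as far as $J\models\cdot$ is concerned) of the choice of cutoff beyond $h+1$ and of how leftover symbols are instantiated. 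With that lemma in hand, the inductive proof of semantic equivalence, and hence the whole theorem, goes through routinely by leaning on Theorem \ref{schema2logic} at the base (non-recursive) level.
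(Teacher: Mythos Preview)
Your proposal is correct and follows essentially the same approach as the paper: reduce to Theorem~\ref{schema2logic} by sending each JSON Schema definition to a JSL definition (and each \texttt{\$ref} to the corresponding symbol $\gamma_i$), with the converse direction symmetric. The paper's proof is in fact a three-sentence sketch that does not spell out the precedence-graph correspondence or the semantic alignment of the two unfoldings; your plan fills in exactly those details, so if anything you are being more careful than the paper.
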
 

Comparing JSL to navigational logic JNL, we can again show that different atomic predicates force them to have different expressive power: for instance JSL can not express the $\EQ(\alpha,\beta)$ operator, while JNL can not cover unique items. On the other hand, if $\EQ(\alpha,\beta)$ is not allowed, then we conjecture recursive non-deterministic JNL to be strictly less expressive than recursive JSL without unique items, due to the more powerful form of recursion available in the language.

\smallskip

%

We finish with a few remarks on the expressive power of recursive JSL expressions. First, let us note that, even if one would 
like to compare recursive JSL or JSON Schema 
against XML Schema definitions such as DTDs or XML Schema, or even to Monadic Second Order (MSO), 
we cannot do it in a direct way since the models 
of XML and JSON have several important differences, and it is not immediate to express JSON as a relational 
structure. 
However, just for the sake of establishing a comparison we can assume that the set $\Sigma^*$ of possible keys is 
fixed and finite, and that we do not consider arrays. We can then just focus on a relational representation of 
JSON that has one binary relation $\Ochild_w$ for each word $w$ in our set of keys, plus all predicates specified in $\nodetests$. 
We can then show the following (as usual MSO is said be equivalent to JSL if for every JSL expression we 
can create a boolean MSO formula accepting the same trees, and vice-versa): 
\begin{proposition}
\label{prop-jsl-mso}
Well-formed recursive JSL is equivalent to MSO, if the set $\Sigma^*$ of possible keys is 
fixed and finite, and that we do not consider arrays.
\end{proposition}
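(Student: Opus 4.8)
The plan is to prove the two directions separately: that every well-formed recursive JSL expression can be translated into an equivalent boolean MSO formula, and conversely that every MSO formula over the fixed, finite signature $\{\Ochild_w\}_{w \in \Sigma^*}$ together with the $\nodetests$ predicates can be captured by a well-formed recursive JSL expression. Throughout we work with the relational encoding described just before the statement: a JSON tree becomes a structure with one binary relation $\Ochild_w$ for each of the finitely many keys $w$, together with unary predicates for $\isobject$, $\isstring$, $\isnumber$, and for each relevant instance of $\patternlogic(e)$, $\minim(i)$, $\maxim(i)$, $\mof(i)$, $\minch(i)$, $\maxch(i)$, $\compare(A)$ (note that, over a fixed finite key set and with arrays excluded, $\uniq$ and $\maxch(i)$ become MSO-definable unary properties, and there are only finitely many distinct node-test predicates that can occur in a given formula, so each is just a monadic relation symbol).

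For the direction \emph{JSL $\to$ MSO}, I would proceed by structural induction, translating each JSL formula $\varphi$ into an MSO formula $\hat\varphi(x)$ with one free first-order variable. Boolean connectives and the atomic $\nodetests$ predicates are immediate. The modalities $\dm_w\phi$ and $\bx_w\phi$ become $\exists y\,(\Ochild_w(x,y) \wedge \hat\phi(y))$ and $\forall y\,(\Ochild_w(x,y) \rightarrow \hat\phi(y))$ respectively; since the key set is finite, a modality $\dm_e\phi$ ranging over a regular set $e$ is just a finite disjunction $\bigvee_{w \in L(e) \cap \Sigma^*} \dm_w\phi$ over the (finitely many) keys appearing, so this reduces to the word case. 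The interesting part is recursion: given a well-formed recursive JSL expression $\Delta$ of the form~(\ref{eqn-rec-jsl}) with definitions $\gamma_1=\phi_1,\dots,\gamma_m=\phi_m$, I would introduce $m$ monadic second-order variables $S_1,\dots,S_m$ and assert, via MSO, that $(S_1,\dots,S_m)$ is the (unique) tuple of sets simultaneously satisfying $S_i = \{x : \hat\phi_i(x, S_1,\dots,S_m)\}$ for all $i$ — here each occurrence of $\gamma_j$ in $\hat\phi_i$ is rendered as ``$x \in S_j$''. Well-formedness guarantees that this fixed point is unique on finite trees: the acyclicity of the precedence graph means the system can be solved in a bottom-up manner (every cyclic dependency among the $\gamma_i$ passes through a modal operator, hence strictly decreases height), which is precisely what the $\unfold_J$ construction in the semantics exploits. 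So one writes $\exists S_1 \cdots \exists S_m\,\big(\bigwedge_i \forall x\,(x \in S_i \leftrightarrow \hat\phi_i(x,\bar S))\big) \wedge \hat\psi(r,\bar S)$ evaluated at the root $r$, and the equivalence with $J \models \unfold_J(\psi)$ follows by induction on tree height.

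For the converse \emph{MSO $\to$ JSL}, the cleanest route is via tree automata. Over the fixed finite key set and without arrays, a JSON tree is (up to relabelling by the finitely many node-test types it satisfies) a finite labelled tree in which each node has at most one $w$-child per key $w$; this is a ranked-tree-like structure over a finite alphabet, and by the Thatcher–Wright / Doner theorem MSO on such trees is equivalent to (bottom-up, finite) tree automata. So it suffices to simulate an arbitrary such automaton $\mathcal{A}$ by a well-formed recursive JSL expression: introduce one symbol $\gamma_q$ for each state $q$ of $\mathcal{A}$, intended to hold at a node $n$ iff the subtree rooted at $n$ is accepted by $\mathcal{A}$ starting in state $q$. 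The defining equation for $\gamma_q$ is a disjunction over transitions of $\mathcal{A}$: a conjunction of (i) node-test predicates describing the label, and (ii) for each key $w$, either $\dm_w \gamma_{q'}$ together with $\bx_w \gamma_{q'}$ (to force the unique $w$-child, if present, into the required state $q'$) or a ``no $w$-child'' condition expressible as $\neg\dm_w\top$, according to the transition's demands; the base expression $\psi$ is $\bigvee_{q \in F}\gamma_q$ for the accepting states $F$. Every $\gamma_{q'}$ appears under a modal operator, so the precedence graph has no edges at all and the expression is trivially well-formed; a straightforward bottom-up induction shows it defines exactly $L(\mathcal{A})$.

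The main obstacle I expect is making the automata-theoretic correspondence genuinely precise for \emph{this} data model rather than for the textbook ranked-tree setting: one must argue carefully that (a) restricting to a finite key set and dropping arrays really does collapse the $\nodetests$ predicates to finitely many MSO-definable monadic markings (in particular that $\uniq$ and the counting predicates $\minch,\maxch$ are MSO-expressible over a fixed finite fan-out alphabet, which they are, since fan-out is bounded by $|\Sigma^*|$), and (b) the ``at most one child per key, keys unordered'' structure is MSO-interdefinable with a standard ranked or unranked tree encoding so that the classical equivalence applies. Once that bridge is established, both translations are routine inductions, and well-formedness falls out for free in the MSO $\to$ JSL direction because the automaton simulation only ever places recursion symbols under modalities.
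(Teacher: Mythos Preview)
Your proposal is correct and follows essentially the same two-step strategy as the paper: for JSL $\to$ MSO you introduce one monadic set variable per recursion symbol and assert the defining equivalences (the paper does exactly this, writing $S_\gamma(x) \leftrightarrow T(\phi)$); for MSO $\to$ JSL you pass through tree automata over bounded-rank trees and simulate the automaton with one recursion symbol $\gamma_q$ per state, all occurrences sitting under a modality so well-formedness is automatic. The paper's proof is much terser and defers the automaton-to-JSL encoding to an external reference, whereas you spell it out; your extra care about why the $\nodetests$ (including $\uniq$, which is vacuous once arrays are excluded, and the counting predicates, which are first-order definable once fan-out is bounded by the finite key set) are MSO-expressible is a detail the paper only gestures at.
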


What about arrays? The first observation is that the presence of arrays introduces the atomic test $\uniq$, which 
can express properties not definable in MSO: 
\begin{example}
The following recursive JSL expression accepts only JSON documents representing complete binary trees:  
$$\begin{array}{l}
\gamma = \neg (\dm_1 \top) \vee \big(\minch(2) \wedge \maxch(2) \wedge \\
\hspace{4cm} \neg \uniq \wedge \bx_{1:2}\gamma \big) \\ 
\gamma
\end{array}$$
Every node is an array with either no child or with two children both satisfying $\gamma$, 
and the $\neg \uniq$ restriction forces the two children of each node to be equal.
\end{example}

Even if we rule out $\uniq$, it is still not easy to exactly pinpoint what JSL can do. On one hand, JSON arrays are ordered, in the 
sense that the first item can be distinguished from the second. But on the other hand the reasoning between elements in arrays in JSON is 
very limited. For example, we cannot use JSL to specify that after an element satisfying a formula $\phi$ we must have an element satisfying another 
formula $\psi$. 


\smallskip
\noindent
\textbf{Evaluation}. 
The first thing to note is that the semantics for recursive JSL, as defined previously, 
leads to very inefficient evaluation algorithms: the rewriting $\unfold_J(\psi)$
may well be of exponential size with respect to the original query, even if $J$ contains a single node. 
However, we can show that evaluating recursive JSL expressions remains in \ptime\ in combined complexity, although 
the succinctness introduced by the possibility of reusing definitions makes the problem \ptime-hard. 
Our algorithm consists on evaluating all subtrees of $J$ in a bottom-up fashion, proceeding to higher  
height levels of $J$ only when all the previous levels have already been computed. The algorithm resembles 
the evaluation of Datalog programs with stratified negation. 

\begin{proposition}
\label{prop-eval-recursive-jsl}
The \textsc{Evaluation} problem for recursive JSL expressions over JSON trees is \ptime-complete in combined complexity. 
\end{proposition}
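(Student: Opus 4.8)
The plan is to prove the two directions of \ptime-completeness separately, with the upper bound being the main technical content.

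\emph{Membership in \ptime.} The idea is to evaluate a well-formed recursive JSL expression $\Delta$ of the form (\ref{eqn-rec-jsl}) over a JSON tree $J$ by a bottom-up pass over the nodes of $J$, exactly as sketched in the text before the statement. First I would observe that, because $\Delta$ is well-formed, its precedence graph is acyclic, so we may topologically sort the symbols $\gamma_1,\dots,\gamma_m$; write $\phi_i^{\flat}$ for the formula obtained from $\phi_i$ by repeatedly substituting, according to this order, each symbol that occurs \emph{outside the scope of any modal operator} by its definition. After this finite substitution, every remaining occurrence of a symbol $\gamma_j$ in $\phi_i^{\flat}$ lies under at least one modality, i.e. it is ``guarded''. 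Now process the nodes of $J$ in order of non-increasing height (leaves first). For a node $n$ at height $h$, assume inductively that for every strict descendant $n'$ of $n$ and every symbol $\gamma_j$ we have already decided whether $(J,n')\models\gamma_j$. Then the truth of a guarded occurrence of $\gamma_j$ at $n$ — appearing inside some $\bx_e$, $\dm_e$, $\bx_{i:j}$ or $\dm_{i:j}$ — depends only on the already-computed values of $\gamma_j$ at the children of $n$; hence $(J,n)\models\gamma_i$ can be evaluated by an ordinary JSL model-checking pass over the single expression $\phi_i^{\flat}$, reusing the bookkeeping from Proposition~\ref{prop-jsl-eval} (including the handling of $\uniq$, regular-expression edge labels, and the $i\!:\!j$ interval tests, which are all local to $n$ and its children). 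Finally $J\models\Delta$ iff $(J,r)\models\psi^{\flat}$ at the root $r$, which is available once the top level is processed. I would then argue the running time is polynomial: there are $|J|$ nodes; at each node we do $m$ evaluations, each over a formula $\phi_i^{\flat}$ whose size is polynomial in $|\Delta|$ (the unguarded-substitution step only duplicates along the acyclic precedence graph, so $|\phi_i^{\flat}|=O(|\Delta|^{m})$ in the worst case — still polynomial in $|\Delta|$ for fixed structure, but one should present a sharper bound by noting that one can memoise the already-substituted $\phi_j^{\flat}$'s, yielding total size $O(|\Delta|^2)$); and each JSL evaluation at a node costs $O(\mathrm{outdeg}(n)\cdot|\phi_i^{\flat}|)$ plus the $\uniq$ cost, summing to $O(|J|^2\cdot|\Delta|^{2})$ overall.

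\emph{\ptime-hardness.} For the lower bound I would reduce from a standard \ptime-complete problem — the monotone Circuit Value Problem, or equivalently the evaluation of a propositional Horn formula, or the acceptance problem for an alternating logspace machine. Given a monotone Boolean circuit $C$ with gates $g_1,\dots,g_k$ (inputs, AND-gates, OR-gates, output $g_k$), I would build a \emph{single fixed} JSON tree $J_C$ encoding the wiring of $C$ — say a small tree whose structure records, for each gate, whether it is an AND or an OR gate and who its predecessors are — and then write a recursive JSL expression with one symbol $\gamma_{g}$ per gate, whose definition mirrors the gate's operation using $\bx$ for AND-gates, $\dm$ for OR-gates, and a node test for input gates set to true. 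The crucial point is that the \emph{succinctness} of definitions lets the expression have size polynomial in $C$ while ``unfolding'' to something of size proportional to the number of paths in $C$, which is exponential; this is precisely what pushes the problem beyond \nlog\ and makes it \ptime-hard. Alternatively, and perhaps more cleanly, encode the whole input into a \emph{fixed-size} JSON and put all the circuit structure into the recursive JSL formula, since the problem is hard already in combined complexity and indeed in the size of the formula alone; I would choose whichever encoding makes the guardedness/well-formedness condition easiest to verify (the precedence graph here is just the DAG of the circuit, hence acyclic, so well-formedness comes for free).

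\emph{Main obstacle.} The delicate part is the upper bound, specifically controlling the blow-up of the unguarded substitution $\phi_i\mapsto\phi_i^{\flat}$ and making sure the bottom-up argument is actually sound: I must check that when I substitute symbols that occur outside all modalities, I never create a cyclic dependency (this is exactly where well-formedness is used), and that a \emph{guarded} occurrence of $\gamma_j$ at a node $n$ genuinely depends only on values at proper descendants of $n$, not at $n$ itself — which is true because a modality always moves strictly down the tree. A secondary subtlety is the interaction with the semantics defined via $\unfold_J(\psi)$: I should verify that bottom-up evaluation of the $\phi_i^{\flat}$'s computes exactly the same relation as $\unfold_J(\psi)$, by an induction on tree height matching the ``$h+1$ modal operators'' cutoff used in the definition of $\unfold_J$, with the $\bot$-replacement at the bottom corresponding to the base case where a node has no children. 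Once that equivalence is in place, the polynomial time bound and the hardness reduction are routine.
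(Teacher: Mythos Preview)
Your approach is essentially the paper's: a bottom-up pass over the nodes of $J$, processing the symbols $\gamma_i$ at each level in the topological order induced by the (acyclic) precedence graph, and a reduction from Circuit Value for hardness. Two points deserve tightening.

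First, in the upper bound, the substitution $\phi_i\mapsto\phi_i^{\flat}$ is \emph{not} polynomial in general: with a chain $\gamma_1=\gamma_2\wedge\gamma_2,\ \gamma_2=\gamma_3\wedge\gamma_3,\ \dots$ (all unguarded, precedence graph acyclic) the fully substituted formula has size $2^m$, and $m$ is part of the input, so ``$O(|\Delta|^m)$, still polynomial for fixed structure'' is not a valid argument. Your memoisation remark is the actual fix and is exactly what the paper does: it never unfolds unguarded occurrences at all, but treats each $\gamma_j$ as an atomic predicate whose extension at the current height is already computed (by the topological order), and each $\gamma_j$ under a modality as a predicate whose extension at strictly lower heights is already computed (by the bottom-up order). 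Present it that way rather than as substitution-plus-memoisation.

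Second, for hardness, your first alternative---encode the wiring of $C$ in a JSON tree $J_C$ and use $\bx/\dm$ to walk it---does not work as stated: JSON trees are trees, so a gate with fan-out $>1$ cannot be shared, and unfolding the circuit DAG into a tree blows up exponentially (this is precisely the circuit/formula distinction). Your second alternative, a flat JSON holding only the input values and putting the circuit structure entirely into the definitions $\gamma_g=\phi_g$ with the $\gamma$'s referenced \emph{unguarded}, is the one that works and is exactly the paper's reduction; there the precedence graph is the circuit DAG, hence acyclic, and the succinctness of definitions is what gives \ptime-hardness.
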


\smallskip
\noindent
\textbf{Satisfiability}. 
The most common way of building a satisfiability algorithm in schema formalisms for trees is to show that 
they are equivalent to some class of tree automata whose non-emptiness problem can be shown 
to be decidable. We use the same ideas, albeit we need to introduce 
a specific model of automata that can capture our formalism. Interestingly, we show that the $\uniq$ predicate can also 
be handled in this case, albeit with an exponential blowup. To show this 
we encode $\uniq$ as a special local constraint, as done in e.g. \cite{BT92,KL07}. Again, we do not know 
if this blowup is unavoidable. 



\begin{proposition}
\label{prop-sat-recursive-jsl}
The \textsc{Satisfiability} problem for recursive JSL expressions is in \twoexptime, and \exptime-complete 
for expressions without the $\uniq$ predicate. 
\end{proposition}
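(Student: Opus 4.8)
The plan is to reduce \textsc{Satisfiability} for recursive JSL to the emptiness problem of a tailored class of alternating tree automata running on JSON trees --- the classical automata-theoretic route for schema formalisms, and the one hinted at above. First I would turn a well-formed recursive JSL expression $\Delta$ of the form (\ref{eqn-rec-jsl}) into such an automaton $\mathcal A_\Delta$. The key observation is that, since $\unfold_J(\psi)$ only ever unfolds \emph{downwards}, $\Delta$ already behaves like an alternating automaton: its states are the symbols $\gamma_1,\dots,\gamma_m$ together with the subformulas occurring in $\psi,\phi_1,\dots,\phi_m$, the boolean structure gives the alternation, and the modalities $\dm_e,\bx_e,\dm_{i:j},\bx_{i:j}$ give transitions into object- resp. array-children; well-formedness guarantees every cycle among states passes through at least one modality, so every run descends strictly and is finite (bounded by the height of the input). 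Predicates from $\nodetests$ become local acceptance conditions. This yields $\mathcal A_\Delta$ of size polynomial in $|\Delta|$ with $L(\mathcal A_\Delta) = \{J : J\models\Delta\}$, so it remains to decide emptiness.

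Two features make emptiness over JSON trees non-routine: object and array nodes have unbounded branching over the infinite label sets $\Sigma^*$ and $\naturals$, and $\uniq$ constrains whole subtrees, not a single node. For the first I would abstract: only finitely many regular expressions $e_1,\dots,e_p$ and finitely many intervals occur in $\Delta$, so classify each key by the subset of $\{L(e_1),\dots,L(e_p)\}$ containing it (a \emph{key type}; at most $2^p$ of them, each either empty, finite with a computable bound, or infinite), and each array position by the set of intervals containing it. One then works with abstract JSON trees in which an object node is labelled by a vector of child-multiplicities per key type --- each capped at a threshold polynomial in $|\Delta|$ and marked as unbounded for infinite key types --- and an array node by a sequence of \emph{subtree types}, a subtree type being the set of states of $\mathcal A_\Delta$ that a subtree satisfies at its root (at most $2^{|\mathcal A_\Delta|}$ of them); the admissible child-sequences of an array node form a word language capturing $\minch$, $\maxch$ and the ranges of the $\dm_{i:j},\bx_{i:j}$ operators, recognised by a finite automaton polynomial in the binary encodings. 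A bottom-up saturation computes the realisable subtree types and decides emptiness in time polynomial in the size of this abstract automaton, which is exponential in $|\Delta|$ (exponentially many key/subtree types, numbers in binary); this gives the \exptime\ upper bound for the $\uniq$-free fragment. The matching \exptime-hardness follows by a standard encoding of the word problem of alternating polynomial-space Turing machines (equivalently, emptiness of alternating top-down tree automata), using recursion to force models to encode accepting computation trees.

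To incorporate $\uniq$, note that an array node carrying $\uniq$ with children of subtree types $t_1,\dots,t_k$ has a model iff for every subtree type $t$ the number of indices $i$ with $t_i = t$ does not exceed $\kappa(t)$, the number of pairwise distinct JSON trees of type $t$; $\kappa(t)$ is computed by a fixpoint on top of the realisable-types saturation (a type is ``infinite'' as soon as it realises an unconstrained string, an object edge of an infinite key type, or a child of an already-infinite type, and otherwise $\kappa(t)$ is finite and effectively bounded). Capping the per-type child counts at $\kappa(t)$ rather than at a polynomial threshold --- a local counting constraint in the style of \cite{BT92,KL07} --- is sound and complete for $\uniq$, but since $\kappa(t)$ for a finite type can itself be exponential in $|\Delta|$, the abstract automaton grows by one further exponential, giving the \twoexptime\ bound. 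The main obstacle is exactly this treatment of $\uniq$ under unbounded branching: one must prove that distinctness of an array's children can always be realised unless some per-subtree-type count exceeds $\kappa(t)$, and interleave the computation of $\kappa(t)$ (finite versus infinite, and the finite value) with the fixpoint determining realisable types; getting this argument and the blowup accounting right --- so the cost is \exptime\ without $\uniq$ and only \twoexptime\ with it --- is where the real work lies, with the binary-encoded array intervals a secondary but genuine source of exponential blowup.
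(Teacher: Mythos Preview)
Your proposal is correct and follows essentially the same route as the paper: both translate a well-formed recursive JSL expression in polynomial time into an alternating tree automaton over JSON trees (the paper calls these $J$-automata, with states partitioned into ``node'' and ``tree'' states; you take subformulas and the $\gamma_i$ as states), and then decide emptiness by a bottom-up fixpoint over subsets of states (your ``subtree types''), giving \exptime\ without $\uniq$. For $\uniq$, both approaches enrich the fixpoint with a count of how many \emph{distinct} JSON trees realise each state-set --- the paper phrases this as ``how many different trees can be used to reach this state'' bounded by the largest $\minch$ argument, you phrase it as computing $\kappa(t)$ --- and both attribute the extra exponential to these counts being of exponential magnitude when numbers are in binary. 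Your handling of the infinite key alphabet via key types and of array indices via interval automata is a bit more explicit than the paper's, and your \exptime-hardness via an APSPACE Turing-machine encoding is a direct variant of the paper's appeal to tree-automata emptiness through \cite{PRSUV16}, but none of this is a genuinely different decomposition.
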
 

\section{Future perspectives}\label{sec:future}

In this work we present a first attempt to formally study the JSON data format. To this end, we describe the underlying data model for JSON, and introduce logical formalisms which capture the way JSON data is accessed and controlled in practice. Through our results we emphasise how the new features present in JSON,affect classical results known from the XML context, and highlight  that there is a need for developing new techniques to tackle main static tasks for JSON languages. And while some of these features have been consider in the past (e.g. comparing subtrees \cite{BogaertT92, KariantoL07}, or an infinite number of keys \cite{BoiretHNT14}), it still not entirely clear how these properties mix with the deterministic structure of JSON trees, thus providing an interesting ground for future work.

Apart from these fundamental problems that need to be tackled, there is also a series of practical aspects of JSON that we did not consider. In particular, we identify three areas that we believe warrant further study, and where the formal framework we propose could be of use in understanding the underlying problems.


\smallskip
\noindent
\textbf{MongoDB's projection}. While we have presented a navigational logic that can capture 
the way MongoDB filters JSON documents within its find function, we have left out the second argument of the find function, known as projection. In its essence, the idea of the projection argument is to select only those subtrees of input documents that can be reached by certain navigation instructions, thus defining a JSON to JSON transformation. Although well-defined, the projection in MongoDB is quite limited in expressive power, and does not allow a lot of interaction between filtering and projecting. We believe that this is an interesting ground for future work, as there are many fundamental questions regarding the expressive power of these transformations, and their possible interactions with schema definitions.

\smallskip
\noindent
\textbf{Streaming}. Another important line of future work is streaming. Indeed, the widespread use of JSON as 
a means of communicating information through the Web demands the usage of streaming techniques to query JSON 
documents or validate document against schemas. Streaming applications most surely will be related to APIs, either 
to be able to query data fetched from an API without resorting to store the data (for example if we are in a mobile environment) or 
to validate JSONs on-the-fly before they are received by APIs. In contrast with XML (see e.g. \cite{SS07}, we suspect that our deterministic versions 
of both JNL and JSL might actually be shown to be evaluated in a streaming context with constant memory requirements when tree equality is excluded from the language. 

\smallskip
\noindent
\textbf{Documenting APIs}. One of the main uses of JSON Schema is the \emph{Open API initiative} \cite{openapi}, an endeavour founded in early 2016 that intends to build an open documentation of RESTful APIs available worldwide. This initiative uses JSON Schema to specify the inputs and the outputs of a large number of APIs, and is trying to describe which parts of the API output actually come from the input.
A formal perspective in these tasks will certainly be well appreciated by the community. Furthermore, there is also the problem of documenting how JSON APIs interact with real databases. For example, when large organisations expose data they sometimes use APIs instead of proper database views, even if it is for internal uses. This immediately raises the  problem of exchanging data that may now not reside on real databases, but is exposed only by means of JSON. We believe these  problems raise interesting questions for future work.

\bibliographystyle{plain}
\bibliography{json}


\newpage 
\onecolumn
\appendix

\section{Proofs and Additional Results}

\smallskip
\noindent
\textbf{Proposition} \ref{prop-JNL-eval}.
The \textsc{Evaluation} problem for JNL 
can be solved in time $O(|J|\cdot|\phi|)$.

\begin{proof}
We demonstrate that for a Json tree $J$ and a unary formula $\phi$, the evaluation of $\phi$ over $\texttt{J}$ is in $O(|\phi| * |J|)$.
For that we first translate the unary formula into a non recursive monadic datalog program with stratified negation in the style of \cite{Gottlob2006} for XML trees. 
For that, we need to define a relationnal structure on which the JSON tree is encoded. For each key $k$, there is a binary relation $k(x,y)$ such that $y$ is accessible from $x$ by the key $k$. Let $m$ be the maximal size of the arrays in the trees, then for each integer $i$ between $0$ to $m-1$, there is a binary relation $i$.
There is a binary relation $Self(x,y)$ such that $x$ is the same object of $y$ and accessible by the same key from the same parent. The binary relations presented until now form the set of navigational relations. The 
Moreover, there is one binary relation $Eq$ comparing the values of objects. Finally, for each value $v$, there is a unary relation  $v(x)$ such that the value of $x$  is equal to $v$.
By induction on the structure of the formula, we can prove that we can construct that a monadic datalog program with stratified negation only over intentional predicates.  
Moreover, for each rule of the program, the body of this rule is a tree-shape connected query by using the navigational relations $k$ and $i$ where $k$ and $i$ are keys or integers.  We denote this program $P_\phi$.

A unary conjunctive query is a \emph{tree query} iff 
\begin{itemize}
\item the query is connected regarding the navigational relations
\item the query restricted of the navigational relations describes a tree such that the root of this tree is the free variable.
\end{itemize}

A Datalog program is a \emph{JSON program} iff 
\begin{itemize}
\item It is monadic
\item The graph of dependence of the intentional relation is a tree
\item The intentional relation appearing in the head of a rule is monadic at the exception of the goal predicate which is a $0$ arity relation. 
\item The goal predicate appears only on the head of one rule and the body of this rule is a query of the form $Root(x) \wedge \phi(x)$ where $\phi(x)$ is a tree query.
\item For each rule, the body of this rule is a tree query. 
\item The negation of the program is stratified and the only negated atoms are unary intentional predicates. 
\end{itemize}

We need now to prove the two  key lemmas in order to efficiently compute the evalution $\phi$ over $J$.

\begin{lemma}
\label{lem:treequery}
Let $Q$ be a tree query over the JSON signature. Let $J$ be a tree. Let $o$ be an object of this tree.
There exists a unique valuation demonstrating that $o$ is answer of $Q$ applied to $P$. This valuation can be computed in $O(|\phi| \cdot |J|)$
\end{lemma}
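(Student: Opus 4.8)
The plan is to exploit the determinism of JSON trees: a witnessing valuation for ``$o$ is an answer of $Q$ over $J$'' is completely forced once the image of the free variable is pinned to $o$, and hence is unique, and moreover it can be read off by a single top-down pass through the query.

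First I would fix notation for the skeleton of a tree query. Let $x_0$ be the free variable of $Q$; by definition the navigational atoms of $Q$ (the key-atoms $k(x,y)$, the position-atoms $i(x,y)$ and the $Self$-atoms) span a tree on the variables of $Q$ with $x_0$ as its root, so every variable $x\neq x_0$ has a unique parent $p(x)$ and a unique incoming navigational atom $R(p(x),x)$. I would then prove by induction on the depth of $x$ in this tree that any valuation $\nu$ witnessing $o\in Q(J)$ satisfies that $\nu(x)$ is uniquely determined. The base case is $\nu(x_0)=o$. For the inductive step, conditions (1)--(3) in the definition of a JSON tree guarantee that from a fixed node of $J$ there is at most one child reachable by a given key, at most one child at a given array position, and that the $Self$-relation is functional; hence $\nu(x)$ must equal the unique node obtained from $\nu(p(x))$ via $R$ (and if no such node exists, then no witnessing valuation exists at all). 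This yields uniqueness.

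For the computation, I would traverse the query tree from $x_0$ in any top-down order, maintaining an array that records $\nu(x)$ for each variable visited so far. Expanding an atom $R(p(x),x)$ is a single lookup into $J$: scanning the children of the node $\nu(p(x))$ for the requested key or position, which costs $O(|J|)$ in the worst case; if the child is missing we halt and report that $o$ is not an answer. Once the skeleton is completely instantiated, I would verify the remaining atoms of $Q$: the unary node tests $v(x)$, each checkable in $O(|J|)$ (typically $O(1)$), and each equality atom $Eq(x,y)$, which amounts to comparing the subtrees $\json(\nu(x))$ and $\json(\nu(y))$ by a simultaneous traversal in time $O(|J|)$. As $Q$ has $O(|\phi|)$ variables and atoms, the total running time is $O(|\phi|\cdot|J|)$, and the valuation returned is the unique one established above.

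I expect the main obstacle to lie in the non-navigational part rather than in the skeleton: one must argue carefully that the $Eq$-atoms are verified within budget, so that no quadratic blow-up creeps in from comparing many pairs of subtrees, and that the $Self$-relation together with the ``same key from the same parent'' clause is handled so that the tree orientation of $Q$ is well-defined and the determinism induction goes through without hidden cases. A minor additional point is to confirm that every rule body produced by the translation of a JNL formula into the Datalog program $P_\phi$ does have the tree-query shape, so that this lemma is applicable to each of them; this should follow from the inductive construction of $P_\phi$ sketched above.
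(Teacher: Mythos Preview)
Your proposal is correct and follows essentially the same approach as the paper: both argue uniqueness by induction on the tree shape of the navigational part of $Q$, using that each navigational relation is functional in its first argument (the ``primary key'' observation), and both handle the $Eq$-atoms by computing the needed subtree comparisons online in linear time rather than precomputing the full $Eq$ relation. Your write-up is in fact more explicit than the paper's sketch about the top-down traversal and the per-atom cost accounting.
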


The idea of this proof is to build inductively on the tree structure of the query restricted to the navigational relations. It starts from the object $o$ that has to map the root of this tree.
The propagation case relies heavly on the fact that the first attribute of each navigational relation is a primary key. Because the query is connected regarding the navigational relations then a such valuation over the navigationa relations of the query defines a valuation over all the variables of the query. 
Finally, to obtain the good complexity, we cannot compute the relation $Eq$ for all the nodes of the $J$ and therefore the comparaison of subtrees are computed online.
We now consider $\nu(Q)$ the set of ground facts obtained from the valuation $\nu$ obtained on the evaluation over the ground facts. For each ground atom $Eq(o,o')$, we compare the subtrees which can be done in linear time.


\begin{lemma}
Let $P$ be a JSON program. Let $J$ be a JSON tree. 
Let $\mathcal{T}(P,J)$ be the set of proofs trees of the acceptance of $P$ by $J$. Let $p(n)$ and $p(n')$ be two facts appearing in different proof trees of $\mathcal{T}(P,J)$. Then $n$ is $n'$ are exactly the same objects reachable from the same parent with the same key.
\end{lemma}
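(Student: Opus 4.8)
The plan is to prove the following stronger claim by induction on the (finite) dependency tree of the intensional predicates of $P$, which is rooted at the goal: for every monadic intensional predicate $p$ there is a single node $n_p$ of $J$, determined by $P$ and $J$ alone, such that whenever a fact $p(m)$ occurs in some proof tree in $\mathcal{T}(P,J)$ we necessarily have $m=n_p$. The lemma is then immediate: if $p(n)$ occurs in one proof tree and $p(n')$ in another, then $n=n_p=n'$, so $n$ and $n'$ are literally the same node of $J$, i.e.\ the same object reached from the same parent through the same key.

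For the base of the induction, recall that the goal predicate is $0$-ary and is the head of exactly one rule, whose body has the shape $Root(x)\wedge\phi(x)$ with $\phi(x)$ a tree query rooted at $x$. In any proof tree witnessing acceptance the goal fact is derived by this rule, and since $Root$ is satisfied only by the root $r$ of $J$, the witnessing valuation must map $x$ to $r$. By Lemma~\ref{lem:treequery}, applied to this tree query with distinguished object $r$, such a valuation is unique; hence every intensional predicate $q$ occurring in $\phi(x)$ is forced to appear, in \emph{every} proof tree, at the single node that this valuation assigns to its argument, and we take that node to be $n_q$. For the inductive step, assume $p$ is already pinned to $n_p$, and let $p(y)\leftarrow\mathit{body}(y)$ be a rule of $P$ with head $p$; by the definition of a JSON program $\mathit{body}(y)$ is a tree query rooted at $y$, the only negated atoms being unary intensional predicates from lower strata, which are resolved by stratification and contribute no positive subtree. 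If a proof tree derives $p(n_p)$ via this rule, the witnessing valuation maps $y$ to $n_p$, and by Lemma~\ref{lem:treequery} it is again unique; therefore every intensional predicate $s$ occurring in $\mathit{body}(y)$ is forced to the node this unique valuation assigns to it. Since the dependency graph is a tree and the translation of the formula associates $s$ with exactly one rule body, this node does not depend on the proof tree, and we set it to be $n_s$. Walking down the dependency tree pins every intensional predicate, which proves the claim.

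The step I expect to be the main obstacle is reconciling the branching of proof trees with the uniqueness of the pinned nodes: disjunctions and boolean combinations in the original formula translate into several rules sharing a head, so different proof trees may use genuinely different rules to derive $p(n_p)$, and one must verify that, for each such rule, evaluating its body at the \emph{already pinned} node $n_p$ still forces a single node for every intensional predicate it mentions. This is exactly where the tree shape of the dependency graph (so that each intensional predicate occurs in only one rule body) has to be combined with the uniqueness provided by Lemma~\ref{lem:treequery}; the presence of stratified negation is a minor further point, handled in the same way, because the node at which a negated predicate is tested is itself pinned by the unique valuation of the surrounding tree query.
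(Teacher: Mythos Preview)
Your proposal is correct and follows essentially the same approach as the paper: the paper's proof is the one-line sketch ``proved by induction on [the] dependence tree of the intentional relations; it relies mainly on Lemma~\ref{lem:treequery},'' and you have fleshed out precisely this top-down induction, pinning each intensional predicate to a unique node via the unique-valuation property of tree queries. One small caution: the inference ``the dependency graph is a tree, hence each intensional predicate occurs in only one rule body'' is not literally forced by the stated definition of a JSON program (a predicate $s$ with a single parent $p$ could in principle appear in several rules for $p$); it holds, however, for the concrete program $P_\phi$ produced by the structural translation of a JNL formula, which is all that is needed here.
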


This lemma is proved by induction on dependence tree of the intentional relations. It relies mainly on Lemma \ref{lem:treequery}.

We now presente the algorithm evaluation. This algorithm is really differrent than the classical evaluation of Datalog over an instance.
The algorithm is built inductively on the structure of the  graph of dependence of the program and the tree and the goal is to ground the rules.
It relies heavily on Lemma \ref{lem:treequery}. Once, the rules are grounded if possible, we just need to check that the goal can be deduced. As the grounded program is of size $O(|\phi|\cdot|J|)$ 
, then checking that goal can be deduced in linear time in the size of the program.
This finishes the proof.

\end{proof}

\smallskip
\noindent
\textbf{Proposition} \ref{prop-JNL-sat}
The \textsc{Satisfiability} problem for JNL is \np-complete. It is \np-hard even for formulas not using negation nor the equality operator.  

\begin{proof}
For the upper bound one just guesses a polynomial time witness. 
For JNL the height of the models is bounded by the size of the formula. However, the width may be of exponential size 
because $\child_i$ demands a node with exponentially many nodes if $i$ is in binary. 
However, before guessing a witness we preprocess 
the formulas so that we replace, for each level, all binary numbers for the unary numbers $1,2,3,$ etc. To do this we just need to keep into account 
the ordering of the original numbers, and then respect this ordering with our replacement. For example, for the first level (that is, all numbers appearing in a $\child_i$ 
operator not under the scope of another $\child$ operator) we just order all numbers in $\child_i$ operators, and then respecting this ordering we replace the lowest 
such number by $1$, the second by $2$, and so on. This operation is polynomial in the formula, and after we finish with this preprocessing 
we get a formula which always has exponential witnesses. Having guesses the witness we evaluate our formula polynomial time as shown in Proposition \ref{prop-JNL-eval}.


\textbf{Hardness}. Reduction is from 3SAT. Let $\phi$ be a propositional formula in 3CNF using variables 
$p_1,\dots,p_n$ and clauses $C_1,\dots,C_m$. 
For each $p_i$ we define the formula $\theta_{p_1} = (\child_{p_1} [\child_1]) \vee (\child_{p_1} [\child_w])$, with $w$ a fresh string, 
with the intention of allowing, as models, all valuations of each of the $p_i$'s: if the object under key $p_1$ is an array, then 
we will interpret this as $p_i$ being assigned the value true, and if it is an object we will interpret that 
$p_i$ was assigned the value false.  
Moreover, for each 
of the clauses $C_j$ that uses variables $a$, $b$ and $c$, define $\gamma_{C_J}$ as 
$\child_{a}S_a \vee \child_{b}S_b \vee \child_{c}S_c$, where 
each $S_a$, $S_b$ and $S_c$ is either $[\child_1]$, if $a$ (respectively, $b$ or $c$) appears positively in $C_j$, and 
$[\child_w]$ otherwise. 

Recall that all edges leaving from array nodes are labelled with natural numbers, and all edges leaving from object nodes 
are labelled with strings. This means that for any json $J$ it must be the case that $\sem{[X_1]}_J$ and $\sem{[X_a]}_J$ 
are always disjoint. Moreover,  recall that an object cannot have two pairs with identical keys, 
so a node in a JSON trees cannot have two children under an edge with the same label. 
With these two remarks it is then immediate to see that $\phi$ is satisfiable if and only if the following JNL expression is satisfiable:
$$\bigwedge_{1 \leq i \leq n} \theta_{p_i} \wedge \bigwedge_{1 \leq \ell \leq m} \gamma_{C_\ell}$$
\end{proof}

\smallskip
\noindent
\textbf{Proposition} \ref{prop-JNL-rec-eval}
The evaluation problem for JNL with non-determinism and recursion can be solved in cubic time, and in linear time if 
the formula does not use predicate $\EQ(\alpha,\beta)$.

\begin{proof}
When our formula does not use the $\EQ(\alpha,\beta)$ operator, we can reuse the classical model checking algorithm from PDL \cite{AI00,CleavelandS93}, since our logic is a syntactic variant of PDL and JSON trees can be viewed as generalisations of Kripke structures. Some small changes are needed though in order to accommodate the specifics of JSON and of our syntax. First, arrays can be treated as usual nodes, with the edges accessing them being labelled by numbers. Second, for the formula $X_e$, where $e$ is a regular expression, JNL traverses a single edge (i.e. the regular expression is not applied as the Kleene star over the formulas). To accommodate this, we can mark each edge of our tree with an expression $e$ such that the label of the edge belongs to the language of $e$.  Since checking membership of a label $l$ in the language of the expression $e$ can be done in $O(|e|\cdot |l|)$, and the sum of the length of all the labels is less than the size of the model (as we have to at least write them all down), this can be done in $O(|e|\cdot |J|)$. We now repeat this for every regular expression appearing in our JNL formula. Since the number of expressions is linear in the size of the formula the preprocessing takes linear time. 
We can now run the classical PDL model checking over this extended structure treating regular expressions as ordinary edge labels. 
In the case that $\EQ(\alpha,\beta)$ is present, we can utilize the PDL-algorithm of \cite{LMV16}, which runs in cubic time.
\end{proof}

\smallskip
\noindent
\textbf{Proposition} \ref{prop-JNL-rec-sat-und}
The \textsc{Satisfiability} problem is undecidable for non-deterministic recursive JNL formulas, even if they do not use negation.

\begin{proof}
We prove that the satisfiability for positive JSL queries is undecidable.
We reduce the problem of emptiness of two counters machine.
Let $M$ be a two counters machine. The initial state is $q_0$. The final state is $q_f$.
We encode a configuration of the machine by an object $o$ having two keys $c_1$ and $c_2$ representing the counter $C_1$ and $C_2$. The value of the counte $C_i$  is the height of the subtree reachable from the key $c_i$. The  empty counter is represented by the string "0".
Moreover, $o$ has two other keys : $state$ given the string representing the state of the configuration and the key $next$ given the object representing the next configuration.
The query $Q$ is the composition of three subqueries  $Q_{init} \circ Q_{trans} \circ Q_{final}$.
$Q_{init}$ states the first configuration is the one starting at the initial state with empty counters : $\epsilon \langle   EQ(\child_{c_1} ,"0") \wedge EQ(\child_{c_2} ,"0") \wedge EQ(\child_{state} ,"q_0") \rangle \circ \child_{next}$
$Q_{final}$ states that the last configuration is in the final state $\epsilon  \langle  EQ(\child_{state} ,"q_f") \rangle$.
$Q_{trans}$ states the transition between two configurations :  $ (\epsilon \langle  \bigvee_{q} \phi_{q} \rangle \circ \child_{next})^+$.
We explain how to encode the different possible transitions over $c_1$, the same can be done on $c_2$:
\begin{itemize}
\item $\delta(q) = (incr_1,q_1)$, then $\phi_q = EQ(\child_{c_1}, \child_{next} \circ \child_{c_1} \circ \child_{a}) \wedge EQ(\child_{state} , "q")  \wedge EQ( \child_{next} \circ \child_{state},"q_1")$
\item $\delta(q) = (del_1,q_2)$, then  $\phi_q = EQ(\child_{c_1}\circ \child_{a}, \child_{next} \circ \child_{c_1} ) \wedge EQ(\child_{state} , "q")  \wedge EQ( \child_{next} \circ \child_{state},"q_1")$
\item $\delta(q) = (if ~0 ~then ~q_1, q_2)$ then 
$$\phi_q = \big(EQ(\child_{c_1},"0" ) \wedge EQ(\child_{state} , "q")  \wedge EQ( \child_{next} \circ \child_{state},"q_1")\big) $$
$$\vee \big( [\child_{c_1} \circ \child_a ] \wedge EQ(\child_{state} , "q")  \wedge EQ( \child_{next} \circ \child_{state},"q_2")\big) $$
$$\bigwedge  EQ(\child_{c_1}, \child_{next} \circ \child_{c_1} )  $$
\end{itemize}

Let assume that the machine $M$ finishes in the final state.
Let $r$ be an accepting run. We construct the associated JSON tree as explained before. Each configuration is represented by an object having four keys : state returning the state of the machine, $c_1$ returning an object coding the value of the counter $C_1$, 
$c_2$ returning an object coding the value of the counter $C_2$ and $next$ returning the object representing the next configuration.
Each object coding the value $n$ of counter is JSON tree coding path of length $n$ such that each object has only one key $a$ at the exception of the last value is the integer $0$.
We can notice that we can unfold the query by following the run such that at each step, the transition is checked.
Let assume that the query is satisfiable. We can extract for each object under the key $next$ a configuration as explained before. The transition between each configuration has the query transition imposes that between two objects representing a configuration, the counters have the correct values as full subtrees are copied.  
\end{proof}

\smallskip
\noindent
\textbf{Proposition} \ref{prop-JNL-rec-sat-dec}
The \textsc{Satisfiability} problem is: 
\begin{itemize}
\item \pspace-complete for non-deterministic, non-recursive JNL without the $\EQ(\alpha,\beta)$ operator. 
\item \exptime-complete for non-deterministic, recursive JNL without the $\EQ(\alpha,\beta)$ operator.
\end{itemize}

\begin{proof}
The \pspace lower bound follows from Proposition \ref{prop-jsl-sat}, and the fact that we can go from JSL to JNL in polynomial time (Theorem \ref{thm-jsn_jnl}). 
The \exptime-hardness follows from adapting known reductions from bounded space alternating turing machines to PDL (see e.g. Harel, D., Kozen, D., Tiuryn, J., 2000. Dynamic Logic. MIT Press.). The upper bound follows 

Both the \pspace-upper bound and the \exptime-upper bound follows from Proposition \ref{prop-jsl-sat} and Proposition \ref{prop-sat-recursive-jsl}.  
Even though we cannot go directly through JSL because this would mean an exponential blowup, we can eliminate this blowup by using a new definition each time we 
concatenate formulas. For the \pspace-hardness this implies introducing definitions, but since these are acyclic we can still bound the height of the accepted JSON documents. 
\end{proof}

\smallskip
\noindent
\textbf{Theorem \ref{schema2logic}}

  \begin{itemize}
\item JSON Schema can be expressed in JSL.
\item String-deterministic JSL can be expressed in JSON Schema.
\end{itemize}

\begin{proof}
 \textbf{Schema to JSL}. Let us start with the first item. For each schema $S$ we proceed to construct a formula $\psi_S$, we do this inductively on the syntax of JSON Schema.

Let us start our construction with strings. Consider the following string schema $S$

{\small\begin{alltt}
                                       \{
                                        "type": "string", 
                                        "pattern": "\(e\)"  
                                        \}
\end{alltt}}
where $e$ is a regular expression over $\Sigma^*$. Here we need to check both that $J$ is an string and that it belongs to the language of $e$. 
We do this with the following formula $\psi_S$ in $JSL$
\begin{align*}
  \psi_S = \isstring\land\patternlogic(e) 
\end{align*}
 Note that in the absence of the pattern restriction we just have to remove the last predicate from $\psi_S$.
 
 \medskip
 
 In the case of numbers we provide a similar construction, consider a the following generic schema $S$
 {\small\begin{alltt}
                                       \{
                                        "type": "number", 
                                        "minimum": "\(n\)",
                                        "maximum": "\(m\)",
                                        "multipleOf": "\(k\)"                                          
                                        \}
\end{alltt}}
 
 where $n,m,k$ are decimal numbers. Here the construction is as follows
 \begin{align*}
  \psi_S = \isnumber\land\minim(n)\land\maxim(m)\land\mof(k)
\end{align*}
Again if one of the restrictions is nto present we can just avoid it in the construction of the formula. 

\medskip

We continue with object schemas, consider the following object schema $S$

{\small\begin{alltt}
                                    \{
                                       "type": "object", 
                                       "minProperties": \(i\),
                                       "maxProperties": \(j\),                                       
                                       "required": ["\(s\sb{1}\)",...,	"\(s\sb{n}\)"],
                                       "properties": \{
                                                 "\(k\sb{1}\)": \(S\sb{1}\),
                                                 ...
                                                 "\(k\sb{m}\)": \(S\sb{m}\)  
                                                 \},
                                       "patternProperties": \{
                                                 "\(r\sb{1}\)": \(D\sb{1}\),
                                                 ...
                                                 "\(r\sb{l}\)": \(D\sb{l}\)  
                                                 \},
                                       "additionalProperties": \(A\)   
                                       \}
\end{alltt}}

Here we have to use our inductive step to assign a formula for each subschema present in the definition of $S$. Let $\phi_{S_1},\ldots,\phi_{S_m},\phi_{D_1},\ldots,\phi_{D_l},\phi_A$ be those formulas, and let $C$ be the intersection of the complement of each expression $k_1,\ldots,k_m, r_1,\ldots, r_l$, we can construct $\psi_S$ as follows
 \begin{align*}
  \psi_S = \isobject\land\minch(i)\land\maxch(j)\land&\\
  \dm_{s_1}\top\land\ldots\land\dm_{s_n}&\top\land\\
  \bx_{k_1}\phi_{S_1}\land&\ldots\land\bx_{k_m}\phi_{S_m}\land\\
    \bx_{r_1}&\phi_{D_1}\land\ldots\land\bx_{r_l}\phi_{D_l}\land\\
    &\bx_{C}\varphi_A
\end{align*}

For arrays consider the following schema $S$
{\small\begin{alltt}
                                       \{
                                       "type": "array",                                      
                                       "items": ["\(S\sb{1}\)",...,"\(S\sb{n}\)"],
                                       "uniqueItems": true,
                                       "additionalItems": \(A\)   
                                       \}
\end{alltt}}

In this case we need to force the JSON values to satisfy their corresponding indexes schema, we achieve this by constructing the following $JSL$ formula
 \begin{align*}
  \psi_S = \isarray\land\uniq\land\dm_{1,1}\phi_{S_1}\land&\ldots\land\dm_{n,n}\phi_{S_n}\land\bx_{j+1:+\infty}\varphi_A
\end{align*}
Where $\phi_{S_1},\ldots, \phi_{S_n}, \phi_{S_A}$ come from the inductive step. Note that when the additionalItems keyword is not present we need to 
introduce instead $\bx_{j+1:+\infty}\bot$, to specify that there cannot be more children.

Finally, we conclude by providing the construction for the boolean operators of JSON Schema, consider the following schemas
{\small
\begin{alltt}
                       \(S\sb{allOf}=\) \{"allOf": [\(S\sb{1}\),...,\(S\sb{n}\)]\}     \(S\sb{anyOf}=\) \{"anyOf": [\(S\sb{1}\),...,\(S\sb{n}\)]\} 

                       \(S\sb{not}=\) \{"not": \(S'\)\}                \(S\sb{enum}=\) \{"enum": [\(J\sb{1}\),...,\(J\sb{n}\)]\}
 \end{alltt}}
The construction comes naturally from the boolean semantics

\begin{align*}
\varphi_{S_{allOf}}&:=\phi_{S_1}\land\ldots\land\phi_{S_n}\\
\varphi_{S_{anyOf}}&:=\phi_{S_1}\lor\ldots\lor\phi_{S_n}\\
\varphi_{S_{not}}&:=\neg\phi_{S_A}\\
\varphi_{S_{enum}}&:=\compare(J_1)\lor\ldots\lor\compare(J_n)\\
\end{align*}
 Where $\phi_{S_1},\ldots, \phi_{S_n}, \phi_{S_A}$ come from the inductive step.
 
 \bigskip
 
 \textbf{JSL to Schema}. For the second item we show how to obtain a schema $S^\phi$ from each JSL formula $\phi$. 
 We proceed again by induction on the structure of JSL formulas. 
 
 \smallskip
 \noindent
 The formula $\top$ is the schema \verb+{}+, that by definition validates against any document. 

   \smallskip
 \noindent
 All $\nodetests$ are similar: 
 \begin{itemize}
 \item If $\phi$ is $\Obj$ then $S$ is just \verb+{"type": "object"}+, and similarly for $\Arr$, $\Int$ and $\Str$
 \item If $\phi$ is $\uniq$ then $S$ is \verb+{"type": "array", "uniqueItems": true}+. The other predicates in $\nodetests$ are completely symmetrical, 
 except for $\minch(i)$ and $\maxch(i)$ that we specify below. Note that we need to specify the type of documents they act on. 
 \item In the case of $\minch(i)$ and $\maxch(i)$ we need to specify a disjunction between arrays and objects. For example, 
 $\minch(i)$ is the union of \verb+{"type": "object", "minProperties": i}+ and \verb+{"type": "array", "items": [{},...,{}]}+, repeating \verb+{}+ $i$ times, and 
 where \verb+{}+ by definition validates against any document. For $\maxch(i)$ is a bit more complicate, as apart from 
 \verb+{"type": "object", "maxProperties": i}+  we need to take the union of each 
 \verb+{"type": "array", "items": [{}], "additionalItems": J }+, where \verb+J+ is an unsatisfiable schema, for each repetition of items between $1$ and $i$. 
 Each of this specifies an array with exactly $k$ elements, where $k$ is the amount of times the schema \verb+{}+ is repeated in the items. 
 \end{itemize}
 
 \smallskip
 \noindent
 Disjunction, conjunction and negation are immediate from the boolean operators in JSON Schema. 
 
 \smallskip
 \noindent
  If $\phi$ is $\dm_{i:j} \psi$ then the schema is: 
 
 \begin{verbatim}
 {
    "type": "array", 
    "items":[{},...,{},Spsi,...,Spsi], 
    "additionalItems: {}
 }
 \end{verbatim}  
 
 where \verb+Spsi+ is the schema for $\phi$, the repetition of \verb+{}+ is given $i-1$ times, and the repetition of \verb+Spsi+ is given $j$ times. 
 
  \smallskip
 \noindent
  If $\phi$ is $\dm_{i:+\infty} \psi$ then the schema is: 
 
  \begin{verbatim}
 {
    "type": "array", 
    "items":[{},...,{}], 
    "additionalItems: Spsi
 }
 \end{verbatim}  
 
  where \verb+Spsi+ is the schema for $\phi$, and the repetition of \verb+{}+ is given $i-1$ times.

  \smallskip
 \noindent
 If $\phi$ is $\bx_{e} \psi$ then the schema is: 
 
 \begin{verbatim}
 {
    "type": "object", 
    "patternProperties:{
       "e": Spsi
       }
 }
 \end{verbatim} 
The rest of the modalities can be simulated using the equivalences  
$\bx_e \psi \equiv \not \dm_e \not \psi$ and $\dm_{i:j} \psi \equiv \not \bx_{i_j} \not \psi$

\end{proof}

\smallskip
\noindent
\textbf{Theorem \ref{thm-jsn_jnl}}:\\ 
Non-deterministic JNL not using the equality $\EQ(\alpha,\beta)$ and non-deterministic JSL using only the node test $\sim(A)$ are equivalent in terms of  expressive power. More precisely:
\begin{itemize}
\item For every formula $\varphi^S$ in JSL exists a unary formula $\varphi^N$ in JNL such that for every JSON $J$: $$\sem{\varphi^N}_J = \{n\in J \mid (J,n)\models \varphi^S\}.$$
\item For every unary  formula $\varphi^N$ JNL exists a $\varphi^S$ in JSL such that for every JSON $J$: $$\sem{\varphi^N}_J = \{n\in J \mid (J,n)\models \varphi^S\}.$$
\end{itemize}
Moreover, going from JSL to JNL takes only polynomial time, while the transition in the other direction can be exponential.

\begin{proof}
The proof of the first item is done by an easy induction on the structure of $\varphi^N$.
\noindent{\underline{Base case:}}
\begin{itemize}
\item If $\varphi^N = \top$ then we define $\varphi^S = \top$.
\item If $\varphi^N = \sim(A)$ then we define $\varphi^S = \EQ(\varepsilon,A)$.
\end{itemize}
\noindent{\underline{Inductive case:}}
\begin{itemize}
\item The boolean cases are handled in an obvious manner. I.e. If $\varphi^N = \neg \varphi_1$ then we define $\varphi^S = \neq \varphi_1^S$.
\item If $\varphi^N = \dm_{e}\ \phi$, then $\varphi^S = [X_e\langle\phi^S\rangle]$.
\item If $\varphi^N = \dm_{i:j}\ \phi$, then $\varphi^S = [X_{i:j}\langle\phi^S\rangle]$.
\item The case of diamond is straightforward, since $\bx_{e}\ \phi$ is equivalent to $\neg \dm_{e}\ \neg\phi$ and similarly for $\bx_{i:j}$.
\end{itemize}
It is now easy to show that $\varphi^S$ has the same semantics as $\varphi^N$.
\smallskip

For the second item a bit more care is needed, since JNL also allows binary formulas, so we have to show how one can keep track of those using JSL formulas which are by definition unary. For this we will need some more notation. 

First, note that when considering a JNL formula $X_a$, its clear analogue in JSL would be $\dm_{a} \top$, since it gives us all the nodes $n$, such that there is $n'$ with $(n,n')\in \sem{X_e}_J$, for any JSON $J$. But what do we do with a formula $X_a\circ X_b$? Here the first part of the composition ($X_a$) is equivalent to $\dm_{a} \top$, and the second to $\dm_{b} \top$. However, we can not simply glue them together, but need to replace the $\top$ in $\dm_{a} \top$, with the entire formula $\dm_{b} \top$, thus obtaining $\dm_{a}\dm_{b} \top$. In order to keep track of which $\top$ we are using, we will assign a different top symbol $\top_\varphi$ to each formula $\varphi$ (marked by the formula itself). Furthermore, when we need to replace a top symbol in the formula $\varphi$ with another formula $\psi$, we will write $\varphi[\top_\varphi \rightarrow \psi]$ to denote this.

Next, we will need to use a special $\top$ symbol for formulas of the form $\langle \varphi \rangle$. At first it might seem that a formula like $\langle \varphi\rangle$ could simply be replaced with the translation of $\varphi$, which we denote by $\varphi^S$, since we only need to check if our node satisfies $\varphi$. However, this might cause problems when we continue composing $\langle \varphi \rangle$ with other binary formulas. For instance, in $\langle X_a \rangle \circ X_b$ this would result with an incorrect translation to $\dm_{a}\dm_{b} \top$. That is, we need to treat formulas of the form $\langle \varphi \rangle$ as existential tests that can be continued afterwards. For this, we will translate formulas of the form $\psi = \langle \varphi \rangle$ to $\psi^S = \top_\phi \wedge \varphi^S[\top_\varphi \rightarrow \top^*]$. Here $\top^*$ is a special top symbol which helps us continue composing correctly using $\psi^S$. That is, when another formula follows $\psi$, we will now know that $\top^*$ will not be replaced by it, thus making it a true existential test as intended.

The full inductive translation which works simultaneously on unary and binary formulas of JNL is given next. We start with the base cases. Here JNL formulas will be denote $\varphi, \psi, \alpha$ and their (intermediate) translations as $\varphi^{S_I}, \psi^{S_I}, \alpha^{S_I}$.

\noindent{\underline{Base case:}}
\begin{itemize}
\item If $\varphi = \top$ then we define $\varphi^{S_I} = \top_\varphi$.
\item If $\alpha = \varepsilon$ then we define $\alpha^{S_I} = \top_\alpha$.
\item If $\alpha = X_e$ then $\alpha^{S_I} = \dm_{e} \top_\alpha$.
\item If $\alpha = X_{i:j}$ then $\alpha^{S_I} = \dm_{i:j} \top_\alpha$.
\end{itemize}
\noindent{\underline{Inductive case:}}
\begin{itemize}
\item If $\varphi = \neg \psi$ then $\varphi^{S_I} = \neg \psi^{S_I}[\top_\psi \rightarrow \top_\varphi]$.
\item If $\varphi = \psi_1 \wedge \psi_2$ then $\varphi^{S_I} = \psi_1^{S_I}[\top_{\psi_1} \rightarrow \top_\varphi] \wedge \psi_2^{S_I}[\top_{\psi_2} \rightarrow \top_\varphi]$.
\item If $\varphi = \psi_1 \vee \psi_2$ then $\varphi^{S_I} = \psi_1^{S_I}[\top_{\psi_1} \rightarrow \top_\varphi] \vee \psi_2^{S_I}[\top_{\psi_2} \rightarrow \top_\varphi]$.
\item If $\varphi = [\alpha]$ then $\varphi^{S_I} = \alpha^{S_I}[\top_\alpha \rightarrow \top_\varphi]$.
\item If $\varphi = \EQ(\alpha,A)$ then $\varphi^{S_I} = \alpha^{S_I}[\top_\alpha \rightarrow \sim(A)]$.
\item If $\alpha = \alpha_1 \circ \alpha_2$ then $\alpha^{S_I} = (\alpha_1^{S_I}[\top_{\alpha_1} \rightarrow \alpha_2^{S_I}])[\top_{\alpha_2}\rightarrow \top_\alpha]$.
\item If $\alpha = \langle \varphi \rangle$ then $\alpha^{S_I} = \top_\alpha \wedge \varphi^{S_I}[\top_\varphi \rightarrow \top^*]$.
\end{itemize}

Finally, for $\varphi$ we define $\varphi^S = \varphi^{S_I}[\{\top^*,\top_\varphi\}\rightarrow \top]$, replacing all the remaining top symbols with true. We do the same thing with binary formulas $\alpha$ giving us $\alpha^S$.

To demonstrate how the translation works, consider the formula $\EQ(\langle X_b \rangle \circ X_a, A)$, which checks that the current node contains a key $b$, and that it contains a key $a$ with the value $A$. Our translation will start by producing $\top_1 \wedge \dm_{b}\top^*$ in place of $\langle X_b \rangle$. It will then replace $\top_1$ with $\dm_{a}\top_2$, i.e. with the translation of $X_a$. Finally, it will replace $\top_2$ with $\sim(A)$, producing $\dm_{a}\sim(A)\wedge \dm_{b} \top$, as desired.

Note that in the formulas of the form $\langle [X_{a_1}] \vee [X_{a_2}] \rangle \circ \langle [X_{b_1}] \vee [X_{b_2}] \rangle \circ \langle [X_{c_1}] \vee [X_{c_2}] \rangle$ our translation keeps track of all the possible paths specified by the JNL formula, thus resulting in an exponentially long JSL formula.

Using a rather cumbersome induction (which works simultaneously on unary and binary JNL formulas), we can now prove the claim that $\sem{\varphi}_J = \{n\in J \mid (J,n)\models \varphi^S\}$, for any JNL formula $\varphi$. Note that here we also need to prove an auxiliary claim on binary formulas, which states that $(J,n)\models \alpha^S$ iff $\exists n'$ s.t. $(n,n')\in \sem{\alpha}_J$, and $n'$ is used as a witness to the truth of $\alpha^S$ in $n$.
\end{proof}

\smallskip
\noindent
\textbf{Proposition} \ref{prop-jsl-eval}: \\
The \textsc{Evaluation} problem for JSL can be solved in time $O(|J|^2 \cdot |\phi|)$, and in 
$O(|J| \cdot |\phi|)$ when $\phi$ does not use the uniqueItems keyword. 

\begin{proof}
We can use Theorem \ref{thm-jsn_jnl} to transform the JSL expression into a JNL expression without the equality operator, but with one additional unary predicate for each 
node test in $\nodetests$. All these predicates can be pre-computed in time $O(|J|\cdot|\phi|)$ except for $\uniq$, which needs $O(|J|^2)$. 
\end{proof}

\smallskip
\noindent
\textbf{Proposition \ref{prop-jsl-sat}}:\\
The \textsc{Satisfiability} problem for JSL is in \expspace, and \pspace-complete for expressions without $\uniq$. 

\begin{proof}
The \pspace-hardness is established by reduction from the satisfiability problem of a quantified boolean formula (QBF) in 3CNF, which is 
widely known to be \pspace-complete. More precisely, a QBF in 3CNF is an expression 
of the form $Q_1\ x_1\ \cdots Q_n\ x_n \phi$, such that $\phi$ is a propositional 
formula over $\{x_1,\dots,x_n\}$ in conjunctive normal form and where each clause has 3 literals, 
and each $Q_i$ is either $\forall$ or $\exists$. The satisfiability problem asks if the resulting 
quantified formula is satisfiable. 

The reduction is inspired by a similar reduction in (Benedikt, M., Fan, W., \& Geerts, F. (2008). XPath satisfiability in the presence of DTDs. Journal of the ACM (JACM), 55(2), 8.).
The main idea is to construct a formula such that each of its models correspond to one of the 
possible assignments for the existentially quantified variables of $\phi$, and that contains all possible assignments 
for the universally quantified variables. 

Let then $\alpha$ be a QBF expression in 3CNF of the form $Q_1\ x_1\ \dots Q_n\ x_n \phi$ as explained above. We construct a JSL formula $\phi$ such that 
$\phi$ is satisfiable if and only if $Q_1\ x_1\ \dots Q_n\ x_n \phi$ is satisfiable. 

Our formula is defined as $\phi = \phi_\text{tree} \wedge \phi_\text{clauses}$, where each of these subformulas are defined as follows. 

Formula $\phi_\text{tree}$ is a formula that mandates all the models to be trees of length $2n$, 
consisting only of object nodes, where all nodes at height $0 \leq 2k < 2n$ are objects with a single child through an edge labelled with $X$, 
and each node at height $2k+1$ is an object that has either a single child with an edge labelled with any of $T$ or $F$, if 
$Q_i$ is the existential quantifier $\exists$ or exactly two children, one with an edge labelled with $T$ and the other 
with an edge labelled with $F$. For example, a possible model of the formula 
$\exists x_1 \forall x_2 \forall x_3 (x_1 \wedge x_2 \wedge x_3)$ is the JSON shown below (all nodes are objects). 

\begin{center}
\begin{tikzpicture}[>=stealth]
\filldraw [black] (0,0) circle (2pt)
(0,-1) circle (2pt)
(0,-2) circle (2pt)
(0,-3) circle (2pt)
(-1.2,-4) circle (2pt)
(1.2,-4) circle (2pt)
(-1.2,-5) circle (2pt)
(1.2,-5) circle (2pt)
(-2.0,-6.5) circle (2pt)
(-0.5,-6.5) circle (2pt)
(0.5,-6.5) circle (2pt)
(2,-6.5) circle (2pt)

;

\path (0,-0.1) edge[->, thick] node[pos=0.4,left=0] {\small \texttt{X}} (0,-0.9);
\path (0,-1.1) edge[->, thick] node[pos=0.4,left] {\small \texttt{T}} (0,-1.9);
\path (0,-2.1) edge[->, thick] node[pos=0.4,left] {\small \texttt{X}} (0,-2.9);

\path (-0.1,-3.1) edge[->, thick] node[pos=0.2,left=0.1] {\small \texttt{T}} (-1.1,-3.9);
\path (0.1,-3.1) edge[->, thick] node[pos=0.2,right=0.1] {\small \texttt{F}} (1.1,-3.9);

\path (-1.2,-4.1) edge[->, thick] node[pos=0.4,left] {\small \texttt{X}} (-1.2,-4.9);
\path (1.2,-4.1) edge[->, thick] node[pos=0.4,left] {\small \texttt{X}} (1.2,-4.9);

\path (-1.3,-5.1) edge[->, thick] node[pos=0.5,left] {\small \texttt{T}} (-2.0,-6.4);
\path (-1.1,-5.1) edge[->, thick] node[pos=0.5,right] {\small \texttt{F}} (-0.5,-6.4);

\path (1.1,-5.1) edge[->, thick] node[pos=0.5,left] {\small \texttt{T}} (0.5,-6.4);
\path (1.3,-5.1) edge[->, thick] node[pos=0.5,right] {\small \texttt{F}} (2.0,-6.4);


\end{tikzpicture}
\end{center}

In order to make sure each tree has this form, we let $\phi_\text{tree}$ be the conjunction of all the formulas we describe next.  

\begin{itemize}
\item For each $0 \leq k \leq n$ such that $k$ is existentially quantified, 
the formula $$(\bx_{\Sigma^*})^{2(k-1)} \bx_{X} \big((\dm_T \top \wedge \neg \dm_F \top) \vee (\neg \dm_T \top \wedge \dm_F \top),$$
where $(\psi)^{i}$ represents the concatenation of $\psi$ $i$ times. 
\item For each $0 \leq k < n$ such that $k$ is universally quantified, 
the formula $$(\bx_{\Sigma^*})^{2(k-1)} \bx_{X} \big((\dm_T \top \wedge \dm_F \top).$$
\item For each $0 \leq k < n-1$, formula 
$$(\bx_{\Sigma^*})^{2k+1} \big(\bx_T\dm_X\top  \wedge \bx_F\dm_X\top)$$
\end{itemize}

As we have mentioned, for each assignment of the existentially quantified variables in $\alpha$ there is a 
JSON tree satisfying all these formulas and where the path from the root to each leaf 
represent a possible assignment of all variables in $\alpha$. 

Formula $\phi_\text{clause}$ makes sure that all such assignments satisfy $\phi$. To do this consider, for 
each clause $C = (\star x_i \vee \star x_j \vee \star x_k)$, where $\star$ is either $\neg$ or nothing, the formula $\phi_C$ defined as 

$$\phi_C = (\dm_{\Sigma^*})^{2(i-1)} \dm_X\dm_{V_i} (\dm_{\Sigma^*})^{2(j-i-1)} \dm_X\dm_{V_j} (\dm_{\Sigma^*})^{2(k-j-i-1)} \dm_X\dm_{V_k}\top, $$
where $V_\ell$ is $T$ if $\star x_\ell$ is $\neg x_\ell$ or $F$ otherwise. 

Note that this formula is true only when there is a path from the root to a leaf such that there is a $V_i$-labelled edge after the node at height $2i+1$, 
a  $V_j$-labelled edge after the node at height $2j+1$ and a $V_k$-labelled edge after the node at height $2k+1$. 
Thus, this holds when one can find a path in which the assignment represented by this path does not satisfies clause $C$. 

Define then $\phi_\text{clause} = \bigwedge_{C \in \alpha} \neg \phi_C$.  This formula is satisfied when, for all clauses $C$ of $\phi$, it is impossible to find 
a path in which the assignment represented by this path does not satisfies clause $C$. 
 It is now not difficult to show that $\alpha$ is satisfiable if and only if the expression 
$\phi_\text{tree} \wedge \phi_\text{clause}$ is satisfiable.    

The upper bound follows from Proposition \ref{prop-sat-recursive-jsl} and the remark that non-recursive JSL formulas only accepts trees 
of height bounded polynomially by the size of the formula. With this remark, we can use a technique introduced in 
(Benedikt, M., Fan, W., \& Geerts, F. (2008). XPath satisfiability in the presence of DTDs. Journal of the ACM (JACM), 55(2), 8.) 
to deal with the satisfiability of non-recursive DTDs in the XML context. In a nutshell, the idea is to simulate a tree automata with a 
string automata, using a coding of trees of bounded height into strings. If $\uniq$ is not present then we can preprocess the formula first to know 
the maximum width of the trees we will need to check, but if $\uniq$ is indeed present we end up with an automaton of exponentially many rules 
since for the encoding we need to check an exponential amount of children for each node. 
\end{proof}

\smallskip
\noindent
\textbf{Theorem \ref{schema2logic-recursive}}:\\
 Well-formed recursive JSL and well-formed recursive JSON Schema are equivalent in expressive power. 
\begin{proof}
Follows from the immediate definition of recursive JSL and Theorem \ref{schema2logic}. From JSON Schema to 
JSL all we need to do is to encompass each of the definitions as a new definition in the recursive JSL formula, and 
translate the base JSON Schema as the base expression. The other direction is symmetrical. 
\end{proof}


\smallskip
\noindent
\textbf{Proposition \ref{prop-jsl-mso}}:\\
Well-formed recursive JSL is equivalent to MSO, if the set $\Sigma^*$ of possible keys is 
fixed and finite, and that we do not consider arrays.
\begin{proof}
When no arrays exists we can express all of the remaining $\nodetests$ with MSO (the only problem is $\uniq$), and we can express all the modalities as well. It is thus 
immediate to show that MSO can encode JSL. Furthermore, to encode a well-formed recursive JSL expression $\Delta$, for each definition $\gamma = \phi$ in $\Delta$ 
we create a set $S_\gamma$ and the formula $S_\gamma(x) \leftrightarrow T(\phi)$, where $T(\phi)$ is the translation of $\phi$, assuming each symbol $\gamma'$ from $\Gamma$ in 
$\phi$ is translated as $S_{\gamma'}$. 

To go from MSO to JSL, the key thing to note is that now each object has at most $k$ children, where $k$ is the fixed number of keys. Thus MSO can be captured by an automata of 
over trees of rank $k$ (see e.g. Libkin, Leonid. Elements of finite model theory. Springer Science \& Business Media, 2013.). In turn, we can use JSL to capture these types of automata. See \cite{PRSUV16} for an explanation on how to encode automata with JSON Schema; 
the translation therein can be seemingly adapted for trees of rank $k$. 
\end{proof}

\smallskip
\noindent
\textbf{Proposition} \ref{prop-eval-recursive-jsl}:\\
The \textsc{Evaluation} problem is \ptime-complete for recursive JSL expressions. 

\begin{proof}

For the \ptime\ upper bound we need some notation. 
Let $J$ be a JSON document, and assume that the height of $J$ is $h$. To each symbol $\gamma$ in $\Gamma$ and $k \leq h$ we associate a 
set $S_k^J(\gamma)$ of nodes from $J$, whose intuition is to specify the nodes of $J$ that satisfy $\gamma$, and that are of height at least $k$. 
More formally, let $\mathcal S_h$ be the set of predicates containing the predicate name $S_k(\gamma)$ for each $0 \leq k \leq h$ and each $\gamma \in \Gamma$. Then the satisfaction relation $\models$ can be extended for 
JSL expressions over the extended syntax that allows symbols from $\mathcal S_h$ as predicates, defining that $(J,n) \models S_k(\gamma)$ 
if and only if $n \in S_k^J(\gamma)$. 

Now let $\Delta$ be a recursive JSL expression of the form (\ref{eqn-rec-jsl}). For each $1 \leq j \leq n$, let $\phi_j^h$ be the formula constructed by replacing each symbol $\gamma_i$ in $\phi_j$ under the scope of a modal 
operator by $\bot$, and each other $\gamma_i$ by $S_k(\gamma)$. Further, for each $0 \leq k < h$, let $\phi_j^k$ be the formula constructed by replacing each symbol $\gamma_i$ in $\phi_j$ under a modal operator for $\bigvee_{k < \ell <h} S_\ell(\gamma)$, and each other symbol $\gamma_i$ for $S_k(\gamma_i)$. 

We define for each $0 \leq k \leq h$ the operators given by: 
$$\begin{array}{l}
S_k^J(\gamma_1) = \{n \mid \text{the height of $n$ is $k$, and } \models \phi_1^k\}, \\
\ \ \ \ \vdots \\
S_k^J(\gamma_n) =  \{n \mid \text{the height of $n$ is $k$, and } \models \phi_n^k\}
\end{array}$$

Assume for readability that the precedence graph of $\Delta$ induces the order $\gamma_1 < \gamma_2 < \dots <  \gamma_n$. 
We compute all the sets $S_k^J(\gamma_i)$, following the operators introduced above, from the biggest $\gamma$ symbol to the lowest 
and from height $h$ to height $0$. That is, we start in  
$S_h^J(\gamma_n)$, $S_h^J(\gamma_{n-1}), \dots, S_h^J(\gamma_1)$, and then $S_{h-1}^J(\gamma_n)$, and so on until 
$S_0^J(\gamma_1)$. 
Finally, to compute the nodes of $J$ that satisfy $\Delta$ we let $\psi^h$ be the formula that replaces each symbol $\gamma$ 
in $\psi$ for $\bigvee_{0 \leq k \leq h} S_k(\gamma)$. A routine induction on height shows the following Lemma, stating that 
this semantics does indeed coincide with the semantics based on rewriting.


\begin{lemma}
Let $J$ be a JSON tree, $n$ a node of $J$ and $\Delta$ a recursive JSL expression. Then $(J,n) \models \Delta$ if and only if 
$(\json(n),n) \models \psi^h$, where $h$ is the height of $\json(n)$. 
\end{lemma}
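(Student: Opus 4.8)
The plan is to show that the sets $S_k^J(\gamma_i)$ computed bottom‑up are exactly the satisfaction sets of the finite unfoldings of the $\gamma_i$'s, level by level, and then to glue these sets together to recover $\unfold_J(\psi)$. Write $T:=\json(n)$ and let $h$ be the height of $T$; everything below is carried out inside $T$, and the $S_k$'s are understood to be computed over $T$. First I would reduce the statement to $T$: by a routine structural induction on (non‑recursive) JSL formulas, the truth of such a formula at $n$ depends only on $T$, because all modalities move strictly downward; hence $(J,n)\models\Delta$, i.e.\ $(J,n)\models\unfold_J(\psi)$, is equivalent to $(T,n)\models\unfold_J(\psi)$. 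Moreover, expanding $\psi$ beyond $h+1$ nested modal operators cannot affect its value at $n$: after $h+1$ nested $\bx/\dm$ starting at $n$ one is evaluating over the empty set of nodes, where $\dm\chi$ is false and $\bx\chi$ is true for every $\chi$, so replacing the deeply nested symbols by $\bot$ (as $\unfold_J$ does) or by their further expansion is immaterial. Thus $(T,n)\models\unfold_J(\psi)$ iff $(T,n)\models\unfold_T(\psi)$, and it suffices to prove
$$(T,n)\models\unfold_T(\psi)\quad\Longleftrightarrow\quad(T,n)\models\psi^h .$$

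The core of the argument is the claim that for every $0\le k\le h$ and every $i$,
$$S_k^T(\gamma_i)=\{\,m\in T \;:\; m \text{ is at level } k \text{ (distance $k$ from the root of }T\text{)},\ (T,m)\models\unfold_T(\gamma_i)\,\}.$$
I would prove this by a primary induction on $k$, running from $h$ down to $0$, and, within a fixed level $k$, a secondary induction along a topological order $\gamma_1<\dots<\gamma_n$ of the precedence graph of $\Delta$, which is acyclic by well‑formedness. In $\phi_i^k$, a symbol $\gamma_j$ occurring inside a modal operator of $\phi_i$ is replaced by a disjunction of the $S_\ell$'s for the levels occupied by the children of a level‑$k$ node (and by $\bot$ when $k=h$, i.e.\ at the deepest level, where modalities are vacuous), and a symbol $\gamma_j$ occurring outside all modalities is replaced by $S_k(\gamma_j)$. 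For the first kind: at a level‑$k$ node $m$ the modal operator moves to the children of $m$, which all lie at the next level down, and there the disjunction coincides with that single $S_{\text{child level}}(\gamma_j)$, which by the \emph{outer} hypothesis is exactly the set of nodes at that level satisfying $\unfold_T(\gamma_j)$ — precisely the effect of a one‑step unfolding of $\gamma_j=\phi_j$ under the modality (and for leaves the substitution is irrelevant since $\dm$ over no children is false and $\bx$ is true). For the second kind, well‑formedness forces $j>i$, so $S_k(\gamma_j)$ has already been computed at this level and, by the \emph{inner} hypothesis, equals the level‑$k$ nodes satisfying $\unfold_T(\gamma_j)$ — again matching the unfolding. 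Substituting back into $\phi_i$ gives $m\in S_k^T(\gamma_i)$ iff $(T,m)\models\unfold_T(\gamma_i)$, which closes both inductive steps.

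From the claim, $\bigcup_{0\le k\le h}S_k^T(\gamma)=\{\,m\in T:(T,m)\models\unfold_T(\gamma)\,\}$ for each symbol $\gamma$, so $\psi^h$ — which is $\psi$ with every $\gamma$ replaced by $\bigvee_{0\le k\le h}S_k(\gamma)$ — satisfies, by a straightforward structural induction on $\psi$, that $(T,m)\models\psi^h$ iff $(T,m)\models\unfold_T(\psi)$ for every node $m$. Instantiating at $m=n$ and combining with the first paragraph yields $(J,n)\models\Delta$ iff $(\json(n),n)\models\psi^h$, as required.

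I expect the main obstacle to be the middle step: lining up the iterative, fixpoint‑style computation of the $S_k$'s with the purely syntactic unfolding $\unfold_T$. The delicate point is that recursion passing through a modal operator is resolved \emph{across} levels — each time a symbol reappears it sits under strictly more modal operators, so its remaining tree‑height budget strictly decreases, which is exactly what well‑formedness (together with termination of $\unfold_T$) guarantees — whereas recursion that does \emph{not} pass a modal operator must be resolved \emph{within} a single level, which is well‑founded only because the precedence graph is acyclic. Interlocking the two inductions cleanly, and checking that the finitely many rewriting steps of $\unfold_T$ match the definitions of the $\phi_i^k$, is where the real care is needed; the subtree‑locality observation and the final structural induction are routine.
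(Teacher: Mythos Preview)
Your proposal is correct and follows exactly the approach the paper intends: the paper's own ``proof'' of this lemma is the single sentence ``A routine induction on height shows the following Lemma,'' and what you have written is precisely that routine induction, spelled out. The double induction---outer on the level $k$ (from the leaves up) and inner along a topological order of the precedence graph---is the natural way to align the bottom-up computation of the $S_k^T(\gamma_i)$ with the syntactic $\unfold_T$, and your subtree-locality and depth-truncation reductions at the start are the right way to pass from $J$ to $T=\json(n)$.

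One small remark: in the paper's definition of $\phi_j^k$ the disjunction is written $\bigvee_{k<\ell<h}S_\ell(\gamma)$, which should include $\ell=h$; your argument implicitly repairs this, since you observe that at a level-$k$ node only the single term $S_{k+1}(\gamma)$ in the disjunction can be nonempty on the children anyway. This does not affect your proof.
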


We can now show the \ptime\ upper bound. In order to compute all nodes that satisfy each symbol $\gamma$ we need 
to evaluate one of the definitions of $\Delta$ at most once per each node of $J$. This is in polynomial time (in fact, quadratic in the size of $J$) 
due to Proposition \ref{prop-jsl-eval}.  Once we have finished this operation we can proceed to evaluate the base formula of $\Delta$, which requires another 
polynomial pass. 

The \ptime\ lower bound is from reduction from the problem of validating whether an assignment satisfies a boolean circuit. 
Let $C$ be a boolean circuit with input gates $IN_1,\dots,IN_n$, and gates $G_1,\dots,G_m$, with $G_m$ being the output gate. 
We construct a JSON document that is an object with $n$ key-value pairs, with keys $\texttt{"IN1"}, \texttt{"IN2"},\dots,\texttt{"INN"}$ each of which 
associated to the value \texttt{"T"} is the input is given a $1$ or \texttt{"F"} otherwise. 

For each gate $G_j$ we create a symbol $\gamma_j$ and a definition $\gamma_j = \phi_j$, with $\phi_j$ the codification of the circuit rooted at 
gate $G_j$. If this circuit uses an input gate $IN_i$ we use instead the expression $\dm_{INi} \patternlogic(T)$ that is true if the 
key-value pair of $J$ whose key is \texttt{"INi"} corresponds to the string "T", or false otherwise. 

Finally, the base expression of $\Delta$ is simply $\gamma_m$, corresponding to the output gate. We have that $\Delta$ accepts $J$ if and only if 
$C$ is true under the assignment of the input gates. 
\end{proof}

\medskip
\noindent
\textbf{Proposition \ref{prop-sat-recursive-jsl}}:\\
The \textsc{Satisfiability} problem for recursive JSL expressions is in \twoexptime, and \exptime-complete 
for expressions without the $\uniq$ predicate. 

\begin{proof}
\newcommand{\booleanstuff}{\textit{BoolSNT}}
\newcommand{\boolquant}{\textit{BoolS}}
\newcommand{\states}{Q_n \cup Q_t}
\newcommand{\quantstates}{\textit{QuantStates}}

The \exptime-lower bound follows from the fact that JSON Schema can simulate tree automata (see \cite{PRSUV16}) and 
that we can capture JSON Schema with JSL. 

\bigskip

For the upper bound we construct an alternating tree automata model that captures recursive JSL and works directly over JSON trees. 

A $J$-automata is a tuple $(Q_n,Q_t,F,\delta)$, where $Q_n$ and $Q_t$ are existential sets of node states and tree states, $F\subseteq \states$ is the 
set of final states and $\delta$ is the transition relation. To explain how $\delta$ is specified we need some notation.  
Let $\quantstates$ be the set of predicates in the sets $\{q \exists_e \mid q \in \states\}$, $\{q \forall_e \mid q \in \states\}$, for 
each regular expression $e$ over $\Sigma$, and $\{q \exists_{i:j}  \mid q \in \states\}$, $\{q \forall_{i:j}  \mid q \in \states\}$ for 
$i \leq j$ or $j = + \infty$.

Let $\boolquant$ be the set of all positive boolean combinations of predicates in $\quantstates$ and let $\booleanstuff$ be the set of all positive boolean combinations of 
predicates in $\states$, $\nodetests$ and $\nodetests^\neg = \{\neg P \mid P \in \nodetests\}$. 

The transition relation $\delta$ is given as rules: 
\begin{itemize}
\item For each $q \in Q_n$, $\delta$ contains a rule of the form $\theta \rightarrow q$, for $\theta \in \booleanstuff$. 
\item For each $q \in Q_t$, $\delta$ contains a rule of the form $\eta \rightarrow q$, for $\eta \in \boolquant$. 
\end{itemize}

We also impose the restriction that there are no loops in state rules: If we form a graph 
whose nodes are $Q_n$ and there is an edge from $p$ to $q$ if there is a rule $\theta \rightarrow q$ where $\theta$ uses 
$p$, then this graph must be acyclic. 

\smallskip
\noindent
\textbf{Semantics}
Given a JSON $J$ and a $J$-automata $A$, a run for $A$ over $J$ augments each node $n$ of $J$ with 
a set $s(n)$ of states and a list $\ell(n) = s_0,\dots,s_k$ of sets of states, such that $s_k = s(n)$, 
the first element of $\ell(n)$ contains only states in $Q_t$, $s_i \subsetneq s_{i+1}$ and 
$s_k \setminus s_0$ contains only states in $Q_n$. 

Let $\rho$ be a run of $A$ over $J$. 
Before referring to accepting runs we need some terminology. 
Let $\theta$ be a formula in $\booleanstuff$. We say that a node $n$ and a set $s$ of states satisfies $\theta$, and write $(n,s) \models \theta$ 
if $(J,n) \models \theta$ when we assume that $(J,n) \models q$ for each state $q \in s$ and $(J,n) \not\models q$ for each state $q \notin s$. 

Let now $\eta$ be a formula in $\boolquant$. We say that a node $n$ satisfies $\eta$, and write $(n,J,\rho) \models \eta$,  
if $(J,n) \models \eta$ when we assume that (1) $(J,n) \models q \forall_e$ if for each triple $(n,w,n')$ in $\Ochild$ with $w \in L(e)$ 
we have that $q \in s(n')$, (2) $(J,n) \models q \forall_{i:j}$ if for each triple $(n,h,n')$ in $\Achild$ with $i \leq h \leq j$ 
we have that $q \in s(n')$, (3)  $(J,n) \models q \exists_e$ if there is a triple $(n,w,n')$ in $\Ochild$ with $w \in L(e)$ 
such that $q \in s(n')$, and (4) $(J,n) \models q \exists_{i:j}$ if there is a triple $(n,h,n')$ in $\Achild$ with $i \leq h \leq j$ 
such that $q \in s(n')$.

We say that $\rho$ is valid and accepting when the following conditions hold: 
\begin{itemize}
\item $s(0)$ contains a final state (assume that $0$ is the root of $J$). 
\item For each node $n$, asume that $\ell(n) = s_0,\dots,s_k$. Then 
\begin{enumerate}
\item A state $q$ is in $s_0$ if and only if there is a rule $\eta \rightarrow q$ such that 
$(n,J,\rho) \models \eta$. 
\item For each $i <0$, a state $q$ is in $s_i$ if and only if  there is a rule $\theta \rightarrow q$ such that 
$(n,J,\rho) \models \theta$.  
\end{enumerate} 
\end{itemize}

An automata $A$ accepts a JSON $J$ if there is a valid and accepting run for $A$ over $J$.

Note that our automata, as any other alternating automata, can easily be complemented in polynomial time. 
To complement a $J$-automata $A$ we just need to swap final with non-final states, and \emph{negate} the formulas. 
For formulas in $\booleanstuff$ this implies swapping conjunctions for negations and negating or un-negating 
predicates in $\nodetests$. For formulas in $\boolquant$ we need to swap existential for universals and conjunctions for disjunctions. 

The following two lemmas show that for our proof it suffices to show that nonemptiness of $J$-automata is in our desired 
upper bounds: 

\begin{lemma}
For every non-recursive JSL formula $\phi$ one can construct in polynomial time a $J$-automata $A^\phi$ such 
that, for every JSON $J$, $J \models \phi$ if and only if $A^\phi$ accepts $J$
\end{lemma}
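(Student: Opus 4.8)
The plan is to construct $A^\phi$ by induction on the structure of $\phi$, maintaining the invariant that the automaton built for a formula $\chi$ carries a designated state $\hat q_\chi$ such that, in the (unique) valid run of the automaton over any JSON $J$, a node $n$ is labelled with $\hat q_\chi$ exactly when $(J,n)\models\chi$. Declaring $\hat q_\phi$ to be the sole final state of $A^\phi$ then makes $A^\phi$ accept $J$ iff the root carries $\hat q_\phi$ iff $J\models\phi$, as required.

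First I would put $\phi$ in negation normal form, pushing negations to the leaves with $\neg\neg\chi\equiv\chi$, De Morgan, the dualities $\neg\bx_e\chi\equiv\dm_e\neg\chi$ and $\neg\dm_e\chi\equiv\bx_e\neg\chi$ (and the same for $\bx_{i:j},\dm_{i:j}$), and $\neg\top\equiv\bot$. This is the standard linear blow-up, and it leaves negation applied only to predicates from $\nodetests$ (together with the constant $\bot$); since $\nodetests^\neg$ is allowed inside the bodies of node-state rules, negation then never has to act on an automaton state. (Alternatively one could recurse on $\psi$ and complement the resulting $J$-automaton — as noted earlier, these alternating automata complement in polynomial time — but the NNF route keeps the bookkeeping lighter.)

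For the induction: $\top$ and $\bot$ each get one node state, with the trivially-true body $\isobject\vee\neg\isobject$ and the unsatisfiable body $\isobject\wedge\isstring$ respectively (so $\hat q_\bot$ is never assigned); a literal $P$ or $\neg P$ over $\nodetests$ gets a node state with body $P$ or $\neg P$; for $\chi_1\wedge\chi_2$ and $\chi_1\vee\chi_2$ we take the disjoint union of the two sub-automata and add a fresh node state whose body is $\hat q_{\chi_1}\wedge\hat q_{\chi_2}$, resp.\ $\hat q_{\chi_1}\vee\hat q_{\chi_2}$; for the modalities we add to the sub-automaton for $\psi$ a fresh tree state with body $\hat q_\psi\exists_e$ for $\dm_e\psi$, $\hat q_\psi\forall_e$ for $\bx_e\psi$, and $\hat q_\psi\exists_{i:j}$, $\hat q_\psi\forall_{i:j}$ for the array modalities. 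In every case the designated state is the newly added one. Two bookkeeping points then need checking: (i) the node-state dependency graph stays acyclic, since each step introduces a fresh head state and refers in rule bodies only to states already produced for subformulas, so edges always point from smaller to larger subformulas — and this acyclicity is exactly what lets the mutually bi-conditional membership conditions of a valid run be resolved by a single stratified pass (tree states first, into the base set $s_0$ of each $\ell(n)$, then node states in topological order), so over a given $J$ the valid run both exists and is unique; and (ii) each node of the NNF contributes $O(1)$ states and one rule whose body copies at most one regular expression or one JSON document from $\phi$, so $A^\phi$ is of polynomial size and is computed in polynomial time.

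Correctness is then an induction on $\chi$ that uses uniqueness of the valid run: the cases for literals and booleans are immediate from the definition of $(n,s)\models\theta$, and for a modality the key point is that $\hat q_\psi\exists_e$ (resp.\ $\hat q_\psi\forall_e$) holds at $n$ precisely when some (resp.\ every) $e$-labelled child $n'$ carries $\hat q_\psi$, which by the inductive hypothesis means some (resp.\ every) such child satisfies $\psi$ — matching the semantics of $\dm_e$ (resp.\ $\bx_e$), including the vacuous case where $n$ has no matching children. The part I expect to require real care is not any single case but making the run semantics precise enough for this to go through: arguing that the ``valid run'' is genuinely well-defined and uniquely determined by $J$ (this is exactly where the acyclicity of the state-rule graph is invoked, to stratify the circular-looking iff-conditions), and keeping straight the split between tree states — which live in the base set $s_0$ of $\ell(n)$ and are computed from the children — and node states, added in the later strata, so that a modal rule always reads each child's label off an already-completed $s(n')$.
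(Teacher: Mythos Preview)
Your proposal is correct and follows essentially the same inductive construction as the paper: one state per subformula, conjunction and disjunction handled by a fresh node state with body $\hat q_{\chi_1}\wedge\hat q_{\chi_2}$ or $\hat q_{\chi_1}\vee\hat q_{\chi_2}$, and each modality handled by a fresh tree state with the corresponding $\exists$/$\forall$ body. The only substantive difference is in the treatment of negation: the paper handles $\neg\psi$ by complementing the automaton $A^\psi$ (exploiting the polynomial-time complementation of these alternating automata), whereas you preprocess into negation normal form so that negation only ever hits atoms in $\nodetests$, which the rule language already accommodates via $\nodetests^\neg$. You even note the complementation alternative yourself. Both routes are sound and polynomial; your NNF route avoids having to argue that complementation preserves the ``designated state tracks the subformula'' invariant, at the cost of the (linear) NNF rewrite. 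Your write-up is also more explicit than the paper's about why the valid run exists and is unique (acyclicity of the node-state graph gives a stratified computation), which is exactly the point needed to make the correctness induction go through.
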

\begin{proof}
We proceed by induction. 

\smallskip

If $\phi$ is a predicate $P$ in $\nodetests$, then the automata has a single final state $q_f$ and the state transition 
$P \rightarrow q_f$. If $\phi = \top$ then the automata has a single final state and the transition 
$true \rightarrow q_f$, where $\true$ is a special predicate that we assume always true. 

\smallskip 

If $\phi = \psi_1 \wedge \psi_2$, then the automata $A^\phi$ is defined as the union of $A^{\psi_1}$ and $A^{\psi_2}$ with an additional 
final state $q_f$ and a new rule $q_f^1 \wedge q_f^2 \rightarrow q_f$, where $q_f^1$ and $q_f^2$ are the final states of $A^{\psi_1}$ and $A^{\psi_2}$, 
respectively. The case when $\phi = \psi_1 \vee \psi_2$ is similar. 
When $\phi = \neg \psi$, the automaton for $A^{\phi}$ is the complement of $A^{\psi}$

\smallskip

When $\phi$ is $\bx_e \psi$, we construct $A^{\phi}$  by taking $A^{\psi}$ and adding a new state $q_f$ as the unique final state, adding 
the transition $q_f' \forall_e \rightarrow q_f$, where $q_f'$ was the final state of $A^\psi$. 
\end{proof}

\begin{lemma}
For every well-formed recursive JSL formula $\Delta$ one can construct in polynomial time a $J$-automata $A^\Delta$ such 
that, for every JSON $J$, $J \models \Delta$ if and only if $A^\Delta$ accepts $J$
\end{lemma}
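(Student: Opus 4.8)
The plan is to assemble $A^\Delta$ out of one automaton fragment per equation of $\Delta$, reusing the construction of the previous lemma for each fragment and letting the (acyclic) precedence graph of $\Delta$ dictate how the fragments call one another. Write $\Delta$ in the form (\ref{eqn-rec-jsl}), with definitions $\gamma_1=\phi_1,\dots,\gamma_m=\phi_m$ and base expression $\psi$, and fix a linear order of $\gamma_1,\dots,\gamma_m$ compatible with the precedence graph; such an order exists because well-formedness says this graph is acyclic.

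First I would, for each $i$ in this order, run the inductive construction of the previous lemma on $\phi_i$ --- which is an ordinary non-recursive JSL formula except that it may mention the symbols $\gamma_1,\dots,\gamma_m$ --- with one change: whenever the induction reaches a subformula that is exactly a symbol $\gamma_j$, instead of spawning a fresh automaton it returns the already-built state $q_{\gamma_j}$. All the remaining cases (boolean connectives, node tests from $\nodetests$, and the modalities $\bx_e,\dm_e,\bx_{i:j},\dm_{i:j}$) are treated verbatim as in the previous lemma, introducing fresh intermediate states. The point to note is that an occurrence of $\gamma_j$ \emph{under} a modality is plugged into a child-quantified predicate $q_{\gamma_j}\forall_e$, $q_{\gamma_j}\exists_e$, $q_{\gamma_j}\forall_{i:j}$ or $q_{\gamma_j}\exists_{i:j}$, hence consulted at the children of the current node, whereas an occurrence not under a modality appears directly as a state atom in a node-state rule and is consulted at the current node; in the latter case $\gamma_j$ precedes $\gamma_i$ in the precedence graph, so $q_{\gamma_j}$ is already available. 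Let $q_{\gamma_i}$ be the final state of the fragment so obtained. Finally, run the same construction on $\psi$, routing symbol occurrences to the corresponding $q_{\gamma_j}$, take the final state of this last fragment as the unique final state of $A^\Delta$, and let $A^\Delta$ be the union of all fragments.

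It then remains to check two things. First, legality: the state-rule graph on node states must be acyclic. Its edges among the $q_\gamma$'s exist only when $\gamma_j$ occurs in $\phi_i$ outside a modality, so the subgraph on the $q_\gamma$'s is contained in the precedence graph and hence acyclic; the intermediate states of each fragment are acyclic by the previous lemma, and the only cross-fragment edges run through the $q_\gamma$'s along the precedence graph, so no cycle arises, while modal occurrences contribute no edges to this graph since they are realised through children-quantifiers. Second, correctness, which I would state as an invariant: in the valid run of $A^\Delta$ over a JSON $J$, for every node $n$ and every $i$ one has $q_{\gamma_i}\in s(n)$ iff $(\json(n),n)\models\unfold_{\json(n)}(\gamma_i)$, where $\unfold$ is extended in the obvious way to unfold the symbol $\gamma_i$ itself up to the height of $\json(n)$. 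I would prove this by a double induction: an outer induction on the height of $\json(n)$ (equivalently, analysing the run bottom-up), which discharges the symbol occurrences under a modality (evaluated at children, whose subtrees are strictly shorter), and an inner induction on the position of $\gamma_i$ in the fixed order, which discharges the occurrences outside a modality (evaluated at the same node at a strictly earlier symbol). Unwinding this, the per-node stratified list $\ell(n)=s_0,\dots,s_k$ together with the bottom-up pass over $J$ is exactly what realises the bounded unfolding $\unfold_J$: $s_0$ records the truth values of the tree-states (the ``once per node, look at the children'' queries), and the later strata add the node-states in the order forced by the acyclic precedence graph. Granting the invariant, $J\models\Delta$ iff $J\models\unfold_J(\psi)$ iff the final state of $A^\Delta$ lies in $s(r)$ for the root $r$ of $J$, i.e. iff $A^\Delta$ accepts $J$. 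The construction is polynomial, since each of the $m+1$ fragments is polynomial in the size of its formula by the previous lemma.

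The main obstacle I expect is the correctness invariant: one has to pick it so that well-formedness of $\Delta$ simultaneously guarantees legality of $A^\Delta$ and termination of $\unfold$, and then verify that the rather intricate notion of accepting run --- with its per-node, stratified list of state-sets and its bottom-up discipline --- actually computes the height-bounded unfolding semantics rather than some unintended fixpoint. The modal part should be routine because it is inherited wholesale from the non-recursive lemma; the delicate point is exactly the bookkeeping around the recursive symbols.
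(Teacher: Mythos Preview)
Your approach is essentially the paper's: one automaton fragment per definition plus the base, with symbol occurrences wired to designated states $q_{\gamma_j}$, the union taken, and the base fragment's final state made the unique accepting state. The paper phrases the wiring slightly differently---it first treats every $\gamma_j$ as if it were a fresh node test, builds all the fragments $A^{\phi_1},\dots,A^{\phi_m},A^\psi$ independently via the previous lemma, and only afterwards substitutes each occurrence of $\gamma_j$ in the resulting rules by the corresponding state $q_j$---which sidesteps the small glitch in your order-based presentation: a modal occurrence of $\gamma_j$ in $\phi_i$ need not have $j$ earlier than $i$ in your linear order (self-reference is even allowed), so your phrase ``the already-built state $q_{\gamma_j}$'' is not literally justified there; the fix is simply to reserve all the names $q_{\gamma_1},\dots,q_{\gamma_m}$ before building any fragment. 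On correctness the paper just appeals to the bottom-up evaluation strategy of Proposition~\ref{prop-eval-recursive-jsl}; your double induction on height and on precedence order is exactly that argument written out, and your explicit verification that the node-state dependency graph embeds into the precedence graph (hence is acyclic) is a point the paper leaves implicit.
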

\begin{proof}
Assume $\Delta$ is of the form 
\begin{eqnarray}
\nonumber
\gamma_1 &= &\phi_1 \\
\nonumber
\gamma_2 &= &\phi_2 \\
\nonumber
&\vdots &\\
\nonumber
\gamma_m &=& \phi_m \\
\psi & &  \label{eqn-rec-jsl-theo}
\end{eqnarray}
Construct $A^{\phi_1},\dots,A^{\phi_m},A^\psi$ as shown in the previous lemma, except that we assume that $\gamma_1,\dots,\gamma_m$ are part of 
$\nodetests$ (assume the states of these automata are disjoinct). Let $q_1,\dots,q_m$ be fresh states, and let $A^\Delta$ be the automaton constructed from the union of each of these automata, except that we replace each occurrence of $\gamma_i$ in a rule of $\hat A$ for the state 
$q_i$, and that the only final state is the final state coming from $A^\psi$. 
From the evaluation strategy in Proposition \ref{prop-eval-recursive-jsl} it is now not difficult to see that this automata does indeed captures $\Delta$.  
\end{proof}

We can now show how to solve non-emptiness in \twoexptime. Let $A$ be a $J$-automaton. 

\noindent
\textbf{Memory}. We maintain, for each subset of $\states$, the information of whether (a): this subset is reachable or not, and 
(b) how many different trees can be used to reach this state. 

\noindent
\textbf{Initialisation with subsets of $Q_n$}. Assume that $Q_n= \{q_1,\dots,q_m\}$, and let 
$\theta_1,\dots,\theta_n$ be the sets associated to the rules of each of these states. Note that these rules must not mention  any state at all (otherwise 
we can not reach them as leafs from the emptyset). To test whether a subset $S$ of $Q_n$ is reachable, 
let $\theta*$ be the formula $\bigwedge_{q_i \in S} \theta_i \wedge \bigwedge_{q_j \notin S} \not \theta_j$. 
Checking whether $\theta^*$ is satisfiable is clearly in \np, and further, given a number $p$, testing 
whether $\theta^*$ admits more than $m$ different trees can be solved in \ptime with an \np\ oracle. Thus, if $p$ is the largest number associated 
to any predicate $\minch(\cdot)$ in any rule in $A$, we initialise $S$, informing the number of different trees satisfying $\theta$, counting up to the maximum of $p$ different trees (even if $p$ is in binary we can do this in \pspace). 

\noindent
\textbf{General reachability testing}. To check whether other bigger subsets of $Q_n$ are reachable, we just do one pass over all of our subsets, 
and for each of them we try to see if a bigger subset can be reached. 
We do have to modify our function $\theta^*$ accordingly, because we need 
to check again for cardinalities. 

We can also try to incorporate a subset of $Q_t$: for each state $q$ in a subset $s$ of $Q_t$, we need to inspect all subsets of $\states$ that 
we already know can be reached, and understand whether we can construct a tree with $1,\dots,p$ different children. 
For example, assume that we are just checking a single 
state set $\{q\}$. Then let $q_1,\dots,q_n$ be all the sets appearing in the formula $\eta$ of $q$. To understand whether we can form a 
tree of $p$ different children we just choose $p$ subsets of $\states$ amongst the one we already know how to reach, multiplying the cardinality of each of them (because we can combine them). Note that we cannot select two subset of $\states$ such that one is a subset or the other, because this may 
fool our cardinality. 

\noindent
\textbf{Termination}. The process finishes when we reach a final state, or we cannot find any other subset to add. Note that we only require a number of 
steps corresponding to the amount of subsets of $\states$, because in each step either we add a new set or we terminate.  

\noindent
\textbf{Running Time}. We maintain an exponential set of states, and when we are deciding when to reach a new set of states we may possibly have to investigate up 
to $p$ new combinations. Since $p$ is in binary, this means an exponential number of combinations, which produces the second blowup. 
 
When $\uniq$ is not present we can save us the trouble of counting how many different trees satisfy the given conditions. We still have to maintain, when we reach 
a new set of states, that we can reach this set of states with a tree within a certain interval of children. However, each interval can be stored in polynomial space, and 
we need to remember at most a polynomial number of intervals: each new node test or modality introduces just one more interval (when positive) or splits in the worst 
case one interval into two (when negative). So the algorithm runs in \exptime\ in this case. 

\end{proof}

\end{document}